\newcommand{\cmark}{\ding{51}}%
\newcommand{\xmark}{\ding{55}}%
\begin{document}

\title{Real-Time Community Detection in Large Social Networks on a Laptop}

\author{\name Ben Chamberlain 
\email b.chamberlain14@imperial.ac.uk \\
       \addr Department of Computing\\
       Imperial College London\\
       London SW7 2AZ, UK
       \AND
       \name Josh Levy-Kramer 
       \email josh@starcount.com \\
       \addr Starcount Insights \\
       2 Riding House Street \\
       London W1W 7FA
       \AND
       \name Clive Humby
       \email clive@starcount.com \\
       \addr Starcount Insights \\
       2 Riding House Street \\
       London W1W 7FA
	   \AND
	   \name 
       Marc Peter Deisenroth 
       \email m.deisenroth@imperial.ac.uk \\
       \addr Department of Computing\\
       Imperial College London\\
       London SW7 2AZ, UK
       }

%\editor{xxxxx}

\maketitle
\begin{abstract}%   <- trailing '%' for backward compatibility of .sty file
%\todo[inline]{is the method for generating ground truth from the Twitter network worth mentioning in the abstract?}
%\todo[inline]{The positioning of the figures and the references still need to be sorted}

% what is the problem
For a broad range of research, governmental and commercial applications it is important to understand the allegiances, communities and structure of key players in society. One promising direction towards extracting this information is to exploit the rich relational data in digital social networks (the social graph). 
% what others do
As social media data sets are very large, most approaches make use of distributed computing systems for this purpose. Distributing graph processing requires solving many difficult engineering problems, which has lead some researchers to look at single-machine solutions that are faster and easier to maintain.
% What we do
In this article, we present a single-machine real-time system for large-scale graph processing that allows analysts to interactively explore graph structures.
%key idea
The key idea is that the aggregate actions of large numbers of users can be compressed into a data structure that encapsulates user similarities while being robust to noise and queryable in real-time.
% how do we solve it? mention results and set-up later. give key idea here
We achieve single-machine real-time performance by compressing the neighbourhood of each vertex using minhash signatures and facilitate rapid queries through Locality Sensitive Hashing. These techniques reduce query times from hours using industrial desktop machines operating on the full graph to milliseconds on standard laptops. 
% why it is important
Our  method allows exploration of strongly associated regions (i.e. communities) of large graphs in real-time on a laptop. 
It has been deployed in software that is actively used by social network analysts and offers another channel for media owners to monetise their data, helping them to continue to provide free services that are valued by billions of people globally.

\end{abstract}

%\begin{keywords}
%\end{keywords}

% \todo[inline]{Redraw Fig. 9 and 11. Make sure circles and boxes and arrows make sense  and consistently mean the same.}
% \todo[inline]{Address all orange boxes. Then delete them.}
% \todo[inline]{Fix the references (consistency)}

\section{Introduction}
\label{sec:intro}

% \todo[inline]{Can we show an illustration of social network snippet?}
% \todo[inline]{Can we illustrate the high-level approach? (Feed in social network, return community). Show the main steps in a diagram}

% community detection introduction
Algorithms to discover groups of associated entities from relational (networked) data sets are often called \emph{community detection} methods. They come in two forms: global methods, which partition the entire graph and local methods that look for vertices that are related to an input vertex and only work on a small part of the graph. We are concerned with community detection on large graphs that runs on a single commodity computer. To achieve this we combine the two approaches, using local community detection to identify an interesting region of a graph and then applying global community detection to help understand that region.

%The techniques that we employ are well known. 
% % the problem with marketing on social networks
% \todo[inline]{Should I kill this paragraph?}
% In recent years, Digital Social Networks (DSN) have experienced unprecedented growth. Twitter, Facebook, Instagram and YouTube are household names on six continents and are a daily part of billions of people's lives. Facebook, the world's most pervasive DSN, commands a valuation exceeding 300 billion dollars, while smaller companies are often valued in the tens of billions. However, valuations and the ability to provide free services are being challenged by changing attitude towards advertising. Over-saturation and new uses of digital media have depleted the effectiveness of long-established models like web banners and sponsored content. 

% The solution to monetising social networks
% Our work demonstrates an alternative monetisation channel for DSNs; 
Our focus is on community detection using social media data. Social media data provides a record of global human interactions at a scale that is hitherto unprecedented. Discovering communities in the social graph has a large number of governmental and industrial applications, which include: security, where analysts explore a network looking for groups of potential adversaries; social sciences, where queries can establish the important relationships between individuals of interest; e-commerce, where queries reveal related products or users; marketing, where companies seek to optimise advertising channels or celebrity endorsement portfolios. These applications do not disrupt user experience in the way that sponsored links or feed advertising do offering an alternative means for social media providers to continue to offer free services. 

% A sketch of how it might work
As an illustration of a commercial application of community detection using Twitter data, take a company that wants to trade in a new geographic region. To do this they need to understand the region's competitors, customers and marketing channels. Using our system they input the Twitter handles for their existing products, key people, brands and endorsers, and in real-time receive the accounts closely related to their company in that market. The output is automatically structured into groups (communities) such as media titles, sports people and other related companies. Analysts examine the results and explore different regions by changing the input accounts. We show a high level illustration for the drinks brand Diageo in Figure~\ref{fig:overview}.

\begin{figure}[tb]
\includegraphics[width=\textwidth]{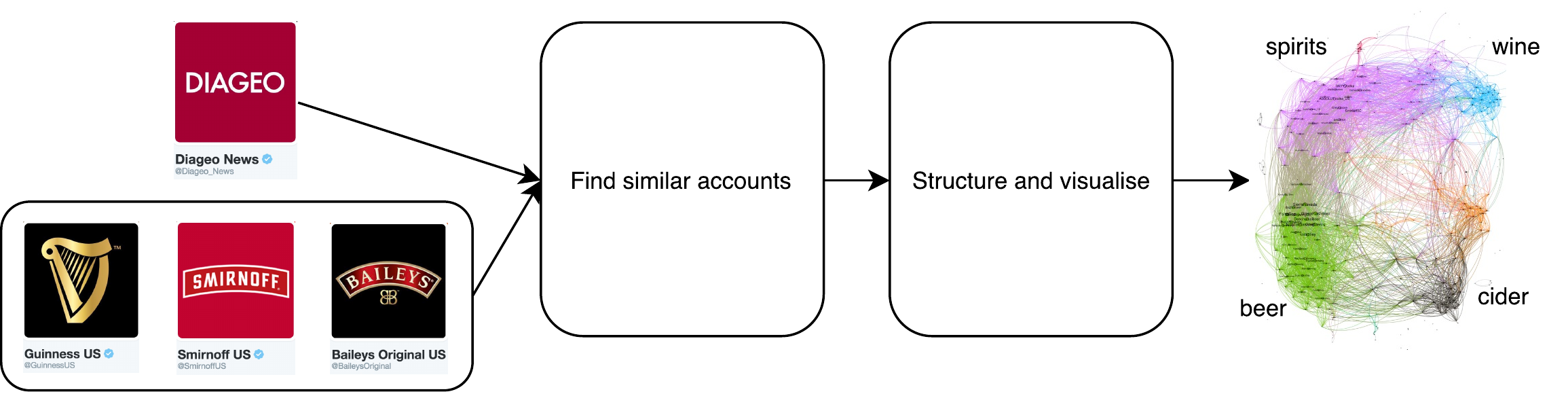}
\caption{An example of how our system can be used: Diageo want to explore the market (competitors, customers, associations etc.) around their brand. They feed in information about themselves (``seeds''). In this example the seeds are the company itself (Diageo) and some of their major brands (Smirnoff, Baileys and Guinness). Our systems finds accounts that are similar/related to the seeds and then structures the similar accounts into communities.}
\label{fig:overview}
\end{figure}

% Introduction to the social graph
Throughout this paper we refer to graphs. In this context a graph is a collection of vertices and edges connecting them. A graph is usually written $G(V,E)$ and a graph with weighted edges as $G(V,E,W)$. A network is a richer structure than a graph, comprising a graph and a collection of metadata describing the vertices and/or edges of the graph. A community is a collection of vertices $C \subset V$ that share many more edges than would be expected from a random subset of vertices. In the context of the Twitter graph, a vertex $V$ is a Twitter account and an (undirected) edge $E$ between $V_i, V_j$ exists if $V_i$ Follows $V_j$ or $V_j$ Follows $V_i$. A community might be the set of Twitter accounts belonging to machine learning researchers. In addition to the Twitter graph, the Twitter \emph{network} also includes metadata associated with the accounts (e.g., name, description) and edges (eg. creation time, direction). 

% The neighbourhood graph
Our algorithm focusses exclusively on the properties of the graph.
We are particularly interested in the neighbourhood graph. The neighbourhood graph of a vertex consists of the set of all vertices that are directly connected to it, irrespective of the edge direction. DSN neighbourhood graphs can be very large; In Twitter, the largest have almost 100 million members (as of June 2016). We propose that robust associations between social network accounts can be reached by considering the similarity of their neighbourhood graphs. 
% homophily 
This proposition relies on the existence of homophily in social networks. The homophily principle states that people with similar attributes are more likely to form relationships \citep{McPherson2001}. Accordingly, social media accounts with similar neighbourhood graphs are likely to have similar attributes.

% Our constraints
We seek to build a system that: (1) produces high quality communities from very noisy data. (2) Is robust to failure and does not require engineering support. (3) is parsimonious with the time of its users. The first constraint leads us to use the neighbourhood graph as the unit of comparison between vertices. The neighbourhood graph is generated by the actions of large numbers of independent users in contrast to features like text content or group memberships, which are usually controlled by a single user.  The second requirements leads us to search for a single-machine solution and the third prescribes a real-time system (or as close as possible). 
% High performance needed
High performance is vital as analysts wish to interact with the data, combining the results of previous experiments to inform new ones. The difference between real-time and `quite quick' is important. Real-time response is primary amongst the reasons that interactive program languages like Python and R have replaced compiled languages like C++ as the tools of choice for data analysts. We aim to offer similar improvements in usability.

% current methods
Currently, no tool exists that provides real-time analysis of large graphs on a single commodity machine. Existing methods to analyse local community structure in large graphs either rely on distributed computing facilities or incur excessive run-times making them impractical for exploratory and interactive work \citep{Clauset2005, Bahmani2011}. 
% our tool
In this article, we describe our real-time analysis tool for detecting communities in large graphs using only a laptop. We focus on a 700 million user Twitter network. However, our work is more generally applicable as it does not rely upon Twitter-specific data, only the graph structure, and we provide some results from Facebook to demonstrate this.

% how it works
There are two core problems to solve: (1) The graph must be fit into the memory of a single (commodity) machine. (2) Many neighbourhood graphs containing up to 100 million vertices must be compared in milliseconds. The first step to solving these problems is to compress the neighbourhood graphs into fixed-length minhash signatures. Minhash signatures vastly reduce the size of the graph while at the same time encoding an efficient estimation of the Jaccard similarity between any two neighbourhood graphs.\footnote{The Jaccard similarity is a widely used symmetric measure of the likeness of two sets.}
Choosing appropriate length minhash signatures squeezes the graph into memory and addresses problem~(1). To solve problem~(2) and achieve real-time querying we use the elements of the minhash signatures as the basis to build a Locality Sensitive Hashing (LSH) data structure. LSH facilitates querying of similar accounts in constant time.  This combination of minhashing and LSH allows analysts to enter an account or a set of accounts and in milliseconds receive the set of most related accounts. From this set we use the minhash signatures to rapidly construct a weighted graph and apply the WALKTRAP community detection algorithm before visualising the results~\citep{Pons2005}.

% What is innovative here
Our system applies well-studied techniques in an innovative way: (1) To the best of our knowledge, minhashing has not been applied to the neighbourhood graph before; Minhashing is normally only used for very similar sets. (2) We show that minhashing is effective for community detection when applied to a broad range of neighbourhood graph similarities. (3) We develop an agglomerative clustering algorithm and prove an original update procedure for minhash signatures in this setting. The novel combination of these techniques allows our system to perform real-time community detection on graphs that exceed 100 million vertices.

% our contribution
The contributions of this article are:
\begin{enumerate} 
\item We establish that robust associations between social media users can be determined by means of the Jaccard similarity of their neighbourhood graphs. 
\item We show that the approximations implicit in minhashing and LSH minimally degrade performance and allow querying of very large graphs in real time.  
\item System design and evaluation: We have designed and evaluated an end-to-end Python system for extracting data from social media providers, compressing the data into a form where it can be efficiently queried in real time.
\item We demonstrate how queries can be applied to a range of problems in graph analysis, e.g., understanding the structure of industries, allegiances within political parties and the public image of a brand.
\end{enumerate}

There are seven sections in this paper. Section 2 describes how to mine the Twitter graph and can be omitted by readers uninterested in replicating our work. Section 3 describes the related work, which is necessarily broad as our system brings together community detection, graph processing and data structures. Section 4 contains our detailed methodology with the exception of how we prepare and analyse the ground truth data, which is left until Section 5. In Section 6 we describe the results of three experiments, which validate our methodology and conclusions and future work follow in Section 7.

\section{Data and Preliminaries}
\label{sec:prelim}
% data sets intro
In this article, we focus on Twitter data because Twitter is the most widely used Digital Social Network (DSN) for academic research. The Twitter Follower graph consists of roughly one billion vertices (Twitter accounts) and 30 billion edges (Follows).

To show that our method generalises to other social networks, we also present some results using a Facebook Pages engagement graph containing 450 million vertices (FB accounts) and 700 million edges (Page likes / comments) (see Section~\ref{sec:evaluation}). 

% access tokens
Most DSNs have public Application Programming Interfaces (APIs) so that third-party developers can build applications using their data. Delivering data at massive scale incurs significant cost and to manage these, DSNs limit the rate that data can be downloaded. Rate limiting varies between networks. Usually, when a DSN account holder logs into a third party application using their social login, they grant the application owner one access token. Each access token allows the application owner to download a fixed amount of data in a given time window. This procedure gives more popular apps access to more data. Our work makes use of access tokens generated by several client facing apps\footnote{Starcount Playlist, Starcount Vibe and Chatsnacks}. 

% getting data from Twitter
To collect Twitter data we use the REST API to crawl the network identifying every account with more than 10,000 Followers\footnote{The number of Followers is contained in the Twitter account metadata, i.e.,  it is available without collecting and counting all edges.} and gather their complete Follower lists. Our data set contains 675,000 such accounts with a total of $1.5\times 10^{10}$ Followers, of which $7\times 10^8$ were unique. 
% Why only 675k users
We use accounts with greater than 10,000 Followers (though 700 million Twitter accounts are used to build the signatures) because accounts of this size tend to have public profiles (Wikipedia pages or Google hits) making the results interpretable. 
% Facebook data
To generate data from Facebook we matched the Twitter accounts with greater than 10,000 Followers to Facebook Page accounts \footnote{Facebook pages are the public equivalent of the private profiles. Many influential users have a Facebook Page.}  using a combination of automatic account name matching and manual verification. Facebook Page likes are not available retrospectively, but can be collected through a real-time stream. We collected the stream over a period of two years, starting in late 2013.
% on crawling social networks
Downloading large quantities of social media data is an involved subject and we include details of how we did this in Appedix~\ref{app:crawling} for reproducibility.

\section{Related Work}
\label{sec:rel}

% intro
Existing approaches to large scale, efficient, community detection have three flavours: More efficient community detection algorithms, innovative ways to perform processing on large graphs and data structures for graph compression and search. Table~\ref{tab:related_work} shows related approaches to this problem and which constraints they satisfy.
%\todo[inline]{A Ghahramani diagram to visualize Table 1 would be great (in addition, not instead).}
\begin{table}[htbp]
  \centering
  \caption{Comparison of related work. SCM stands for runs on a Single Commodity Machine}
    \begin{tabular}{l|ccc}
    \textbf{Method} & \textbf{Real-time} & \textbf{Large graphs} & \textbf{SCM} \\
    \hline
    Modularity optimisation \citep{Newman2004b}& \xmark & \xmark & \cmark    \\
    WALKTRAP \citep{Pons2005}& \xmark & \xmark & \cmark   \\
    INFOMAP \citep{Rosvall2008} & \xmark & \xmark & \cmark   \\
 Louvain method \citep{Blondel2008a} & \xmark & \cmark & \cmark  \\
 BigClam \citep{Yang2013}& \xmark & \cmark & \cmark   \\
 Graphlab \citep{Low2014Graphlab:Learning} & \xmark & \cmark & \xmark   \\
 Pregel \citep{Malewicz2010} & \xmark & \cmark & \xmark  \\
 Surfer \citep{Chen2010}  & \xmark & \cmark & \xmark  \\ 
 Graphci \citep{Kyrola2012} & \xmark & \cmark & \cmark  \\
 Twitter WTF \citep{Gupta2013}& \cmark & \cmark & \xmark   \\
 LEMON \citep{Li2015a} & \xmark & \cmark & \cmark  \\
 Our Method & \cmark & \cmark & \cmark 
    \end{tabular}%
  \label{tab:related_work}%
%  \figspace
\end{table}%

\subsection{Community Detection Algorithms}
% intro to community detection methods
% \todo[inline]{I could still add labelprop, spectral methods, stochastic blockmodel - the list is very long though}
Community detection methods have been developed in areas as diverse as neuronal firing \citep{Bullmore2009}, electron spin alignment \citep{Reichardt2006} and social models \citep{Yang2013}.  \citet{fortunato2010community} and \citet{Newman2003} both provide excellent and detailed overviews of the vast community detection literature. Approaches can be broadly categorised into local and global methods.

% Some current approaches to community detection; global methods - Walktrap, modularity, infomap
Global methods assign every vertex to a community, usually by partitioning the vertices. Many highly innovative schemes have been developed to do this. Modularity optimisation \citep{Newman2004b} is one of the best known. Modularity is a metric used to evaluate the quality of a graph partition. Communities are determined by selecting the partition that maximises the modularity. An alternative to modularity was developed by \cite{Pons2005} who innovatively applied random walks on the graph to define communities as regions in which walkers become trapped (WALKTRAP). \cite{Rosvall2008} combined random walks with efficient coding theory to produce INFOMAP, a technique that provides a new perspective on community detection: Communities are defined as the structural sub-units that facilitate the most efficient encoding of information flows through a network. All three methods are well optimised for their motivating networks, but were not created with graphs at the scale of modern Digital Social Networks (DSNs) and can not easily scale to very large data sets. 

% Some current methods that scale
The availability of data from the Web, DSNs and services like Wikipedia has focussed research attention on methods that scale. An early success was the Louvain method that allowed modularity optimisation to be applied to large graphs (they report 100 million vertices and 1 billion edges). However, the method was not intended to be real-time and the 152 minute runtime is too slow to achieve real-time performance, even allowing for 8 years of hardware advances \cite{Blondel2008a}. Another noteworthy technique applied to very large graphs is the Bigclam method, which in addition to operating at scale, is able to detect overlapping communities \citep{Yang2013}. However, in common with the Louvain method,  Bigclam is not a real-time algorithm that could facilitate interactive exploration of social networks. 

% Local methods 
In contrast to global community detection methods, local methods do not assign every vertex to a community. Instead they find vertices that are in the same community as a set of input vertices (seeds). For this reason they are normally faster than global methods. 
Local community detection methods were originally developed as crawling strategies to cope with the rapidly expanding
% \todo[inline]{The efficient PPR paper should be in here}
web-graph \citep{Flake2000}. Following the huge impact of the PageRank algorithm \citep{page1998a}, many local random walk algorithms have been developed. \cite{Kloumann2014} conducted a comprehensive assessment of local community detection algorithms on large graphs. In their study Personal PageRank (PPR)~\citep{Haveliwala2002} was the clear
winner. PPR is able to measure the similarity to a set of vertices instead of the global importance\slash influence of each vertex by applying a slight modification to PageRank. PageRank can be regarded as a sequence of two step processes that are iterated until convergence: A random walk on the graph followed by (with small probability) a random teleport to any vertex. PPR modifies PageRank in two ways: Only a small number of steps are run (often 4), and any random walker selected to teleport must return to one of the seed vertices. Recent extensions have shown that seeding PPR with the neighbourhood graph can improve performance \cite{Gleich2012} and that PPR can be used to initiate \emph{local} spectral methods with good results \cite{Li2015a}.

% evaluating local community methods.
Random walk methods are usually evaluated by power iteration; a series of matrix multiplications requiring the full adjacency matrix to be read into memory. The adjacency matrix of large graphs will not fit in memory and so distributed computing resources are used (e.g., Hadoop). While distributed systems are continually improving, they are not always available to analysts, require skilled operators and typically have an overhead of several minutes per query. %\todo[inline, color=yellow]{and require a network connection, presumably}

% verifying communities
A major challenge when applying both local and global community detection algorithms to real world networks is  performance verification. Testing algorithms on a held out labelled test set is complicated by the lack of any agreed definition of a community. Much early work makes use of small hand-labelled communities and treats the original researchers' decisions as gold standards \citep{sampson1969crisis, Zachary1977, Lusseau2003}. Irrespective of the validity of this process, a single (or small number) of manual labellers can not produce ground-truth for large DSNs. \cite{Yang2012} proposed a solution to the verification problem in community detection. They observe that in practice, community detection algorithms detect communities based on the structure of interconnections, but results are verified by discovering common attributes or functions of vertices within a community. \cite{Yang2012} identified 230 real-world networks in which they define ground-truth communities based on vertex attributes. The specific attributes that they use are varied and some examples include publication venues for academic co-authorship networks, chat group membership within social networks and product categories in co-purchasing networks.

\subsection{Graph Processing Systems}

A complimentary approach to efficient community detection on large graphs is to develop more efficient and robust systems. This is an area of active research within the systems community. General-purpose tools for distributed computation on large scale graphs include Graphlab, Pregel and Surfer \citep{Chen2010,Malewicz2010,Low2014Graphlab:Learning}. Purpose-built distributed graph processing systems offer major advances over the widely used MapReduce framework \citep{Pace2012}. This is particularly true for iterative computations, which are common in graph processing and include random walk algorithms. However, distributed graph processing still presents major design, usability and latency challenges. Typically the run times of algorithms are dominated by communication between machines over the network. Much of the complexity comes from partitioning the graph to minimise network traffic. The general solution to the graph partitioning problem is NP-hard and remains unsolved. These concerns have lead us and other researchers to buck the overarching trend for increased parallelisation on ever larger computing clusters and search for single-machine graph processing solutions. 
One such solution is Graphci, a single-machine system that offers a powerful and efficient alternative to processing on large graphs~\cite{Kyrola2012}. The key idea is to store the graph on disk and optimise I/O routines for graph analysis operations. Graphci achieves dramatic speed-ups compared to conventional systems, but the repeated disk I/O makes real-time operation impossible. Twitter also use a single-machine recommendation system that serves ``Who To Follow (WTF)'' recommendations across their entire user base~\citep{Gupta2013}. WTF provides real-time recommendations using random walk methods similar to PPR. They achieve this by loading the entire Twitter graph into memory. Following their design specification of 5 bytes per edge $5 \times 30 \times 10^9 = 150$ GB of RAM would be required to load the current graph, which is an order of magnitude more than available on our target platforms.

\subsection{Graph Compression and Data Structures}
% graph compression
The alternative to using large servers, clusters or disk storage for processing large graphs is to compress the whole graph to fit into the  memory of a single machine. Graph compression techniques were originally motivated by the desire for single machine processing on the Web Graph. Approaches focus on ways to store the differences between graph structures instead of the raw graph. \cite{Adler2001} searched for web pages with similar neighbourhood graphs and encoded only the differences between edge lists. The seminal work by \cite{Boldi2004} ordered Web pages lexicographically endowing them with a measure of locality. Similar compression techniques were adapted to social networks by \cite{Chierichetti2009}. They replaced the lexical ordering with an ordering based on a single minhash value of the out-edges, but found social networks to be less compressible than the Web (14 versus 3 bits per edge). While the aforementioned techniques achieve remarkable compression levels, the cost is slower access to the data \citep{Gupta2013}.

% minhashing 
Minhashing is a technique for representing large sets with fixed length signatures that encode an estimate of the similarity between the original sets. When the sets are sub-graphs minhashing can be used for lossy graph compression. The pioneering work on minhashing was by \cite{broder1997} whose implementation dealt with binary vectors. This was extended to counts (integer vectors) by \cite{Charikar2002} and later to continuous variables \citep{Philbin2008}. Efficient algorithms for
generating the hashes are discussed by \cite{Manasse2010}. 
% applications of minhashing
Minhashing has been applied to clustering the Web by \cite{Haveliwala2000}, who considered each web page to be a bag of words and built hashes from the count vectors. 

% advances on minashing
Two important innovations that improve upon minhashing are b-Bit minhashing \citep{Li2010} and Odd Sketches \citep{Mitzenmacher2014}. When designing a minhashing scheme there is a trade off between the size of the signatures and the variance of the similarity estimator. \cite{Li2010} show that it is possible to improve on the size-variance trade off by using longer signatures, but only keeping the lowest b-bits of each element (instead of all 32 or 64). Their work delivers large improvements for very similar sets (more than half of the total elements are shared) and for sets that are large relative to the number of elements in the sample space.  \cite{Mitzenmacher2014} improved upon b-bit minhashing by showing that for approximately identical sets (Jaccard similarities $\approx 1$) there was a more optimal estimation scheme. 

% locality sensitive hashing
Locality Sensitive Hashing (LSH) is a technique introduced by \citet{Indyk1998} for rapidly finding approximate near neighbours in high dimensional space. In the original paper they define a parameter $\rho$ that governs the quality of LSH algorithms. A lower value of $\rho$ leads to a better algorithm. There has been a great deal of work studying the limits on $\rho$. Of particular interest, \citet{Motwani2005} used a Fourier analytic argument to provide a tighter lower bound on $\rho$, which was later bettered by \cite{ODonnell2009} who exploited properties of the noise stability of boolean functions. The latest LSH research uses the structure of the data, through data dependent hash functions \cite{Andoni2014} to get even tighter bounds. As the hash functions are data dependent, unlike earlier work, only static data structures can be addressed. 

\section{Real-Time Community Detection}
\label{sec:method}

\begin{table}
\centering
\begin{center}
\begin{tabular}{ c c c }
\toprule
 System & Typical runtime (s) & Space requirement (GB) \\ 
\midrule
 Naive edge list & 8,000 & 240 \\  
 Minhash signatures & 1 & 4 \\
 LSH with minhash & 0.25 & 5 \\
\bottomrule
\end{tabular}
\end{center}
\caption{Typical runtimes and space requirements for systems performing local community detection on the Twitter Follower network of 700 million vertices and 20 billion edges and producing 100 vertex output communities}
\label{table:system_specs}
\end{table}

In this section, we detail our approach to real-time community detection in large social networks. 
%high-level overview of the method right here. What are the key steps? - just after the equation}
Our method consists of two main stages: In stage one, we take a set of seed accounts and expand this set to a larger group containing the most related accounts to the seeds. This stage is depicted by the box labelled "Find similar accounts" in Figure~\ref{fig:overview}. Stage one uses a very fast nearest neighbour search. In stage two, we embed the results of stage one into a weighted graph where each edge is weighted by the Jaccard similarity of the two accounts it connects. We apply a global community detection algorithm to the weighted graph and visualise the results. Stage two is depicted by the box labelled "Structure and visualise" in Figure~\ref{fig:overview}.

% Some notation
In the remainder of the paper we use the following notation: The $i^{th}$ user account (or interchangeably, vertex of the network) is denoted by $A_i$ and $N(A_i)$ gives the set of all accounts directly connected to  $A_i$ (the neighbours of $A_i$). The set of accounts that are input by a user into the system are called seeds and denoted by $S = \{A_1, A_2, ... ,A_m\}$ while $C = \{A_1, A_2, ... ,A_n \}$  (community) is used for the set of accounts that are returned by stage one of the process.

\subsection{Stage 1: Seed Expansion}

The first stage of the process takes a set of seed accounts as input, orders all other accounts by similarity to the seeds and returns an expanded set of accounts similar to the seed account(s). For this purpose, we require three ingredients:
\begin{enumerate}
\item A similarity metric between accounts
\item An efficient system for finding similar accounts
\item A stopping criterion to determine the number of accounts to return
\end{enumerate}
In the following, we detail these three ingredients of our system, which will allow for real-time community detection in large social networks on a standard laptop.

\subsubsection{Similarity Metric}

The property of each account that we choose to compare is the neighbourhood graph. The neighbourhood graph is an attractive feature as it is not controlled by an individual, but by the (approximately) independent actions of large numbers of individuals.  The edge generation process in Digital Social Networks (DSNs) is very noisy producing graphs with many extraneous and missing edges. As an illustrative example, the pop stars Eminem and Rihanna have collaborated on four records and a stadium tour.\footnote{``Love the
  Way You Lie'' (2010), ``The Monster'' (2013), ``Numb'' (2012), and
  ``Love the Way You Lie (Part II)'' (2010), the Monster Tour (2014)}
Despite this clear association, Eminem is not one of Rihanna's 40
million Twitter followers. However, Rihanna and Eminem have a Jaccard similarity of 18\%, making Rihanna Eminem's 6$^{th}$ strongest connection. Using the neighbourhood graph as the unit of comparison between accounts mitigates against noise associated with the unpredictable actions of individuals. 
The metric that we use to compare two neighbourhood graphs is the Jaccard similarity. The Jaccard similarity has two attractive properties for this task. Firstly it is a normalised measure providing comparable results for sets that differ in size by orders of magnitude. Secondly minhashing can be used to provide an unbiased estimator of the Jaccard similarity that is both time and space efficient.
% Jaccards work for associating twitter accounts
The Jaccard similarity is given by
\begin{align} \label{eq:jaccard}
J(A_i,A_j) = \frac{|N(A_i)\cap N(A_j)|}{|N(A_i)\cup N(A_j)|}\,,
\end{align}
where $N(A_i)$ is the set of neighbours of $i^{th}$ account. 
% The Jaccard similarity is related to the Jaccard distance by
% $
% D(.) = 1 - J(.),
% $
% which satisfies the three requirements of a distance metric\footnote{Triangle inequality, symmetry and $D(x,y) \geq 0$ with equality if and only if $xminhashing=y$}. 

\subsubsection{Efficient Account Search}

To efficiently search for accounts that are similar to a set of seeds we represent every account as a minhash signature and use a Locality Sensitive Hashing (LSH) data structure based on the minhash signatures for approximate nearest neighbour search.

\subsubsection*{Rapid Jaccard Estimation via Minhash Signatures}
\label{sec:use minhash}

Computing the Jaccard similarities in~\eqref{eq:jaccard} is very expensive as each set can have up to $10^{8}$ members and calculating intersections is super-linear in the total number of members of the two sets being intersected. Multiple large intersection calculations can not be processed in real-time. There are two alternatives: either the Jaccard similarities can be pre-computed for all possible pairs of vertices, or they can be estimated. Using pre-computed values for $n=675,000$ would require caching $\frac{1}{2}n(n-1) \approx 2.5 \times 10^{11}$ floating point values, which is approximately 1TB and so not possible using commodity hardware. Therefore an estimation procedure is required.

% minhashing
The minhashing compression technique of~\cite{broder2000min} generates unbiased estimates of the Jaccard similarity in $O(K)$, where $K$ is the number of hash functions in the signature. Each hash function approximates a two step process: An independent permutation of the indices associated with each member of a set followed by taking the minimum value of the permuted indices. \cite{broder2000min} showed that the unbiased estimate $\hat{J}(A_i,A_j)$ of the Jaccard similarity $J(A_i,A_j)$ is attained by exploiting that 
$$J(A_i,A_j)= p(h_k(A_i)=h_k(A_j)) \quad \forall k=1,\dotsc,K$$
where $h_i$ are hash functions . This means the probability that any minhash function is equal for both sets is given by the Jaccard coefficient. We create a signature vector $H$, which is made of $K$ independent hashes $h_i$ and calculate the Monte-Carlo Jaccard estimate $\hat J$ as
\begin{align} \label{eq:min_est}
  &\hat{J}(A_i,A_j) =  I/K,
\end{align}
where we define
\begin{align}
  &I = \sum_{k=1}^{K}\delta(h_{k}(A_i),h_{k}(A_j))\,,\\
  &\delta(h_k(A_i),h_k(A_j)) =
  \begin{cases}
   1 & \text{if } h_k(A_i) = h_k(A_j) \\
   0       & \text{if } h_k(A_i) \neq h_k(A_j)
  \end{cases}.
\end{align}
%
% error bounds
As each $h_k$ is independent, $I \sim Bin(J(A_i,A_j),K)$. The estimator is fully efficient, i.e., the variance is given by the Cram\'er-Rao lower bound 
\begin{align} 
\label{eq:minhash_variance}
\text{var}(\hat{J}) = \frac{J(1-J)}{K},
\end{align}
where we have dropped the Jaccard arguments for brevity. Equation~\ref{eq:minhash_variance} shows that Jaccard coefficients can be approximated to arbitrary precision using minhash signatures with an estimation error that scales as $O(1 /\sqrt{K})$.

% memory and space improvements of minhashing
The memory requirement of minhash signatures is $Kn$ integers, and so can be configured to fit into memory and for $K=1000$ and $n=675,000$ is only $\approx 4GB$. 
In comparison to calculating Jaccard similarities of the largest 675,000 Twitter accounts with $\approx 4 \times 10^{10}$ neighbours minhashing reduces expected processing times by a factor of $10,000$ and storage space by a factor of $1000$.\footnote{Our method allows to add  new accounts quickly by simply calculating one additional minhash signature without needing to add the pairwise similarity to all other accounts.}

\subsubsection*{Efficient Generation of Minhash Signatures}
\label{sec:gen minhash}
\begin{algorithm*}
	\caption{Minhash signature generation}
    \begin{algorithmic}[1]
        \Require $M \gets \textup{number of Accounts}$
        \Require $K \gets \textup{size of signature}$
        \Require $N(Account) \gets \textup{All Neighbours}$		
        	\State $T \in \mathbb{N}^{M \times K} \gets \infty$          	
            \Comment{Initialise signature matrix to $\infty$}
            \State index $\gets 1$
        	\ForAll{Accounts}
            	\State $P \gets \textup{permute}(\textup{index}) \in \mathbb{N}^{1 \times K}$
                \Comment{Permute the Account index $K$ times}
                \ForAll{N(Account)}
                    \State $T[i] \gets \min(T[i], P)$ \Comment{Compute the element-wise minimum of the signature}
                \EndFor
            \State index = index + 1
            \EndFor        
        \State \Return $T$ \Comment{Return matrix of signatures}   
    \end{algorithmic}
\label{alg:minhash1}
\end{algorithm*}

% problem: hash values is also expensive
Minhash signatures allow for rapid estimation of the Jaccard similarities. However, care must be taken when implementing minhash generation. Calculation of the signatures is expensive: Algorithm~\ref{alg:minhash1} requires $O(NEK)$ computations, where $N$ is the number of neighbours, $E$ is the average out-degree of each neighbour and $K$ is the length of the signature. For our Twitter data these values are $N = 7\times 10^{8}$, $E = 10$, $K = 1,000$. A naive implementation can run for several days. We have an efficient implementation that takes one hour allowing signatures to be regenerated overnight without affecting operational use (See Appendix~\ref{app:minhash_generation}).

% Section - nearest neighbour search with LSH
\subsubsection*{Locality Sensitive Hashing (LSH)}
\label{sec:LSH}

Calculating Jaccard similarities based on minhash signatures instead of full adjacency lists provides tremendous benefits in both space and time complexity. However, finding near neighbours of the input seeds is an onerous task. For a set of 100 seeds and our Twitter data set, nearly 70 million minhash signature comparisons would need to be performed, which dominates the run time. Locality Sensitive Hashing (LSH) is an efficient system for finding approximate near neighbours \cite{Indyk1998}.

% LSH description
LSH works by partitioning the data space. Any two points that fall inside the same partition are regarded as similar. Multiple independent partitions are considered, which are invoked by a set of hash functions. 
LSH has an elegant formulation when combined with minhash signatures for near neighbour queries in Jaccard space. The minhash signatures are divided into bands containing fixed numbers of hash values and LSH exploits that similar minhash signatures are likely to have identical bands. An LSH table can then be constructed that points from each account to all accounts that have at least one identical minhash band. We apply LSH to every input seed independently to find all candidates that are `near' to at least one seed.
% our implementation
In our implementation, we use 500 bands, each containing two hashes. As most accounts share no neighbours, the LSH step dramatically reduces the number of candidate accounts and the algorithm runtime by a factor of roughly 100. LSH is essential for the real-time capability of our system.

% Stage 3 clustering
\subsubsection*{Sorting Similarities}
\label{sec:s3}
LSH produces a set of candidate accounts that are related to at least one of the input seeds. In general, we do not want every candidate returned by LSH. Therefore, we select the subset of candidates that are most associated with the whole seed set. We experimented with two sequential ranking schemes: Minhash Similarity (MS) and Agglomerative Clustering (AC). The rankings can best be understood through the Jaccard distance $D = 1 - J$, which is used to define the centre $\mathbf{X} \in [0,1]^M$ of any set of vertices. At each step AC and MS augment the results set $C$ with $A^*$ the closest account to $\mathbf{X}: A^* \notin C$. However, MS uses a constant value of $\mathbf{X}$ based on the input seeds while AC updates $\mathbf{X}$ after each step. Formally, the centre of the input vertices used for MS is defined by
\begin{align}
X_j(A_j, S) = \frac{1}{n}\sum_{i=1}^n D(A_j, S_i)\,,\qquad j  = 0,1,\dotsc,M.
\label{eq:comm centre}
\end{align} 

\begin{algorithm}
	\caption{Agglomerative Clustering Algorithm (AC)}
    \begin{algorithmic}[1]
        \Require Community $C$ of initial seeds
        %\Require all Influencers $U_S$
        \State Define candidate members $\bar A = LSH(C)$
		\Repeat{}
        \State Compute community centre $\mathbf{X}(\bar A, S)$, see~\eqref{eq:comm centre}
        \State Select next Account: 
        	$A^* = \arg\min_{A_i \notin \bar C} \mathbf{X}(A_i,S)$
        \State Grow community: $C_{t+1}\gets C_t \cup A^*,\quad \bar A\gets \bar A\backslash A^* $
		\Until{Stopping criteria is met}        
    \end{algorithmic}
\label{alg:community}
\end{algorithm}

At each iteration of Algorithm~\ref{alg:community} $C$ and $\mathbf{X}$ are updated by first setting $C = S$ and then adding the closest account given by
\begin{align}
A^* = \arg\min\nolimits_{i} X(A_i,C) \quad \forall A_i \notin C
\label{eq:next S opt}
\end{align}
leading to
$$C_{t+1} = C_t \cup A^*.$$
The new centre $X_{n+1}$ is most efficiently calculated using the recursive online update equation
\begin{align}
X_{t+1}(A,C_{t+1}) = \frac{nX(A,C_t) + D(A^*, C_t)}{n+1},
\end{align}
where $n$ is the size of $C_t$.

\subsubsection{Stopping Criterion}
\label{sec:stopping}
% stopping criteria
Both AC and MS are sequential processes and will return every candidate account unless a stopping criteria is applied.
Many stopping criteria have been used to terminate seed expansion
processes. The simplest method is to terminate after a fixed number of
inclusions. Alternative methods use local maxima in
modularity~\citep{Lancichinetti2009} and
conductance~\citep{Leskovec2010}. 

An application of our work is to help define an optimal set of celebrities to endorse a brand. In this context we want to answer questions like: ``What is the smallest set of athletes that have influence on over half of the users of Twitter?''. We refer to the number of unique neighbours of a set of accounts as the \emph{coverage} of that set. An exact solution to this problem is combinatorial and requires calculating large numbers of unions over very large sets. However it can be efficiently approximated using minhash signatures. We exploit two properties of minhash signatures to do this: The unbiased Jaccard estimate through Equation~\ref{eq:min_est} and the minhash signature of the union of two sets is the elementwise minimum of their respective minhash signatures.  Minhash signatures allow coverage to be used as a stopping criteria to rank LSH candidates without losing real-time performance.

\paragraph{Efficient Coverage Computation}
% calculating the total number of followers of the agglomerated community
The coverage $y$ is given by
\begin{align}
y = \left|\bigcup_{i=1}^n N(A_i)\right|\,,
\end{align}
the number of unique neighbours of the output vertices. Every time a new account $A$ is added we need to calculate $|N(C) \cup N(A)|$ to update the coverage. This is a large union operation and expensive to perform on each addition. Lemma~\ref{lemma} allows us to rephrase this expensive computation equivalently by using the Jaccard coefficient (available cheaply via the minhash signatures), which we subsequently use for a real-time iterative algorithm.

% Lemma
\begin{lemma}\label{lemma1} 
For a community $C = \bigcup_i A_i$ and a new account $A \notin C$, the number of Neighbours of the union $A \cup C$ is given as
\begin{align}
\left |N( A \cup C ) \right | =  \frac{| N(A)  | +  | N(C) | }{1+J\left (A,C  \right )}.
\label{eq:SuC}
\end{align}
\end{lemma}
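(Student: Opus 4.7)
The plan is to reduce the claim to inclusion–exclusion combined with the definition of the Jaccard coefficient. First I would observe that the neighbour operator distributes over union on the vertex side, i.e.
\begin{equation*}
N(A\cup C) = N(A)\cup N(C),
\end{equation*}
so the left-hand side of~\eqref{eq:SuC} equals $|N(A)\cup N(C)|$.

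Next I would apply standard inclusion–exclusion on $N(A)$ and $N(C)$, which gives
\begin{equation*}
|N(A)\cup N(C)| \;=\; |N(A)| + |N(C)| - |N(A)\cap N(C)|.
\end{equation*}
At this point the goal is to eliminate the intersection term in favour of the Jaccard coefficient $J(A,C) = |N(A)\cap N(C)|/|N(A)\cup N(C)|$. Writing $U := |N(A)\cup N(C)|$ and $I := |N(A)\cap N(C)|$, the definition gives $I = J(A,C)\,U$. Substituting into the inclusion–exclusion identity yields $|N(A)| + |N(C)| = U(1 + J(A,C))$, and rearranging produces exactly~\eqref{eq:SuC}.

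There is really no difficult step here; the only thing to be slightly careful about is the first observation that $N$ distributes over union, which requires $A$ and $C$ to be viewed as sets of vertices (so that $N(A\cup C)$ denotes neighbours of the combined seed set, consistently with how the lemma is used to update coverage in the agglomerative algorithm). Once that is in place, the rest is a one-line algebraic rearrangement, and no assumption beyond $|N(A)\cup N(C)|>0$ (guaranteed whenever $A$ or $C$ has any neighbours) is needed to divide through by $1+J(A,C)$.
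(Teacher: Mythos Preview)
Your proof is correct and follows essentially the same approach as the paper: inclusion--exclusion on the neighbour sets combined with the definition of the Jaccard coefficient, then an algebraic rearrangement. If anything your version is slightly more careful, since you make explicit the step $N(A\cup C)=N(A)\cup N(C)$ that the paper's proof silently absorbs by dropping the $N(\cdot)$ notation.
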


\begin{proof}
Following~\eqref{eq:jaccard}, the Jaccard coefficient of a new Account $A\notin C$ and the community $C$ is
 \begin{equation} \label{eq:1}
J(A,C) = \frac{\left | A\cap C \right |}{\left | A\cup C \right |}.
\end{equation}
By considering the Venn diagram and utilising the inclusion-exclusion principle, we obtain
\begin{equation} \label{eq:2}
\left | A\cup C \right | = \left | A \right | + \left | C \right | -\left | A\cap C \right |.
\end{equation}
Substituting this expression in the denominator of the Jaccard coefficient in~\eqref{eq:1} yields
\begin{align*}
\frac{|A| + |C|}{1+J(A,C)} &\stackrel{\begin{subarray}{c}\eqref{eq:1}\\ \eqref{eq:2}\end{subarray}}{=} \frac{|A| + |C|}{1+\frac{|A\cap C|}{\left | A \right | + \left | C \right | -\left | A\cap C \right |}}=\frac{|A| + |C|}{\frac{|A|+|C|}{\left | A \right | + \left | C \right | -\left | A\cap C \right |}}\\
&= |A| + |C|-|A\cap C|\stackrel{\eqref{eq:2}}{=} |A\cup C|\,,
\end{align*}
which proves~\eqref{eq:SuC} and the Lemma.
\end{proof}

% Lemma
\begin{lemma}\label{lemma2} 
A community $C = \bigcup_i A_i$ can be represented by a minhash signature $H(C) = \{h_1(C), h_2(C), ... h_k(C)\}$ where
\begin{align}
\label{eq:mhu}
h_i(C) = min_j(h_i(A_j))
\end{align}
\end{lemma}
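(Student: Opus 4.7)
The plan is to reduce the claim to the elementary fact that the minimum of a function over a union of sets equals the minimum of the per-set minima. First I would unpack the notation. Although $C$ is written as a union of accounts $A_i$, the minhash signature of $C$ is really the signature of its neighbourhood set $N(C) = \bigcup_j N(A_j)$, since throughout Section~4 the hash $h_i(A)$ of an account $A$ has been defined as $h_i$ applied to $N(A)$. I would state this identification explicitly at the start of the proof to avoid confusion.

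Next, I would recall the construction of a minhash function from Broder. Fix the $i$-th random permutation $\pi_i$ of the global index set. For any set $S$ of neighbours, $h_i(S) = \min_{x \in S} \pi_i(x)$. Applying this to $S = N(C)$ gives
\begin{align}
h_i(C) \;=\; h_i\!\left(\bigcup_j N(A_j)\right) \;=\; \min_{x \in \bigcup_j N(A_j)} \pi_i(x).
\end{align}

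The core step is then the associativity of the minimum over a union: for any indexed family of sets,
\begin{align}
\min_{x \in \bigcup_j N(A_j)} \pi_i(x) \;=\; \min_j \, \min_{x \in N(A_j)} \pi_i(x) \;=\; \min_j h_i(A_j),
\end{align}
which is exactly the claimed formula~\eqref{eq:mhu}. This holds separately for each coordinate $i = 1,\dotsc,k$, so the full signature $H(C)$ is the element-wise minimum of the signatures $H(A_j)$.

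Honestly, there is no hard step here: once the definitions are laid out, the proof is a one-line consequence of $\min$ distributing over unions. The only thing worth flagging for the reader is the consistency of using the \emph{same} family of permutations $\pi_1,\dotsc,\pi_k$ when computing $H(A_j)$ for every $j$ and when computing $H(C)$; without this shared randomness the element-wise minimum would not correspond to any single minhash signature of the union. I would end the proof by noting that Lemma~\ref{lemma2} is precisely what makes the recursive update in Algorithm~\ref{alg:community} cheap: each time a new account is merged into the community, its signature can be absorbed by $k$ pairwise $\min$ operations, rather than by recomputing hashes over the enlarged neighbourhood.
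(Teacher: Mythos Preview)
Your proof is correct and follows essentially the same approach as the paper: both decompose each minhash coordinate as a minimum of a per-element map (your $\pi_i$, the paper's $g_i$) and then invoke the fact that the minimum over a union equals the minimum of the per-set minima. If anything, your version is more carefully stated---you make the neighbourhood-set interpretation and the shared-randomness requirement explicit, whereas the paper's proof leaves these implicit and even overloads the index $k$.
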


\begin{proof}
A minhash signature is composed of $k$ independent minhash functions. Each of which is a compound function made up of a general mapping and a minimum operation.
$$h_i(A) = \min(g_i(A)) \quad \forall i,A$$
where $g_i: \mathbb{N}^m \rightarrow \mathbb{N}^m$
and so 
\begin{align}
h_i(\cup_j A_j) &= \min(g_i(\cup_j A_j)) \\
h_i(C) &= \min(g_i(A_1), g_i(A_2), ...,g_i(A_k))
\end{align}
which proves~\eqref{eq:mhu} and the Lemma.
\end{proof}

% size of the new community is now easy
We use Lemma~\ref{lemma1} to update the unique neighbour count. Once the next account to add to the community $A^*$ is determined according to~\eqref{eq:next S opt}
\begin{align}
|N(C_{t+1})| =  \frac{|N(C_{t})| + |N(A^*)|}{1+J(C_{t},A^*)}.
\label{eq: lemma_res}
\end{align}
The right hand side of \eqref{eq: lemma_res} contains three terms: $|N(C_{t})|$ is what we started with, $|N(A^*)|$ is the neighbour count of $A^*$, which is easily obtained from Twitter or Facebook metadata and $J(C_{t},A^*)$ is a Jaccard calculation between a community and an account. The minhash signature of a community is obtained via \eqref{eq:mhu} and so we are able to calculate the coverage with negligible additional computational overhead.

\subsection{Stage 2: Community Detection and Visualisation}
% summarize stage 1
Stage one expanded the seed accounts to find the related region. This was done by first finding a large group of candidates using LSH that were related to any one of the seeds and then filtering down to the accounts most associated to the whole seed set. 

% stage 2
In Stage two, the vertices returned by Stage one are used to construct a weighted Jaccard similarity graph. Figure~\ref{fig:network_construction} depicts the process of transforming from the original unweighted graph to the weighted graph. The red vertices are those returned by stage one. Edge weights are calculated for all pairwise associations from the minhash signatures through Equation~\ref{eq:min_est}. This process effectively embeds the original graph in a metric Jaccard space~\citep{broder1997}. Community detection is run on the weighted graph. 

% 3 stages -> figure
\begin{figure*}
        \centering
\begin{subfigure}[b]{0.32\textwidth}
                \includegraphics[width=\textwidth]{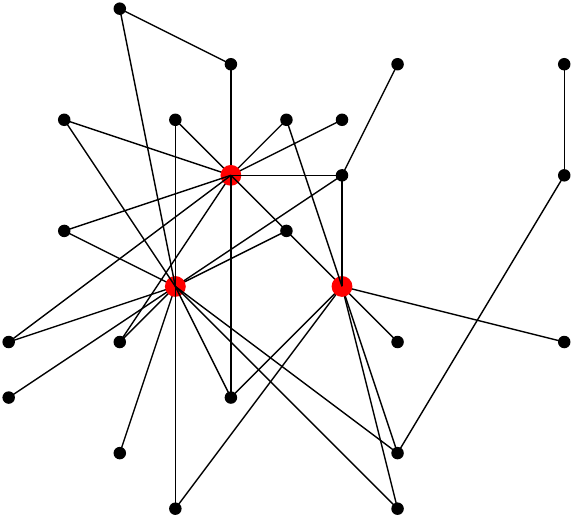}
                \caption{A social network containing three interesting (red) vertices.}
                \label{fig:full_graph}
        \end{subfigure}
        \hfill
        \begin{subfigure}[b]{0.32\textwidth}
                \includegraphics[width=\textwidth,clip]{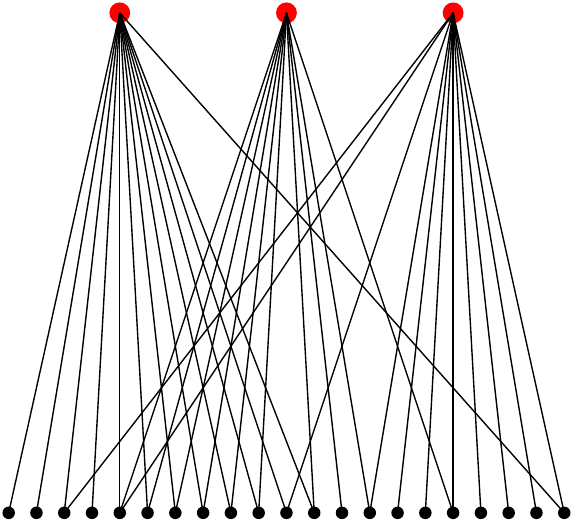}
                \caption{The overlapping neighbourhood graphs of the interesting vertices in (\subref{fig:full_graph})}
                \label{fig:bipartite_graph}
        \end{subfigure}     
      \hfill
        \begin{subfigure}[b]{0.32\textwidth}
                \includegraphics[width=\textwidth,clip]{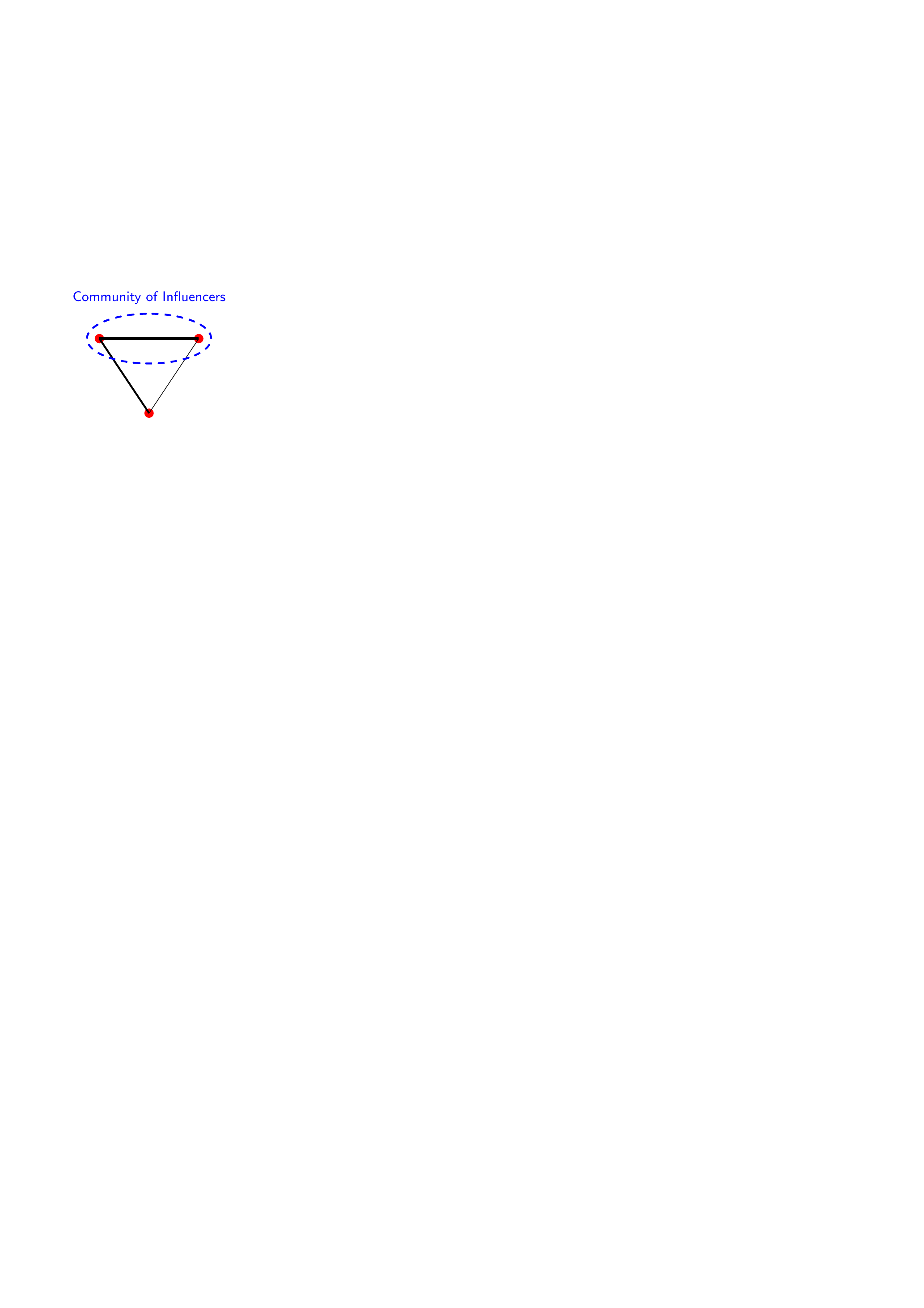}
                \caption{The inferred weighted network. Vertices connected by high weights are more likely to be in the same community}
                \label{fig:influencer_graph}
        \end{subfigure}  
%         \begin{subfigure}[b]{0.24\textwidth}
%                 \includegraphics[width=\textwidth,clip]{method/c4}
%                 \caption{Inferred community (full Twitter graph).}
%                 \label{fig:influencer_graph_community_detection}
%         \end{subfigure} 
\caption{Visualising the generation of a weighted subgraph. Interesting vertices are depicted as larger red nodes, and the neighbours as smaller, more numerous black nodes. (\subref{fig:full_graph}) shows a complete social network.  (\subref{fig:bipartite_graph}) depicts the overlapping neighbourhood graphs of the three interesting vertices in (\subref{fig:full_graph}). (\subref{fig:influencer_graph}) The network is summarised by an inferred network using the Jaccard similarity measure of the set of neighbouring vertices as edge weights.}
\label{fig:network_construction}
\end{figure*}

%\subsubsection{Visualisation}
% visualization
The final element of the process is to visualise the community structure and association strengths in the region of the input seeds. We experimented with several global community detection algorithms. These included INFOMAP, Label Propagation, various spectral methods and Modularity Maximisation \citep{Rosvall2008, Raghavan2007, Newman2006, Newman2004c}. The Jaccard similarity graph is weighted and almost fully connected and most community detection algorithms are designed for binary sparse graphs. As a result, all methods with the exception of label propagation and WALKTRAP were too slow for our use case. Label Propagation had a tendency to select a single giant cluster, thus adding no useful information. Therefore, we chose WALKTRAP for community visualisation.

\section{Ground-Truth Communities}

To provide a quantitative assessment of our method we require ground-truth labelled communities. No ground-truth exists for the data sets of interest and so in this section we provide a methodology for generating ground-truth. This methodology itself must be verified and we provide an extensive evaluation of the quality of the derived ground-truth based on the axiomatic definitions described in \cite{Yang2012}.
% The problem of verification - structural and functional groups
Most community detection algorithms (including ours) are based on the structure of the graph \citep{Fortunato2007}. Axiomatically, good community structures are:
\begin{itemize}
\item Compact
\item Densely interconnected
\item Well separated from the rest of the network
\item Internally homogeneous
\end{itemize}
However, while communities are detected using these properties, verification typically requires associating each vertex with some functional attributes, e.g., fans of Arsenal football club or Python programmers and showing that the discovered communities group attributes together \citep{Yang2012}. The practice of relating community membership with personal attributes is justified by the homophily principle of social networks \citep{McPherson2001}, which states that people with similar attributes are more likely to be connected.
% What we do with Wikipedia
We reverse the process of verification by generating ground-truth from personal attributes. To generate attributes we match Twitter accounts with Wikipedia pages and associate Wikipedia tags with each Twitter account. Wikipedia tags give  hierarchical functions like `football:sportsperson:sport' and `pop:musician:music'. It is not possible to match every Twitter account and our matching process discovered 127 tags that occur more than 100 times in the data.  Of these, many were clearly too vague to be useful such as `news:media' or `Product Brand:Brands'. We selected 16 tags that had relatively high frequencies in the data set and evaluated 7 metrics for each that are related to the four axioms. These result are shown in Table~\ref{tab:communities}. Seperability and conductance measure how well separated a community is from the rest of the graph. Density and size measure the compactness and density. Cohesiveness, clustering and conductance ratio measure how internally homogeneous a community is. The mathematical formulation of these metrics and details of how they were calculated is provided in Appendix~\ref{app:ground_truth}. 
% Description of the table
Table~\ref{tab:communities} is sorted by density and the bold rows are visualised in Figures~\ref{fig:taekwondo_network},\ref{fig:alcohol_network},\ref{fig:hotel_network} and \ref{fig:basketball_network}. The density is the most important factor distinguishing good from bad communities, varying by two orders of magnitude across the data. This is followed by how well separated (separability) the community is from the rest of the network, which is inversely correlated with conductance by design (See Equations~\ref{eq:con}~and~\ref{eq:sep}). High clustering is also a useful indicator of community goodness for the best communities, but is less useful for separating communities that are made up of many sub-units like team sports from very bad communities like Food and Drink. Cohesiveness is generally not useful as most communities contain at least one well separated sub-unit.

% The old table caption
% Mixed Martial Arts, Adult Actor and Cycling form low modality, densely interconnected communities that are well separated from the rest of the network and have high clustering. The next 4 communities (Baseball, Basketball, American Football and Athletics) are highly multi-modal, well separated from the rest of the network and contain densely intra-connected sub-communities. Football is the largest community and is extremely multi-modal. The remaining communities are weakly connected and contain disconnected sub-communities. 

\begin{table*}[t]
  \centering
  \caption{Properties of ground-truth communities sorted by edge density. CR stands for Conductance Ratio. High values of clustering, density and separability and low values of CR, conductance and Cohesiveness indicate good communities.}
  \scalebox{0.75}{
    \begin{tabular}{l|ccccccc}
    \small
    \textbf{Community} & \textbf{Size} & \textbf{Clustering} & \textbf{Cohesiveness} & \textbf{Conductance} & \textbf{CR} & \textbf{Density} & \textbf{Separability} \\
    \hline
    \textbf{Mixed Martial Arts} & \textbf{751}   & \textbf{6.49E-02} & \textbf{4.29E-01} & \textbf{5.10E-01} & \textbf{1.19} & \textbf{3.06E-02} & \textbf{4.80E-01} \\
    \textbf{Adult Actors} & 352   & 7.20E-02 & 1.29E-01 & 7.70E-01 & 5.98 & 2.94E-02 & 1.50E-01 \\
    \textbf{Cycling} & 371   & 6.43E-02 & 4.51E-01 & 7.04E-01 & 1.56 & 2.50E-02 & 2.11E-01 \\
    \textbf{Baseball} & 616   & 3.64E-02 & 1.49E-01 & 7.87E-01 & 5.29 & 1.63E-02 & 1.35E-01 \\
    \textbf{Basketball} & \textbf{786}   & \textbf{3.84E-02} & \textbf{3.30E-01} & \textbf{7.71E-01} & \textbf{2.34} & \textbf{1.60E-02} & \textbf{1.48E-01} \\
      \textbf{American Football} & 1295  & 2.24E-02 & 3.82E-01 & 7.40E-01 & 1.94 & 9.33E-03 & 1.75E-01 \\
            \textbf{Athletics} & 530  & 3.48E-02 & 4.13E-01 & 8.47E-01 & 2.05 & 8.21E-03 & 9.01E-02 \\
        \textbf{Hotel Brand} & \textbf{836}   & \textbf{2.20E-02} & \textbf{4.53E-01} & \textbf{8.37E-01} & \textbf{1.85} & \textbf{6.16E-03} & \textbf{9.71E-02} \\
    \textbf{Airline} & 363   & 2.30E-02 & 4.41E-01 & 9.46E-01 & 2.15 & 4.35E-03 & 2.84E-02 \\
    \textbf{Cosmetics} & 332   & 3.34E-02 & 4.87E-01 & 9.56E-01 & 1.96 & 3.55E-03 & 2.32E-02 \\
    \textbf{Football} & 4111  & 3.69E-02 & 3.95E-01 & 7.07E-01 & 1.79 & 2.93E-03 & 2.07E-01 \\
    \textbf{Alcohol} & \textbf{388}   & \textbf{1.72E-02} & \textbf{2.34E-01} & \textbf{9.52E-01} & \textbf{4.06} & \textbf{2.66E-03} & \textbf{2.53E-02} \\
    \textbf{Travel} & 2038  & 1.27E-02 & 4.25E-01 & 8.29E-01 & 1.95 & 2.50E-03 & 1.03E-01 \\
    \textbf{Model} & 2096  & 2.62E-02 & 4.04E-01 & 9.01E-01 & 2.23& 1.90E-03 & 5.50E-02 \\
    \textbf{Electronics} & 689 & 1.40E-02 & 4.38E-01 & 9.75E-01 & 2.23 & 8.78E-04 & 1.30E-03 \\
    \textbf{Food and Drink} & 2974  & 1.76E-02 & 4.57E-01 & 9.06E-01 & 1.98 & 7.69E-04 & 5.18E-02 \\
    \end{tabular}%
    }
  \label{tab:communities}%
\end{table*}%

% dendrograms
\begin{figure}
\begin{tabular}{cccc}
\begin{subfigure}[t]{0.23\textwidth}
\includegraphics[width=\hsize,clip]{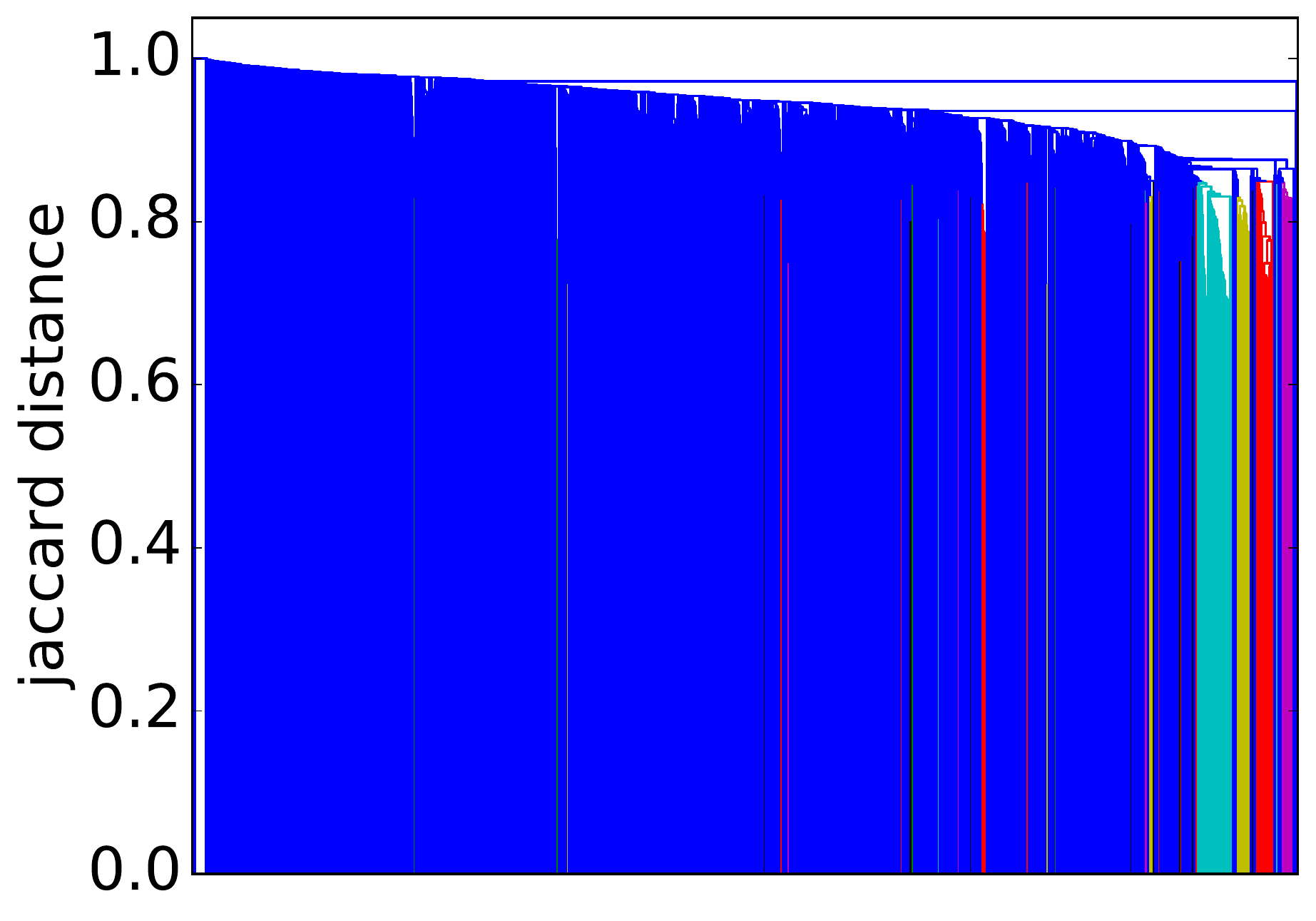}
\caption{Travel}
\label{fig:travel_dendro}  
\end{subfigure}
~
% airline dendro
\begin{subfigure}[t]{0.23\textwidth}
\includegraphics[width=\textwidth]{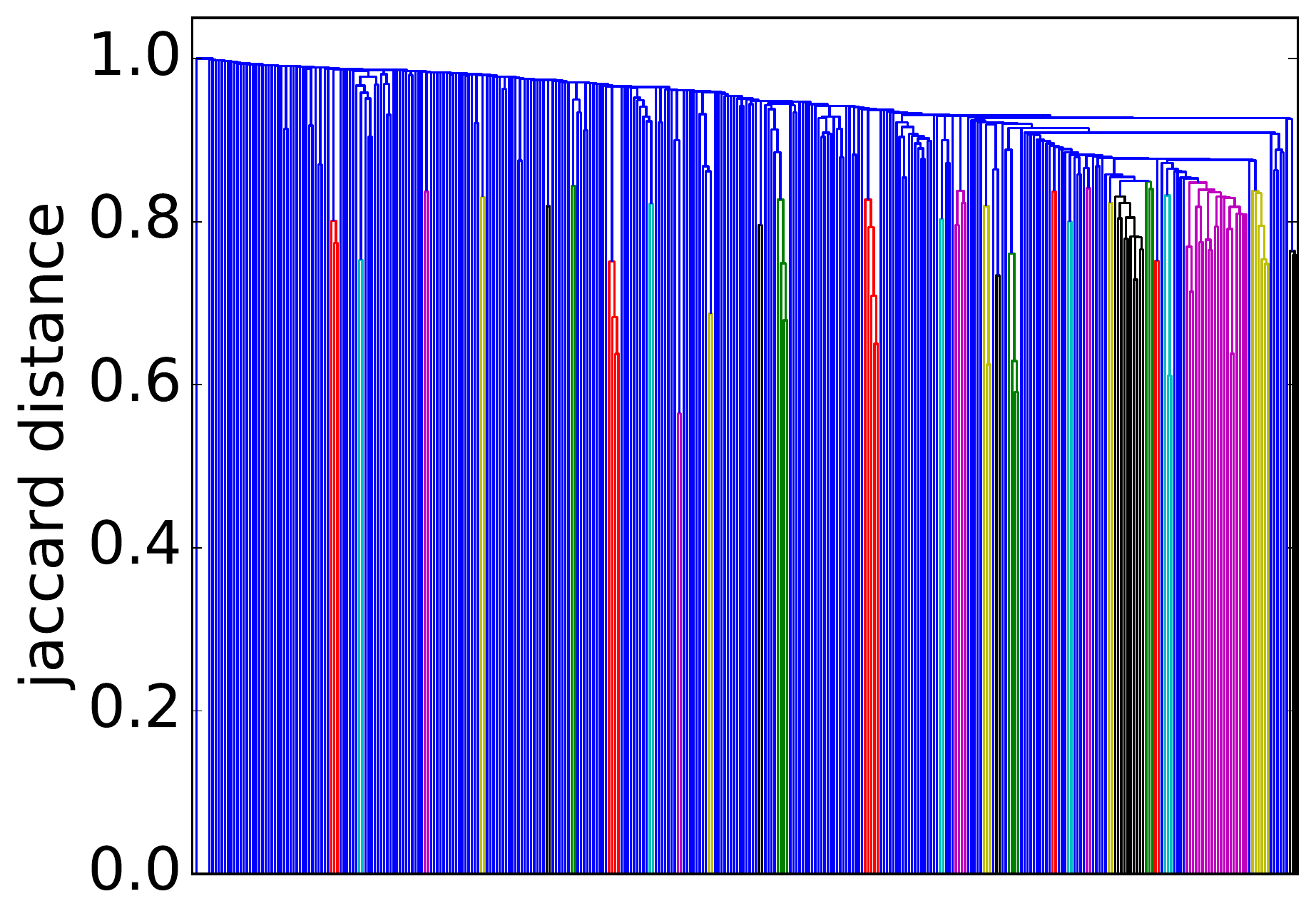}
\caption{Airline}
\label{fig:airline_dendro}
\end{subfigure}
~
% hotel dendro
\begin{subfigure}[t]{0.23\textwidth}
\includegraphics[width=\textwidth]{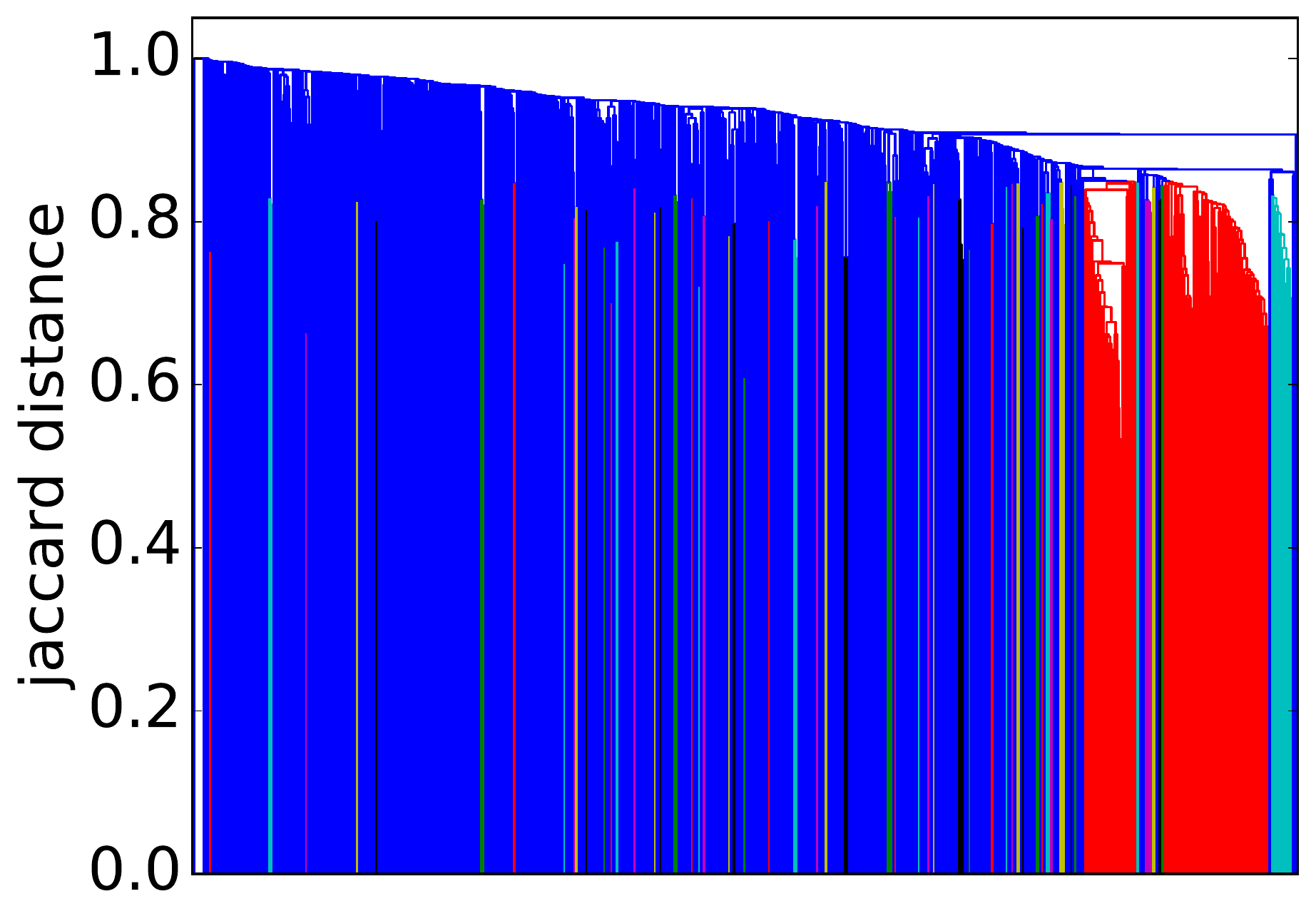}
\caption{Hotel}
\label{fig:hotel_dendro}
\end{subfigure}
~
% cosmetics dendro
\begin{subfigure}[t]{0.23\textwidth}
\includegraphics[width=\textwidth]{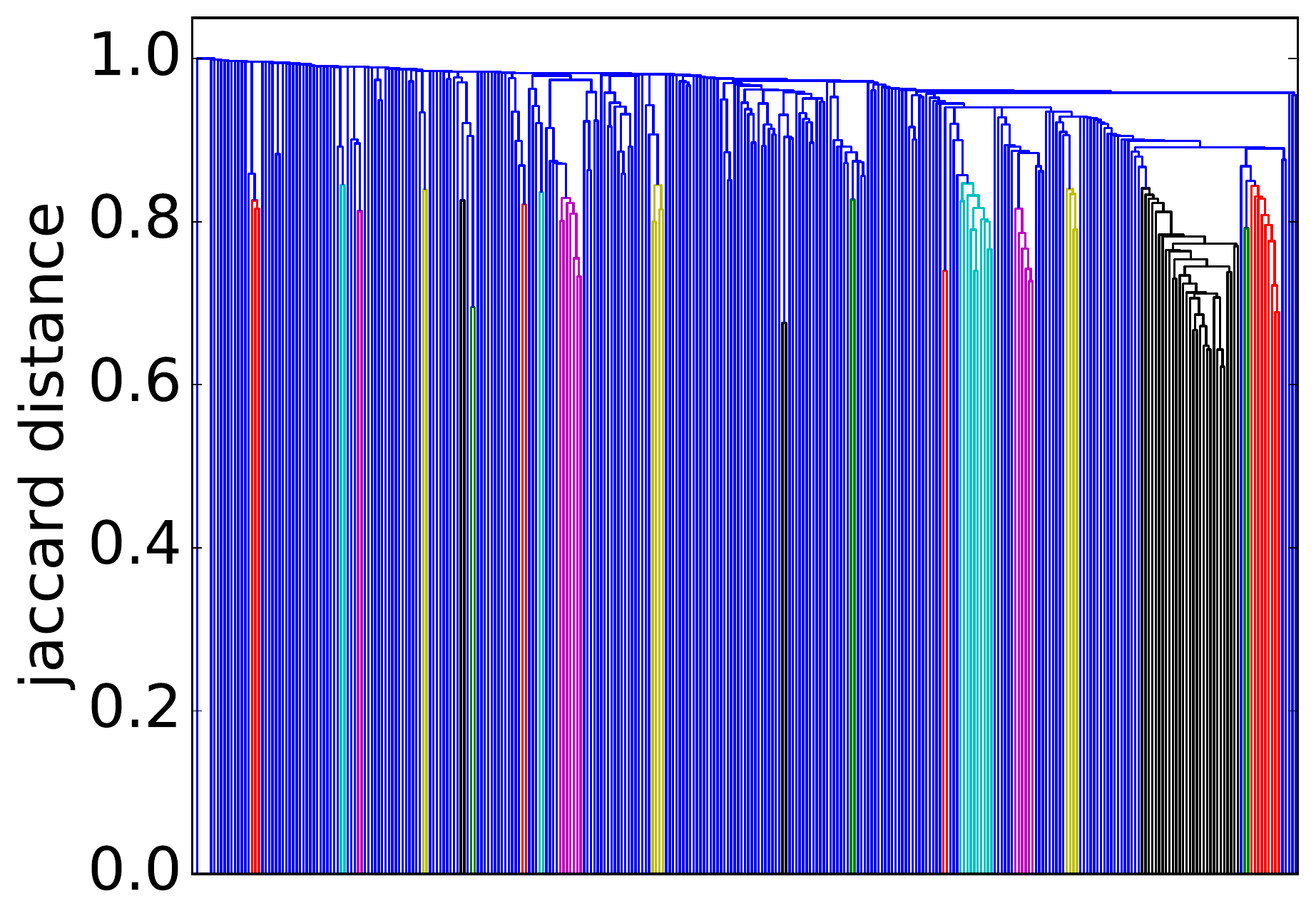}
\caption{Cosmetics}
\label{fig:cosmetics_dendro}
\end{subfigure} 
\\
Industrial groups. Small highly connected groups due to sub-brands 
\\
\begin{subfigure}[t]{0.23\textwidth}
\includegraphics[width=\hsize,clip]{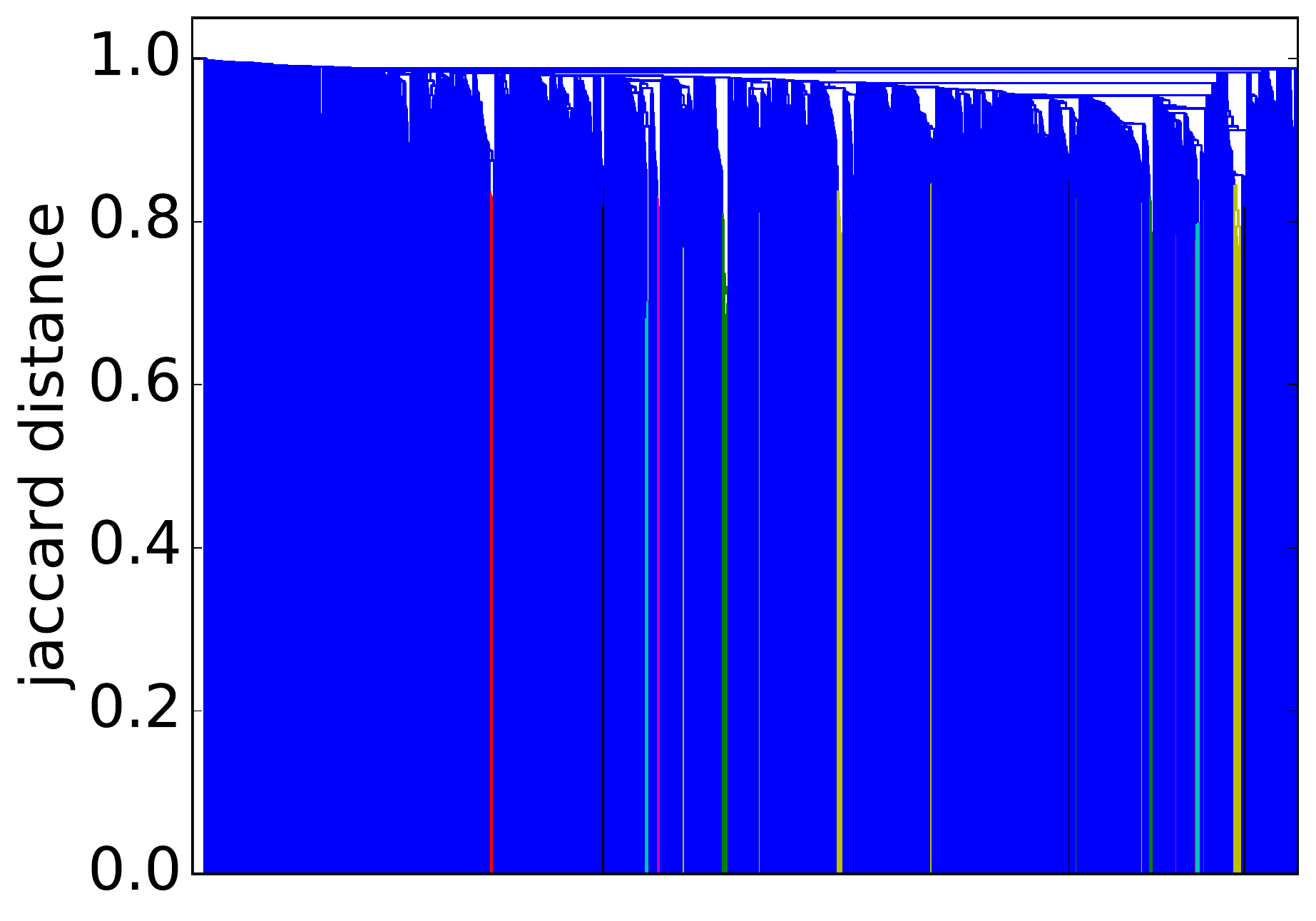}
\caption{Food and Drink}
\label{fig:food_drink_dendro}  
\end{subfigure}
~
% electronics dendro
\begin{subfigure}[t]{0.23\textwidth}
\includegraphics[width=\textwidth]{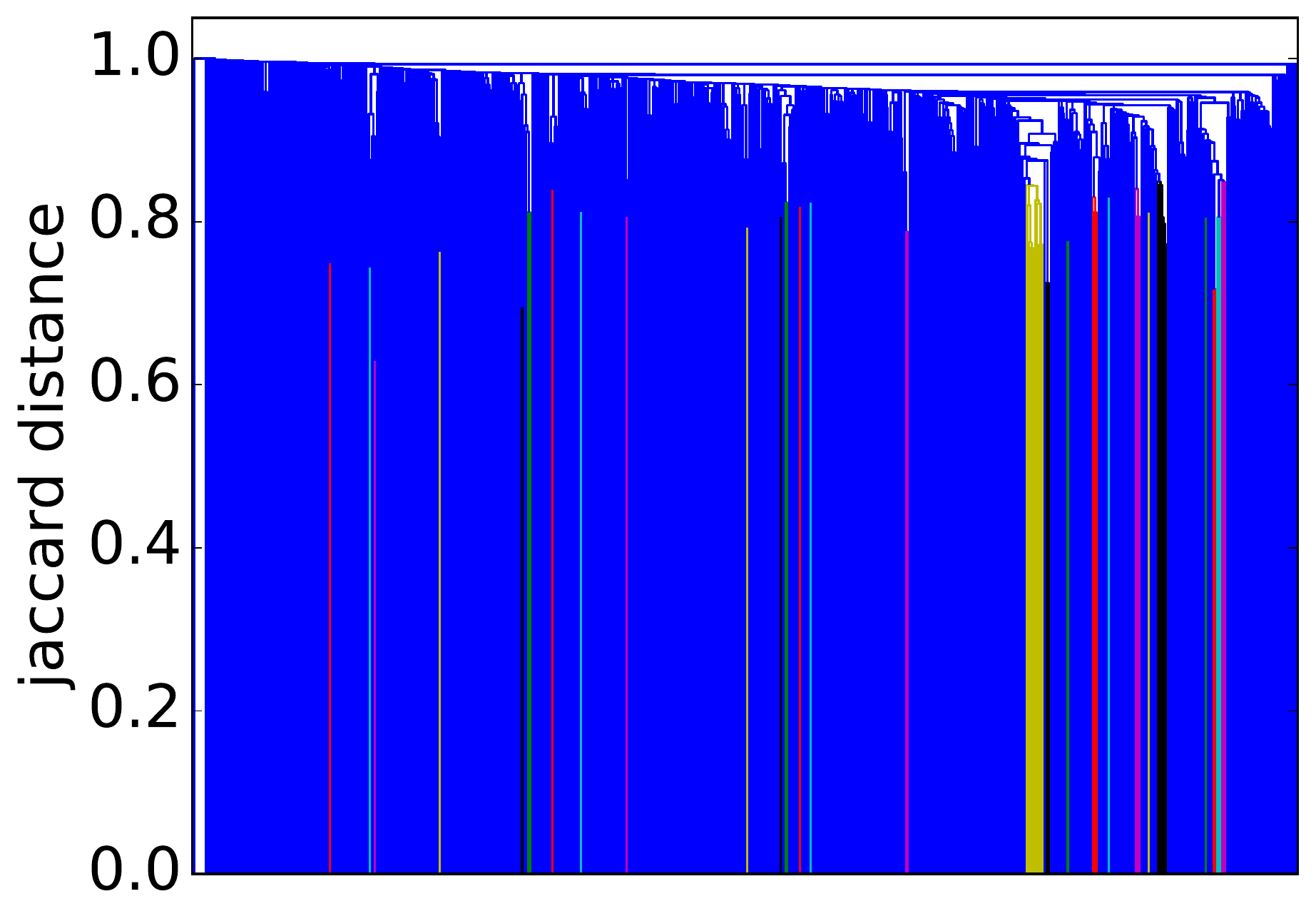}
\caption{Electronics}
\label{fig:electronics_dendro}
\end{subfigure}
~
% alcohol dendro
\begin{subfigure}[t]{0.23\textwidth}
\includegraphics[width=\textwidth]{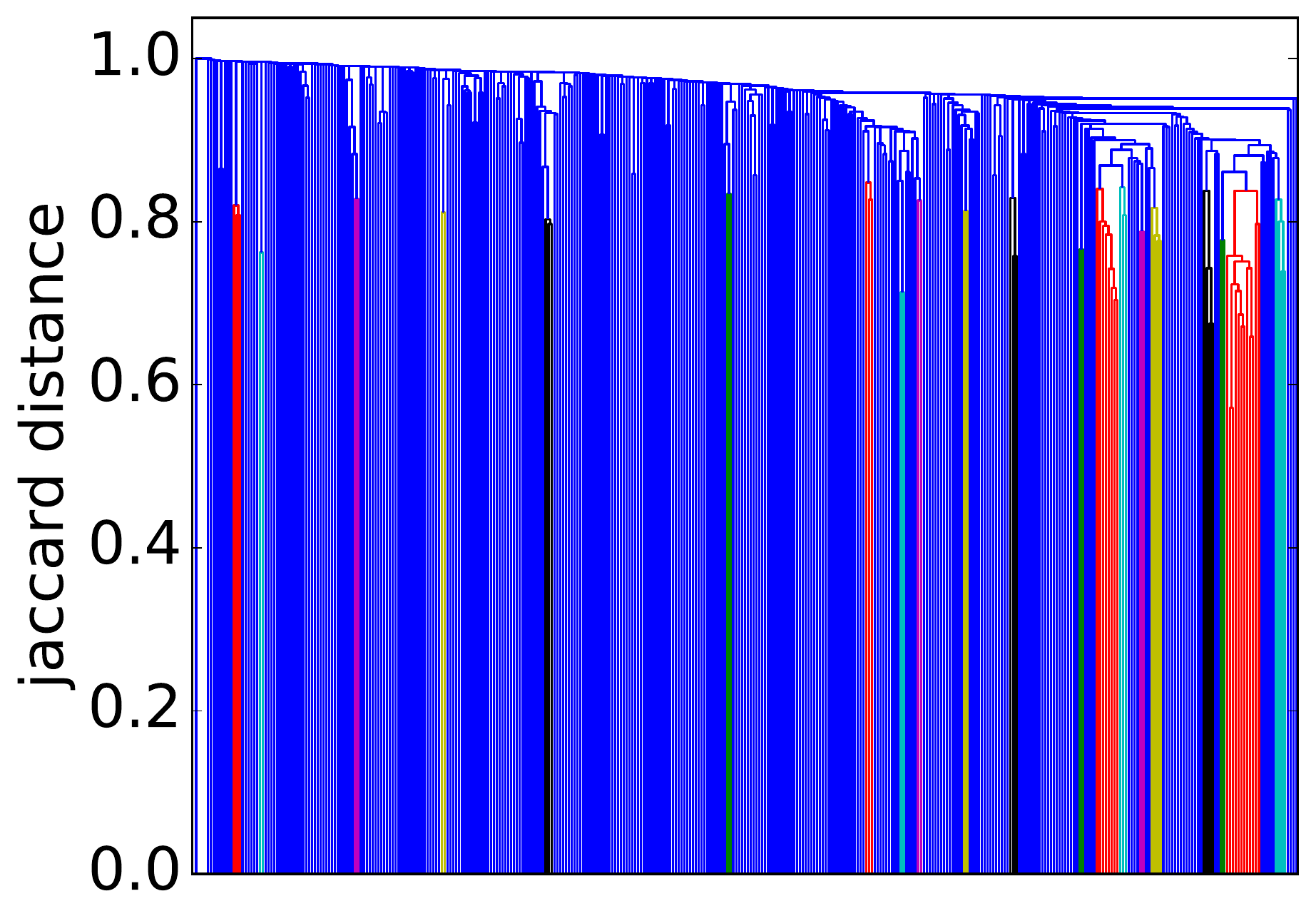}
\caption{Alcohol}
\label{fig:alcohol_dendro}
\end{subfigure}
~
% model dendro
\begin{subfigure}[t]{0.23\textwidth}
\includegraphics[width=\textwidth]{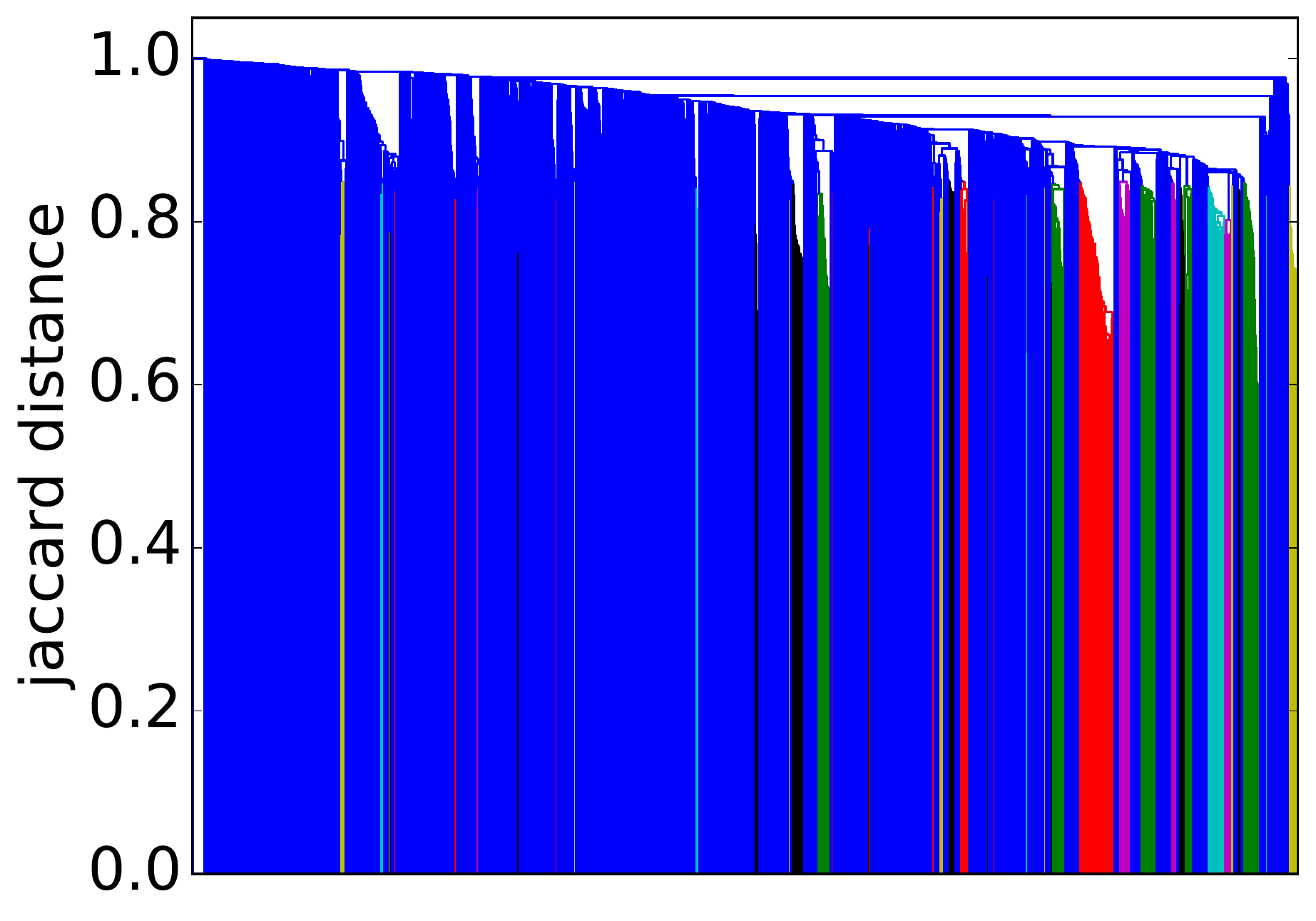}
\caption{Model}
\label{fig:model_dendro}
\end{subfigure}
\\
Industrial groups. Limited interaction
\\
%\centering
% MMA dendro
\begin{subfigure}[t]{0.23\textwidth}
\includegraphics[width=\textwidth]{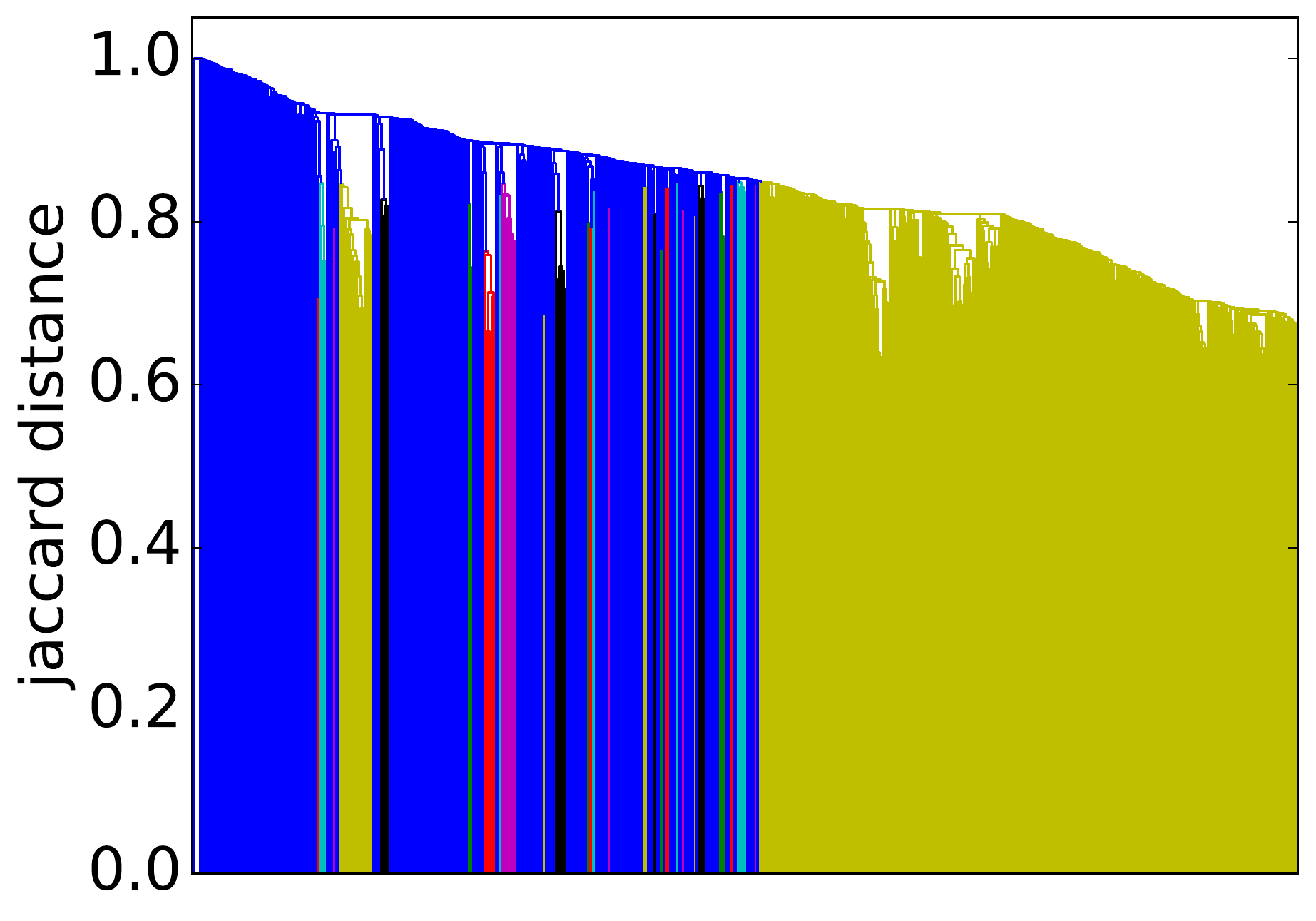}
\caption{Mixed Martial Arts}
\label{fig:taekwondo_dendro}
\end{subfigure} 
~
% cycling dendro
\begin{subfigure}[t]{0.23\textwidth}
\includegraphics[width=\hsize,clip]{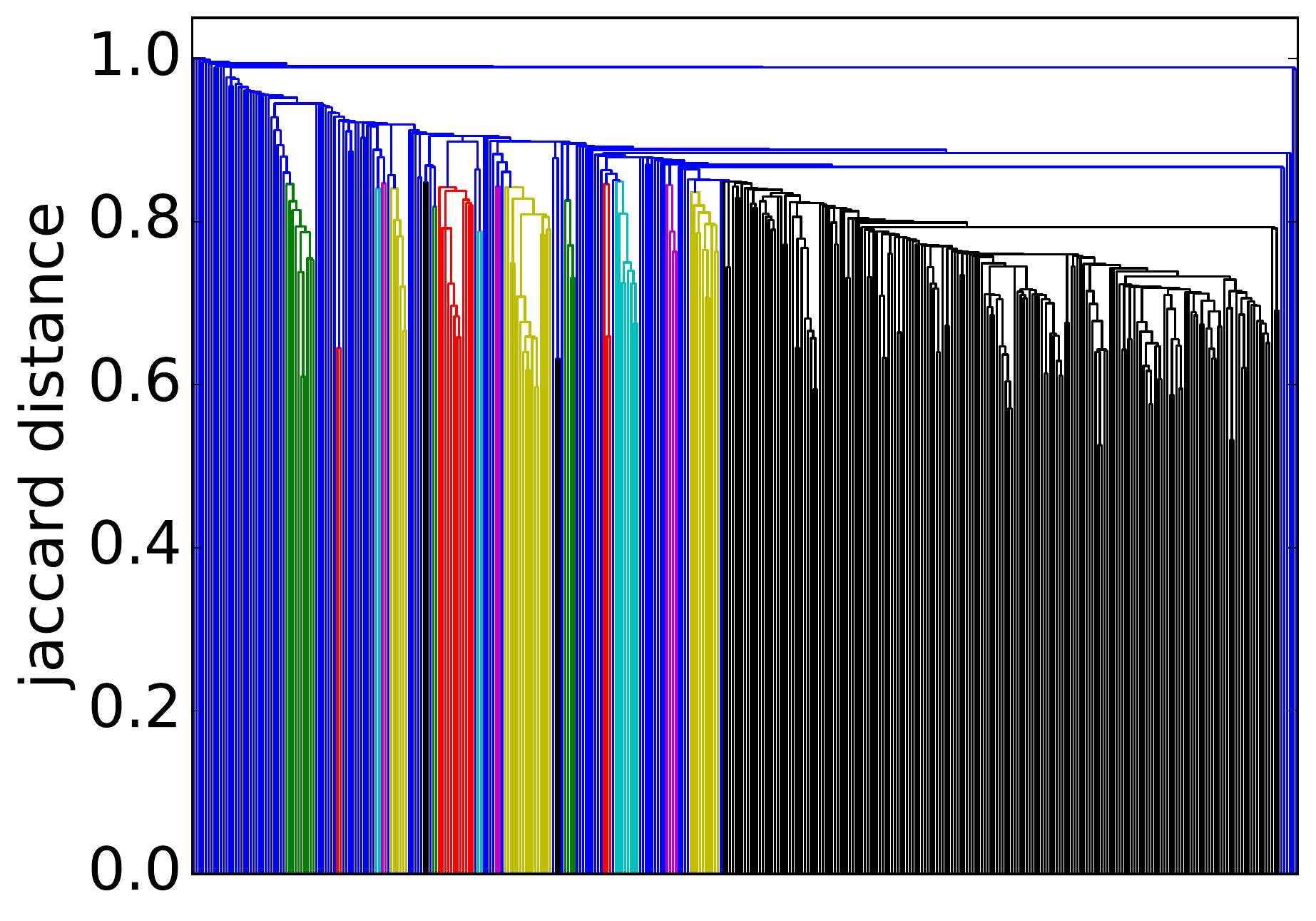}
\caption{Cycling}
\label{fig:cycling_dendro}  
\end{subfigure}
~
% athletics dendro
\begin{subfigure}[t]{0.23\textwidth}
\includegraphics[width=\textwidth]{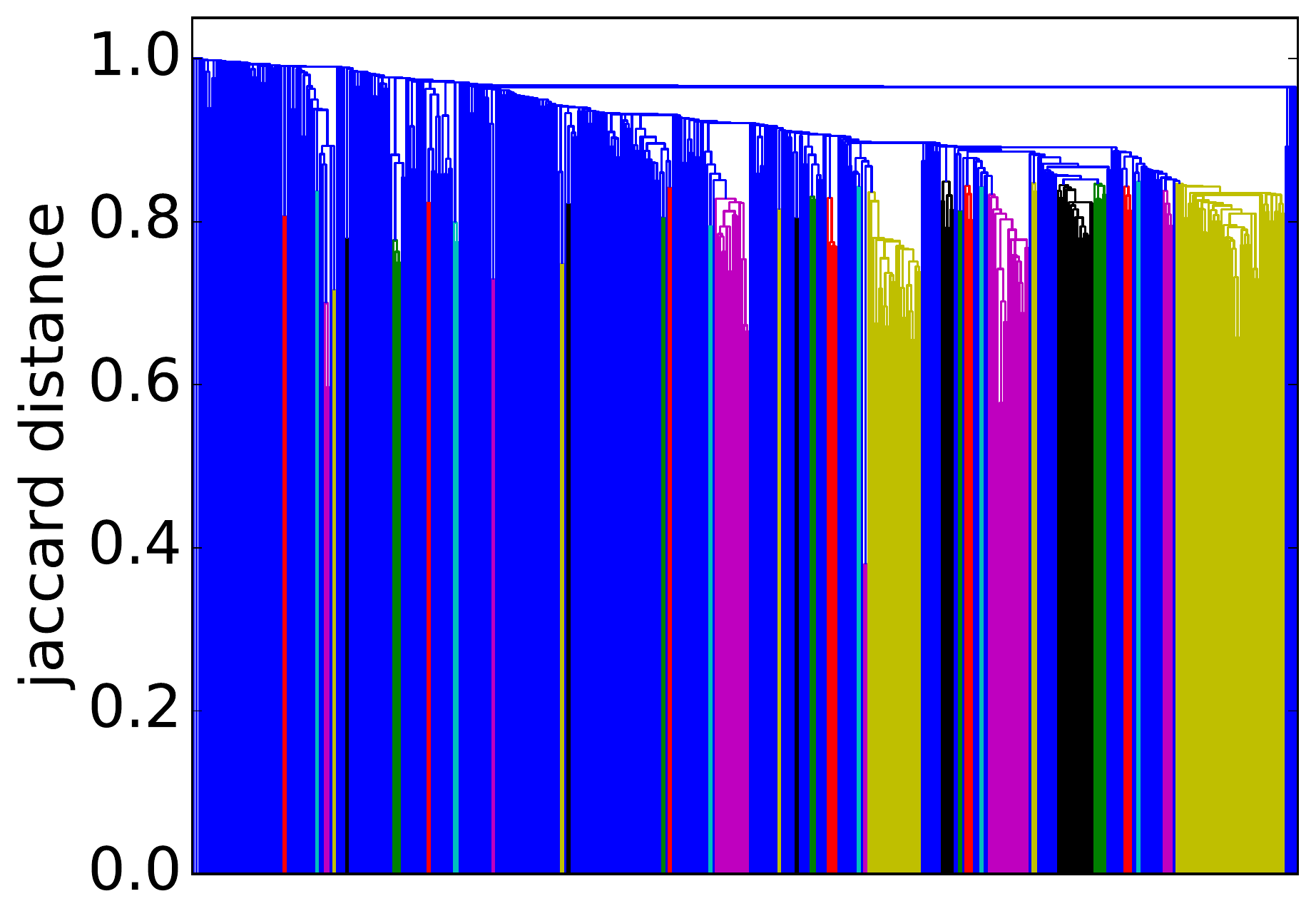}
\caption{Athletics}
\label{fig:athletics_dendro}
\end{subfigure} 
~
% adult actor dendro
\begin{subfigure}[t]{0.23\textwidth}
\includegraphics[width=\textwidth]{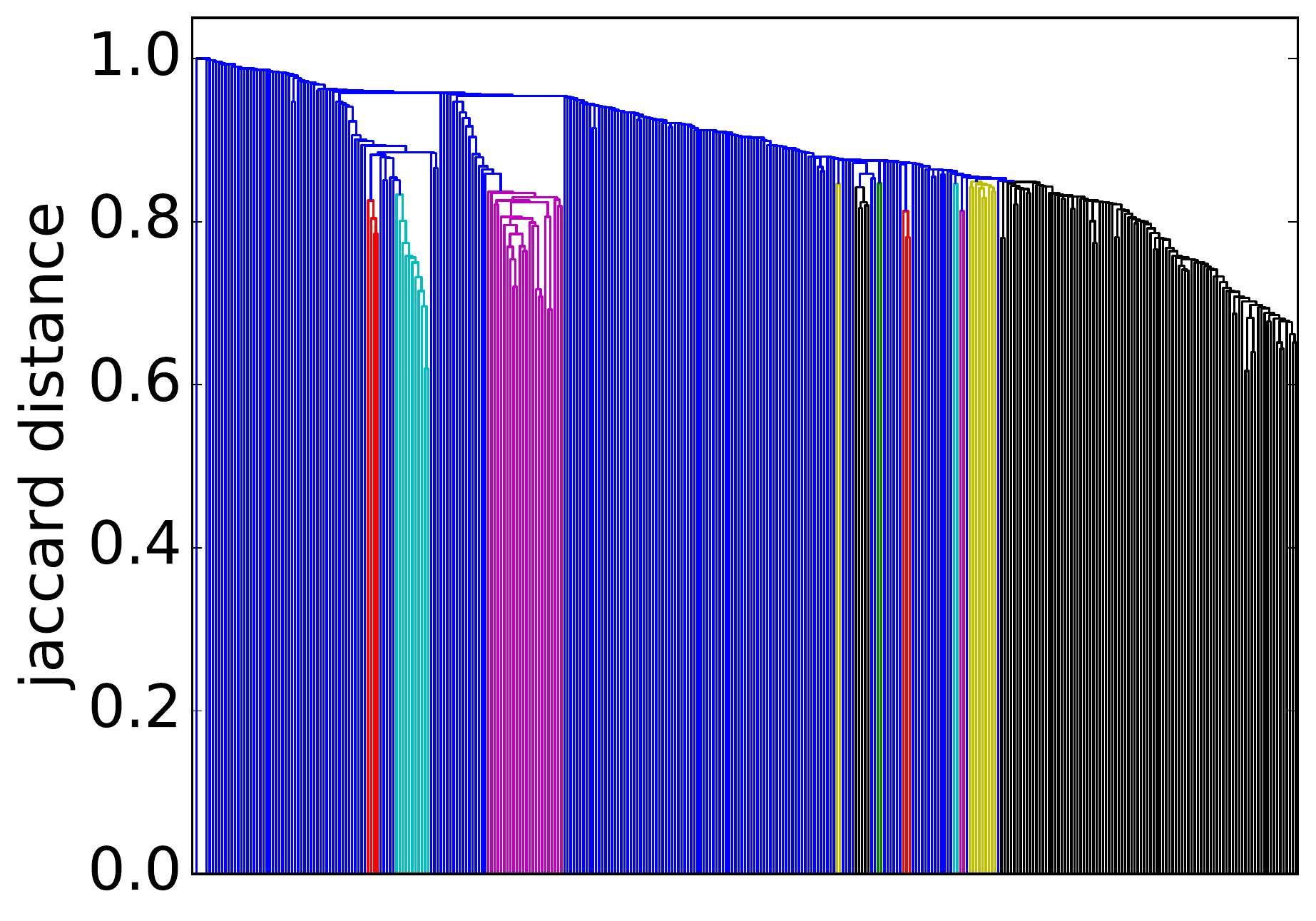}
\caption{Adult Actor}
\label{fig:pornstar_dendro}
\end{subfigure}
\\
Strongly connected communities. Sub-communities mostly due to nationality
\\

\begin{subfigure}[t]{0.23\textwidth}
\includegraphics[width=\hsize,clip]{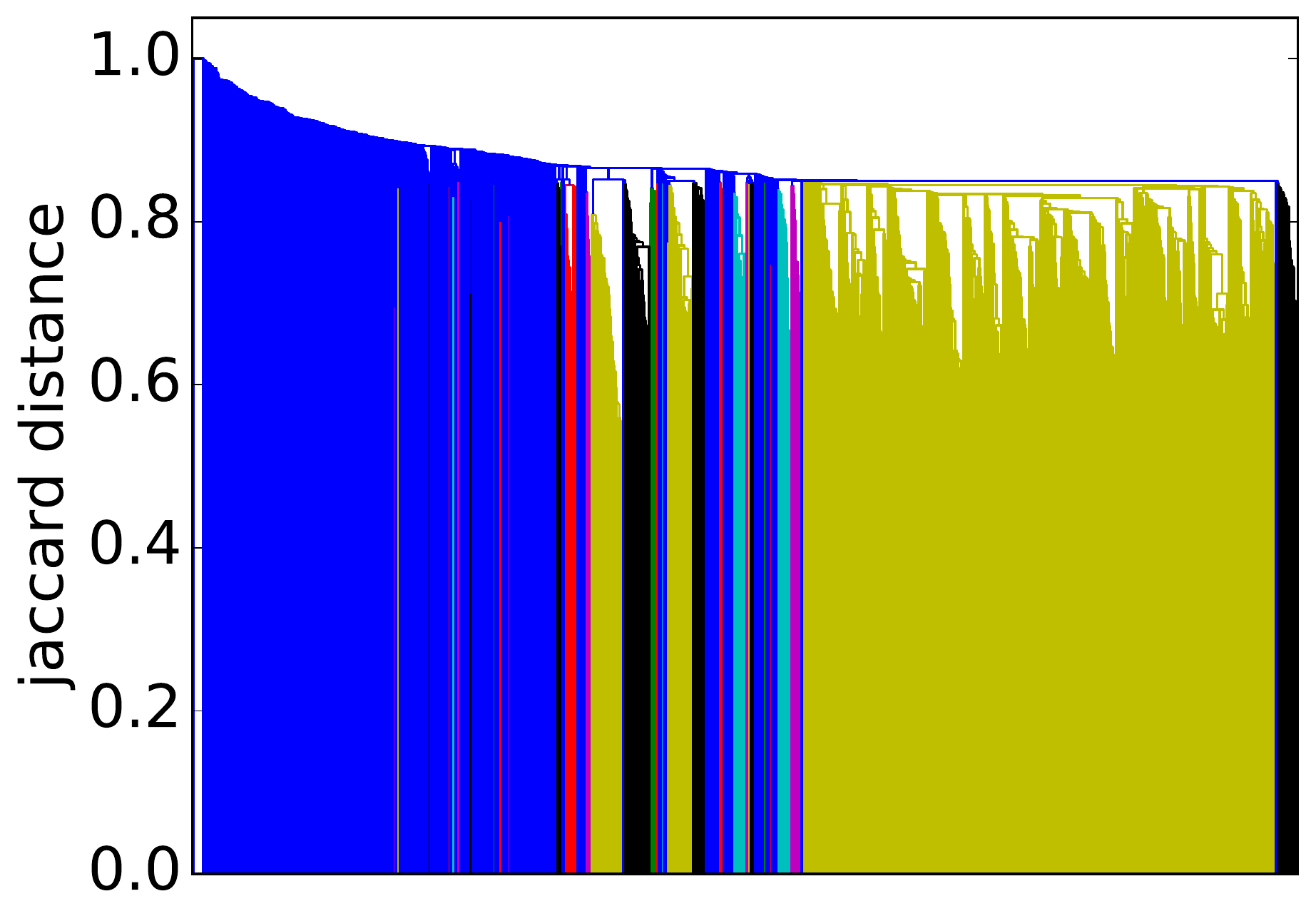}
\caption{American Football}
\label{fig:american_football_dendro}  
\end{subfigure}
~
% baseball dendro
\begin{subfigure}[t]{0.23\textwidth}
\includegraphics[width=\textwidth]{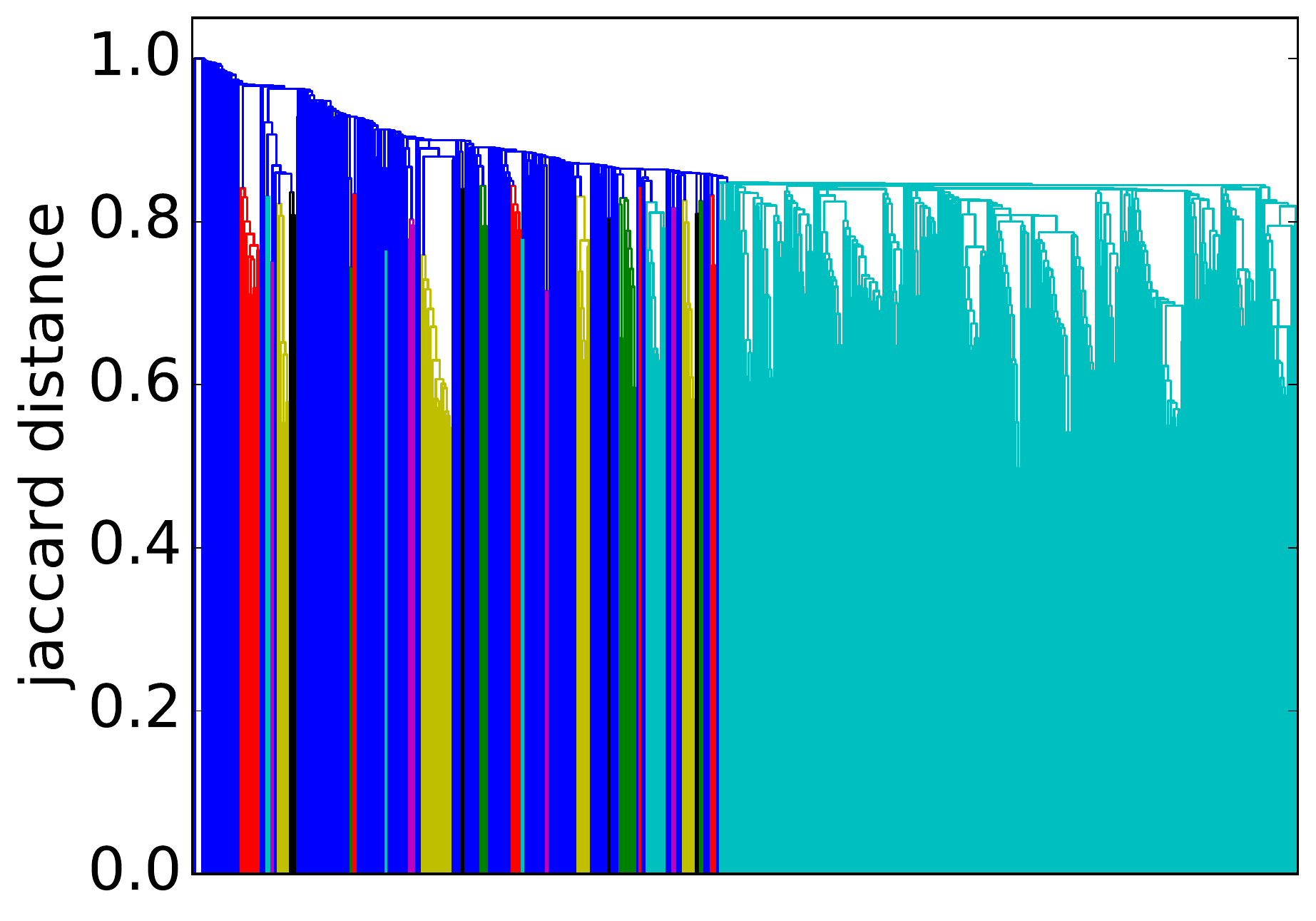}
\caption{Baseball}
\label{fig:baseball_dendro}
\end{subfigure}
~
% basketball dendro
\begin{subfigure}[t]{0.23\textwidth}
\includegraphics[width=\textwidth]{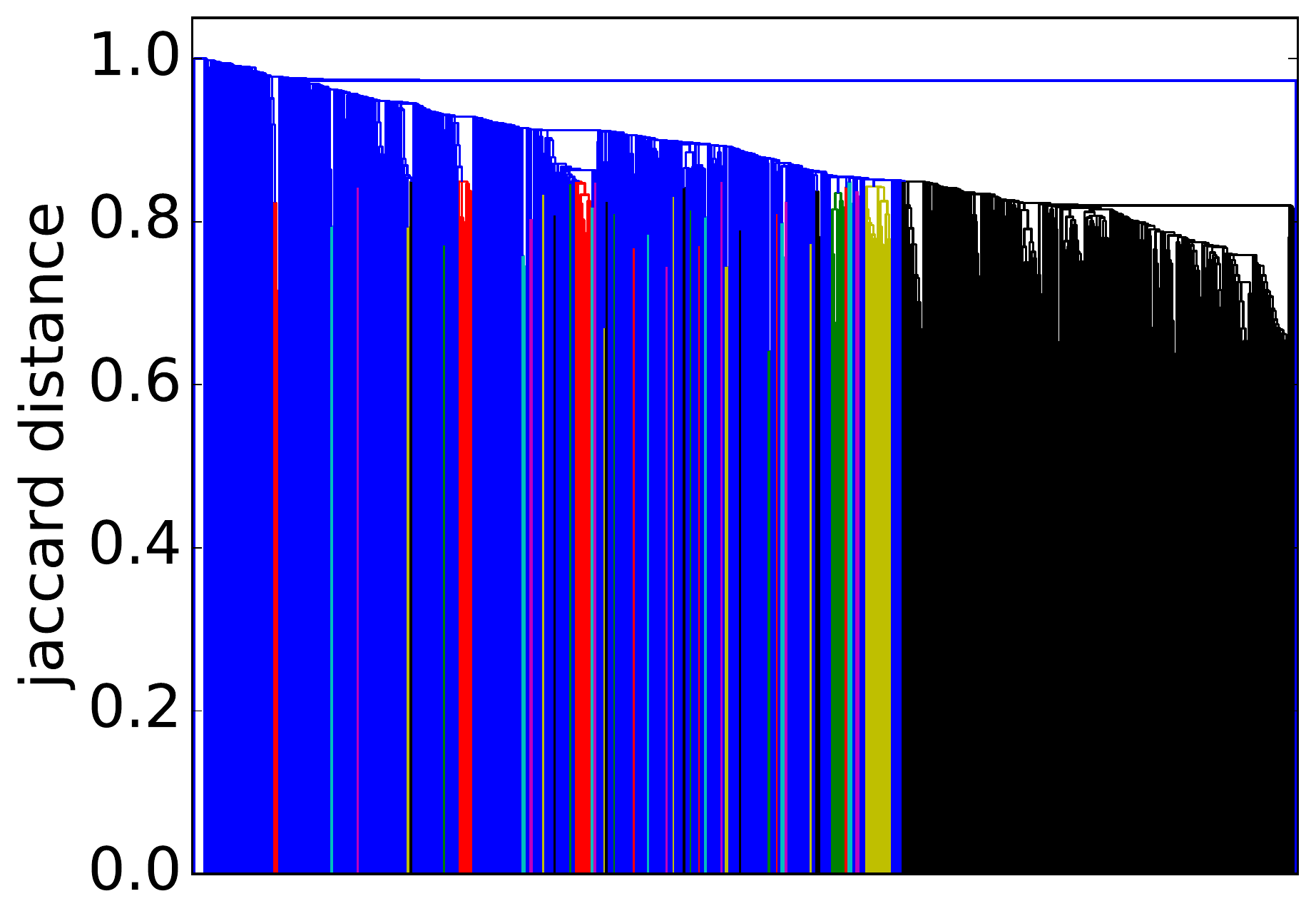}
\caption{Basketball}
\label{fig:basketball_dendro}
\end{subfigure}
~
% football dendro
\begin{subfigure}[t]{0.23\textwidth}
\includegraphics[width=\textwidth]{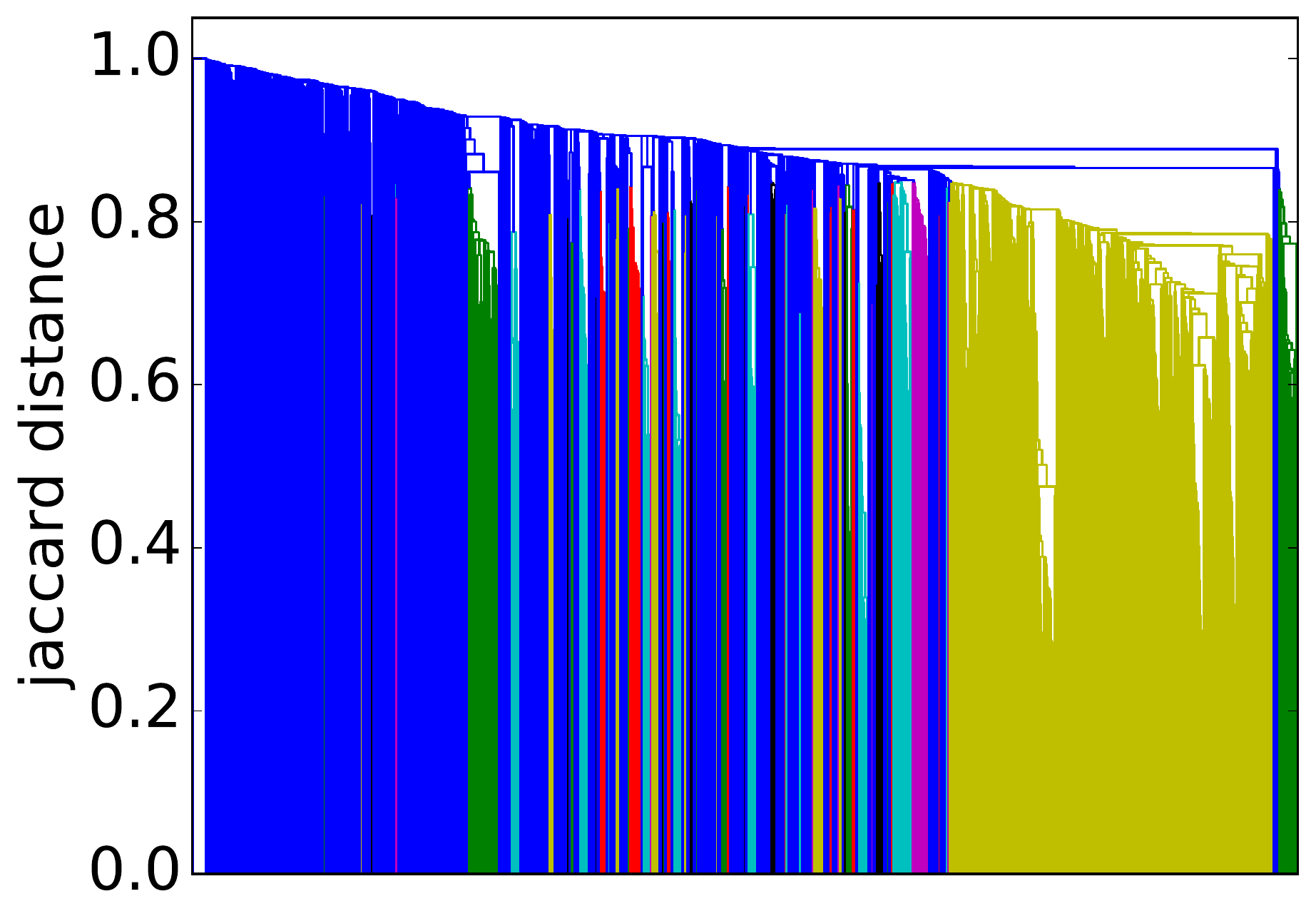}
\caption{Football}
\label{fig:football_dendro}
\end{subfigure} 
\\
Team sports. Many highly connected sub-groups
\\
\end{tabular}
\caption{Dendrograms showing the strength of interconnection within communities. The vertical axes show the Jaccard distances. Blue areas are not strongly connected. In each coloured region no two nodes are separated by a Jaccard distance greater than 0.85. The dendrograms are agglomerative: All accounts with a Jaccard distance less than the $y$-value are fused together into a super-node. The fusing process is sequential and the $x$-axis indicates the order of fusing with the first nodes to agglomerate at the right.}
\label{fig:dendrograms}
\end{figure}
% Gephi diagrams explained
To establish a clearer view of the density and homogeneity of the ground-truth we visualise the communities using network diagrams and dendrograms. Network diagrams are generated in Gephi \citep{Bastian2009a}. The layout uses the Force Atlas 2 algorithm. Colours indicate clusters generated using Gephi's modularity optimisation routine. The node (and label) sizes indicate the weighted degree of each node and are scaled to be between 5 and 20 pixels. The network diagrams reveal any substructure present within the ground-truth. They contain too much information to easily see the individual accounts and so we magnify small subregions and display Twitter profile images for accounts within them. A weakness of the network diagrams is that different edge weights are hard to perceive. To provide a visual representation of the general strength of interaction we generated dendrograms (Figure~\ref{fig:dendrograms}) for each ground-truth community. 
% dendrograms explained
Dendrograms are agglomerative: All accounts with a Jaccard distance less than the $y$-value are fused together into a super-node. Any subgroups containing more than 10 nodes with no two nodes separated by a Jaccard distance greater than 0.85 have been coloured to indicate sub-communities.

\begin{figure}[tb]
\includegraphics[width=\textwidth]{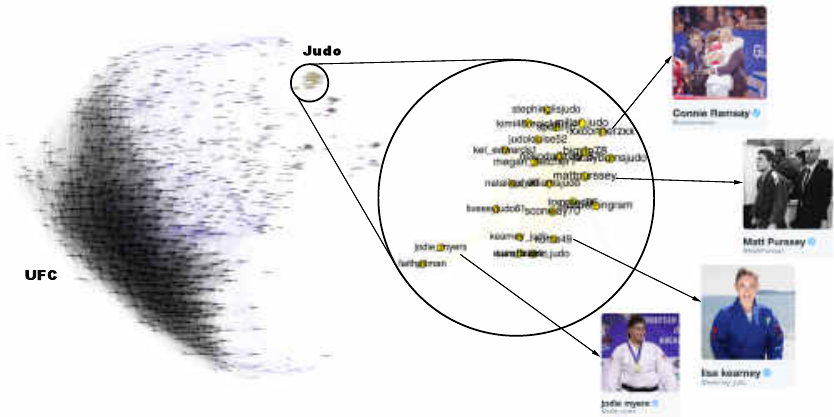}
\caption{The Mixed Martial Arts community is relatively homogeneous and densely interconnected with high clustering and good separability from the rest of the network. The only disconnected region is the yellow region, which has been magnified to show that it is made up of Olympic judo competitors. This community is well detected by all methods.}
\label{fig:taekwondo_network}
\end{figure}
% alcohol / cider
\begin{figure}[tb]
\includegraphics[width=\textwidth]{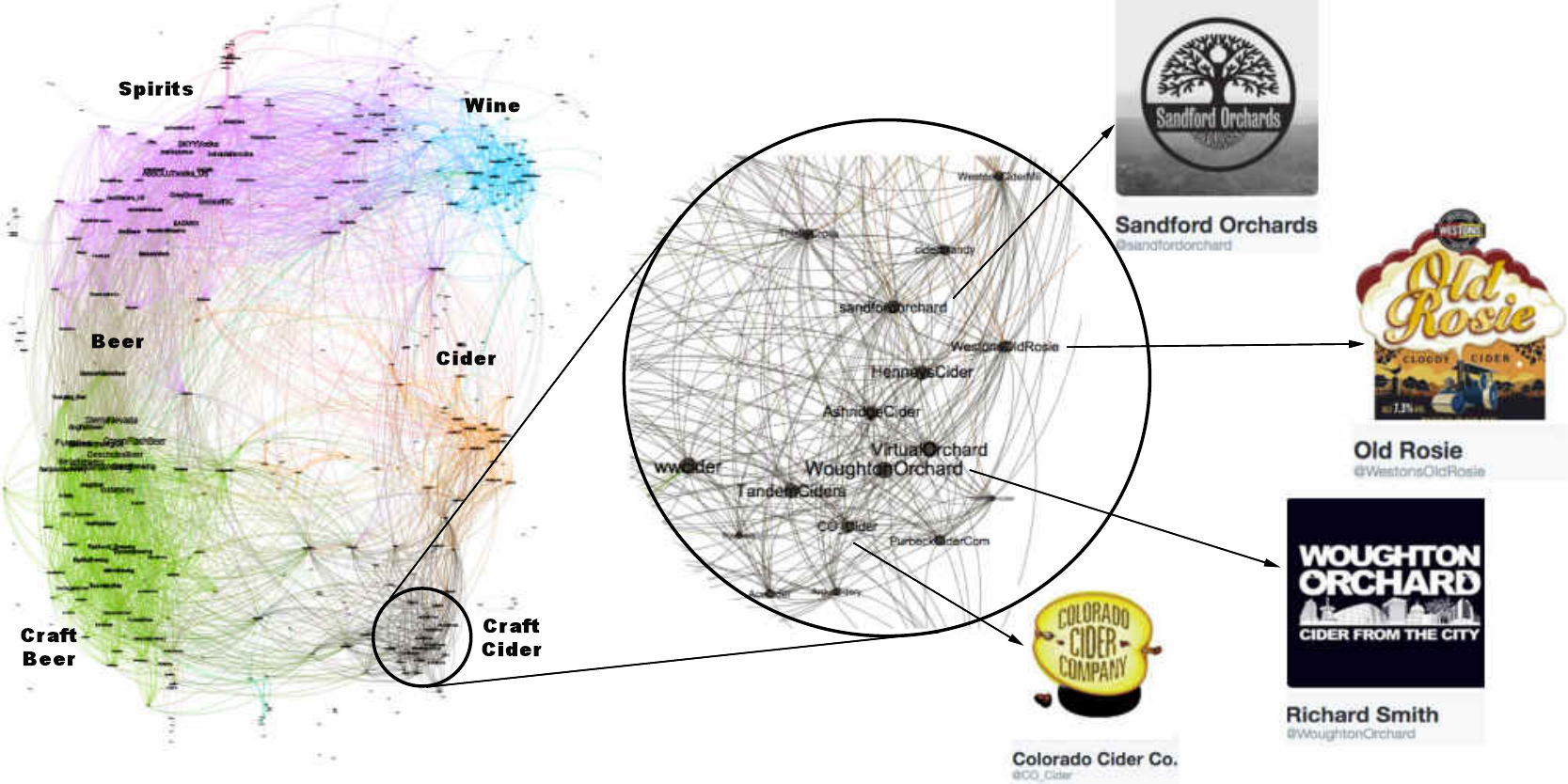}
\caption{The alcohol community is a low density community with poor clustering. It is divided into broad classes of drink such as beer, spirits and wine. We have magnified an area of the cider sub-community}
\label{fig:alcohol_network}
\end{figure}
% hotel / vegas
\begin{figure}[tb]
\includegraphics[width=\textwidth]{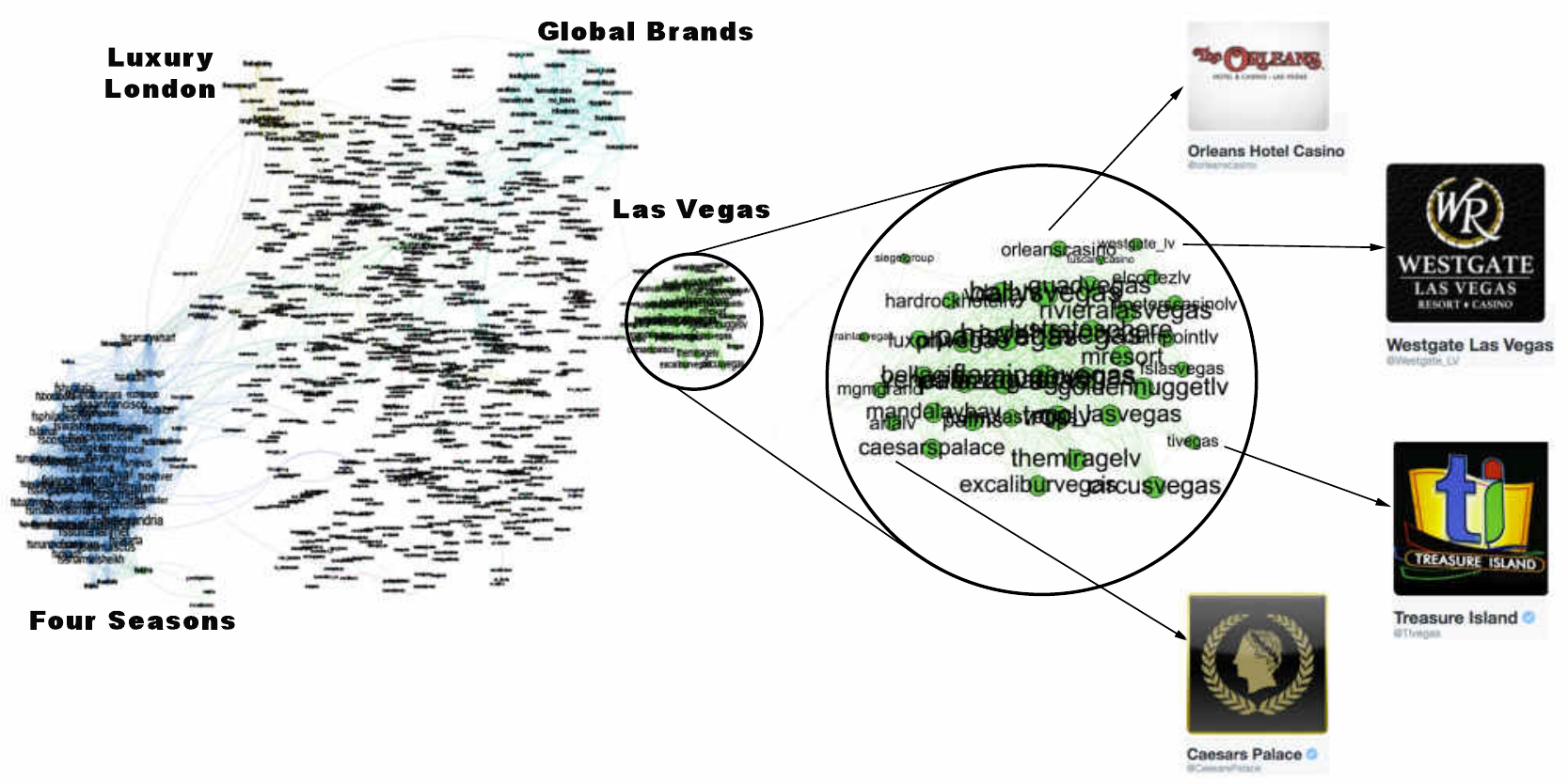}
\caption{The hotels network has low conductance indicating that it is not well separated from the rest of the network. It also has high cohesiveness indicating it contains components that appear to be the true modular units. The two clearly visible subcomponents are the Four Seasons brand in blue to the left and the hotels of Las Vegas, which is magnified}
\label{fig:hotel_network}
\end{figure}
% basketball / WMBA
\begin{figure}[tb]
\includegraphics[width=\textwidth]{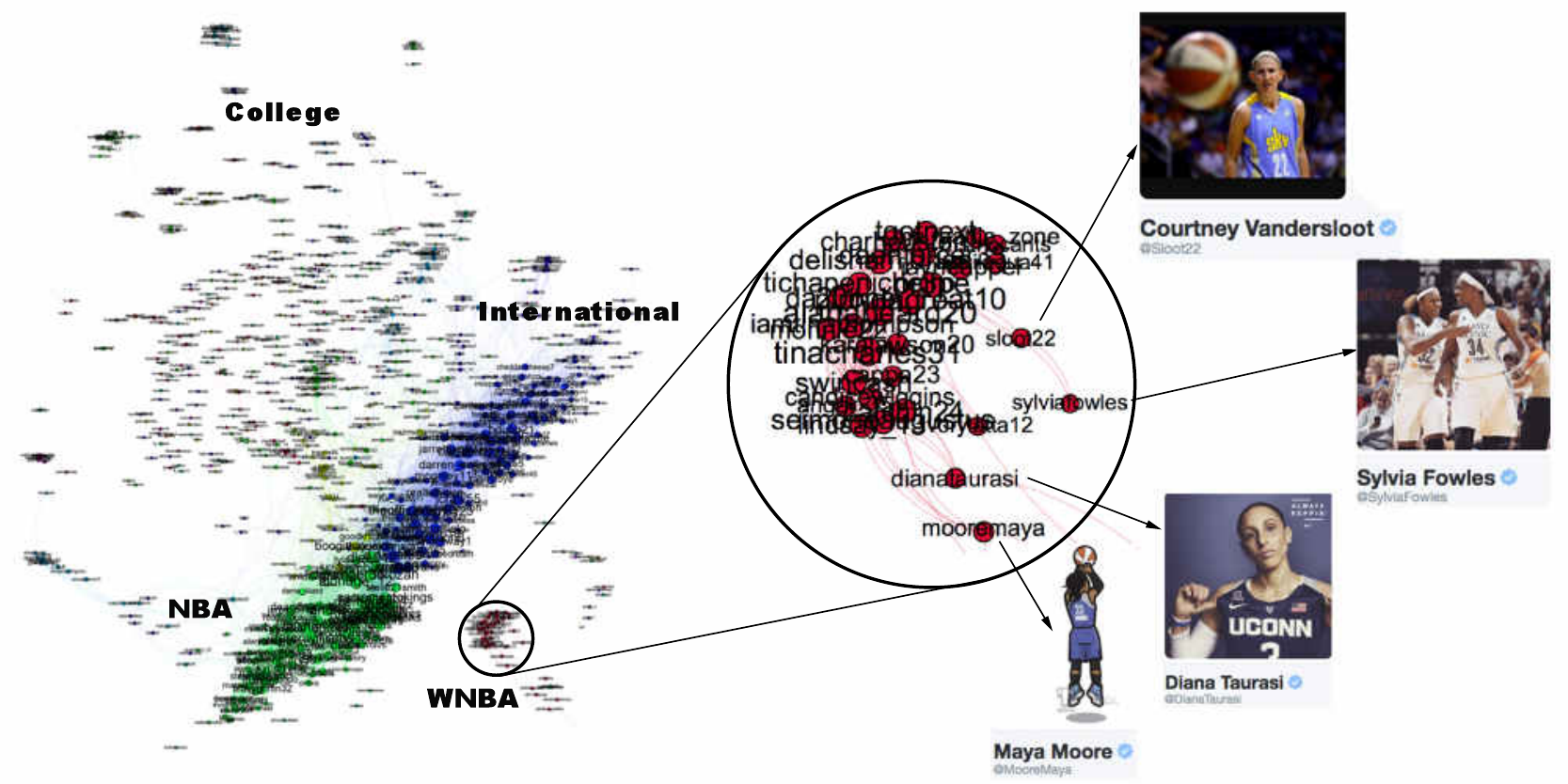}
\caption{The basketball community has very similar attributes to the baseball and american football communities, all being densely connected and well separated from the rest of the network. The individual team structure is not apparent in the graph instead the two large clusters show teams from the eastern and western conferences. The small peripheral clusters are mostly major college teams and we have magnified an area showing players of the Womens National Basketball Association}
\label{fig:basketball_network}
\end{figure}

% qualitative description of ground truth
% best four communities
Figure~\ref{fig:taekwondo_network} shows the Mixed Martial Arts (MMA) community. From Table~\ref{tab:communities} we see that this community is densely connected, strongly clustered and very well separated from the rest of the network. The black region in Dendrogram~\ref{fig:taekwondo_dendro} is a massive cluster where the distance between any two nodes is less than 0.8. It depicts MMA fighters, mostly fighting in the Ultimate Fighting Championship (UFC). There is a single well separated sub-community, which is magnified in Figure~\ref{fig:taekwondo_network} showing Olympic judo fighters. MMA is the \emph{best} community in our study. The Cycling, Adult Actor and Athletics communities are similar in structure to MMA (See Table~\ref{tab:communities}).

% team sport communities
Figure~\ref{fig:basketball_network} shows that the basketball community (largely NBA players) exhibits two large communities (the two NBA conferences). The individual team structure within the divisions is apparent from the fine banding in Figure~\ref{fig:basketball_dendro} where many well-connected sub-clusters, each with a distance of less than 0.85 between all pairs of nodes, are visible. We have magnified a small disconnected region of Figure~\ref{fig:basketball_network}, which shows players of the Women's National Basketball Association (WNBA). Baseball, football and american football exhibit similar structural properties. 

% generally bad communities
Figure~\ref{fig:alcohol_network} shows that the industry is split into four major groups representing the different classes of alcoholic drink (wine, beer, cider and spirits). We have magnified a region of the network that contains mostly English \emph{craft} ciders. Dendrogram~\ref{fig:alcohol_dendro} shows that the alcohol network is mostly poorly connected with only two coloured regions indicating well connected sub-communities. From Table~\ref{tab:communities} it can be seen that the alcohol network exhibits a low link density and separability, indicating that the community lacks distinction from the rest of the network. This is a consistent pattern for communities drawn from industrial segmentations. 

Figure~\ref{fig:hotel_network} shows an example of the final group of ground truth communities: industrial groups with prominent sub-communities. In this case the major sub-communities are the Four Seasons Hotel group and hotels located in Las Vegas (magnified). Dendrogram~\ref{fig:hotel_dendro} shows that while the hotel network is generally poorly connected, there are sizeable highly interconnected sub-communities. From Table~\ref{tab:communities} shows that the hotel community has low clustering as most accounts are disconnected, high cohesiveness as there are well connected sub-groups and a low Conductance Ratio. The travel, airlines and cosmetics communities all share these traits.

% Figure \ref{fig:athletics_network} shows that the athletics community is made up of many disconnected nodes and five well defined sub-commu\-nities. Unlike MMA and American football, which are largely sports played in the USA, athletics is global, and the principal clusters represent athletes from the USA, Australia and the UK\footnote{Our labelled data has an Anglo-Saxon bias}. Furthermore, the discipline of athletics is really many separate sports, and triathlon forms another well separated cluster. In this case, Figure \ref{fig:results} shows that AC performs best as there are really five sub-clusters, and with 30 seeds there is a high probability of seed representation in each one.

In summary we identify four groups of ground-truth communities and evaluate their quality based on the four axioms. We find that the groups differ greatly in quality. The group containing mixed martial arts, cycling, athletics and adult actors satisfies the four axioms and form a good set of ground-truth for algorithm evaluation. The group comprising team sports (american football, baseball, basketball and football) satisfy three of the four axioms (they are not homogenous). The remaining communities only contain sub-groups that satisfy any of the axioms.

\section{Experimental Evaluation}
\label{sec:evaluation}

% intro - results for the minhash approximation and the community detection piece
Our approach to real-time community detection relies on two approximations: minhashing for rapid Jaccard estimation and locality sensitive hashing to provide a fast query mechanism on top of minhashing. We assess the effect of these approximations, and demonstrate the quality of our results in three experiments: 
(1) We measure the sensitivity of the Jaccard similarity estimates with respect to the number of hash functions used to generate the signatures. This will justify the use of the minhash approximation for computing approximate Jaccard similarities. 
(2) We compare the run time and recall of our process on ground-truth communities against the Personal Page Rank (PPR) algorithm (state of the art) on a single laptop.
(3) We visualise detected communities and demonstrate that association maps for social networks using minhashing and LSH produce intuitively interpretable maps of the Twitter and Facebook graphs in real-time on a single machine.

\subsection{Experiment 1: Assessing the Quality of Jaccard Estimates}
\label{sec:eval_exp1}
% Minhash results
\begin{table}[tb]
  \centering
  \caption{The Twitter accounts with the highest Jaccard similarities to @Nike. $J$ and $R$ give the true Jaccard coefficient and Rank, respectively. $\hat{J}$ and $\hat{R}$ give approximations using Equation \eqref{eq:min_est} where the superscript determines the number of hashes used. Signatures of length 1,000 largely recover the true Rank.}
\scalebox{1}{
    \begin{tabular}{rrrrrrr}
    \toprule
    Twitter handle & $J$     & $R$     & $\hat J^{100}$  & $\hat R^{100}$  & $\hat J^{1000}$ & $\hat R^{1000}$ \\
    \midrule
    adidas & 0.261 & 1     & 0.22  & 2     & 0.265 & 1 \\
    nikestore & 0.246 & 2     & 0.25  & 1     & 0.255 & 2 \\
    adidasoriginals & 0.200 & 3     & 0.18  & 3     & 0.222 & 3 \\
    Jumpman23 & 0.172 & 4     & 0.13  & 7     & 0.166 & 4 \\
    nikesportswear & 0.147 & 5    & 0.18  & 4     & 0.137 & 5 \\
    nikebasketball & 0.144 & 6     & 0.16  & 5     & 0.127 & 7 \\
    PUMA  & 0.132 & 7     & 0.13  & 6     & 0.132 & 6 \\
    nikefootball & 0.127 & 8     & 0.08  & 17    & 0.110  & 9 \\
    adidasfootball & 0.112 & 9     & 0.09  & 16    & 0.113 & 8 \\
    footlocker & 0.096 & 10    & 0.08  & 17    & 0.096 & 11 \\
    \bottomrule
    \end{tabular}%
    }
  \label{tab:Nike}%
\end{table}%

\begin{figure}[tb]
  \centering
    \includegraphics[width=0.55\hsize]{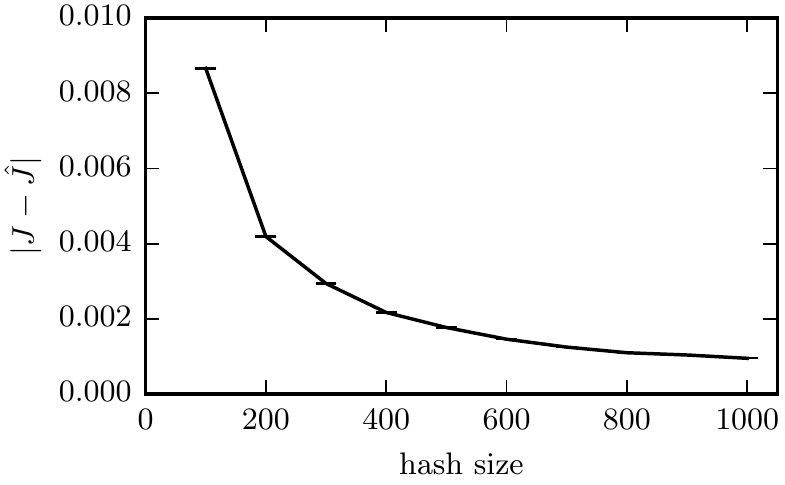}
    \caption{Expected error from Jaccard estimation using minhash signatures against the number of the hashes used in the signature. The error bars show twice the standard error using 400,000 data points.}
\label{fig:minhash}
%\figspace
\end{figure}

We empirically evaluate the minhash estimation error using a sample of 400,000 similarities taken from the 250 billion pairwise relationships between the Twitter accounts in our study. We compare estimates using Equation~\ref{eq:min_est} to exact Jaccards obtained by exhaustive calculations on the full sets using Equation~\ref{eq:jaccard}. Figure~\ref{fig:minhash} shows the estimation error (L1 norm) as a function of the number of hashes comprising the minhash signature. Standard error bars are just visible up until 400 hashes. The graph shows an expected error in the Jaccard of just 0.001 at 1,000 hashes. The high degree of accuracy and diminishing improvements at this point led us to select a signature length of $K = 1,000$. This value provides an appropriate balance between accuracy and performance (both runtime and memory scale linearly with $K$).

% The things similar to Nike look correct
A top-ten list of Jaccard similarities is given in Table~\ref{tab:Nike} for the Nike Twitter account (based on the true Jaccard). Possible matches include sports people, musicians, actors, politicians, educational institutions, media platforms and businesses from all sectors of the economy. Of these, our approach identified four of Nike's biggest competitors, five Nike sub-brands and a major retailer of Nike products as the most associated accounts. This is consistent with our assertion that the Jaccard similarity of neighbourhood sets provides a robust similarity measure between accounts. We found similar trends throughout the data and this is consistent with the experience of analysts at Starcount, a London based social media analytics company, who are using the tool. Table~\ref{tab:Nike} also shows how the size of the minhash signature affects the Jaccard estimate and the corresponding rank of similar accounts. Local community detection algorithms  add accounts in similarity order. Therefore, approximating the true ordering is an important property. We measure the Spearman rank correlation between the true Jaccard similarities (column $R$) and those calculated from signatures of length 100 (column $\hat R^{100}$) and 1000 (column $\hat R^{1000}$) to be 0.89 and 0.97 respectively. The close correspondence of the rank vector using signatures of length 1,000  and the true rank  supports our decision to use signatures of containing 1,000 hashes.

\subsection{Experiment 2: Comparison of Community Detection with PPR}

In experiment 2 we move from assessing a single component (minhashing) to a system-wide evaluation: We evaluate the ability of our algorithm to detect related entities by measuring its performance as a local community detection algorithm seeded with members of the ground-truth communities listed in Table~\ref{tab:communities}. As a baseline for comparison we use the PPR algorithm, which is considered to be the state of the art for this problem \citep{Kloumann2014}. It is impossible to provide a fully like-for-like comparison with PPR: Running PPR on the full graph (700 million vertices and 20 billion edges) that we extract features from requires cluster computing and could return results outside of the accounts we considered. The alternative is to restrict PPR to run on the directly observed network of the 675,000 largest Twitter accounts, which could then be run on a single machine. We adopt this latter approach as it is the only option that meets our requirements (single machine and real-time).

\begin{figure}[tb]
\centering
\includegraphics[width=0.7\textwidth]{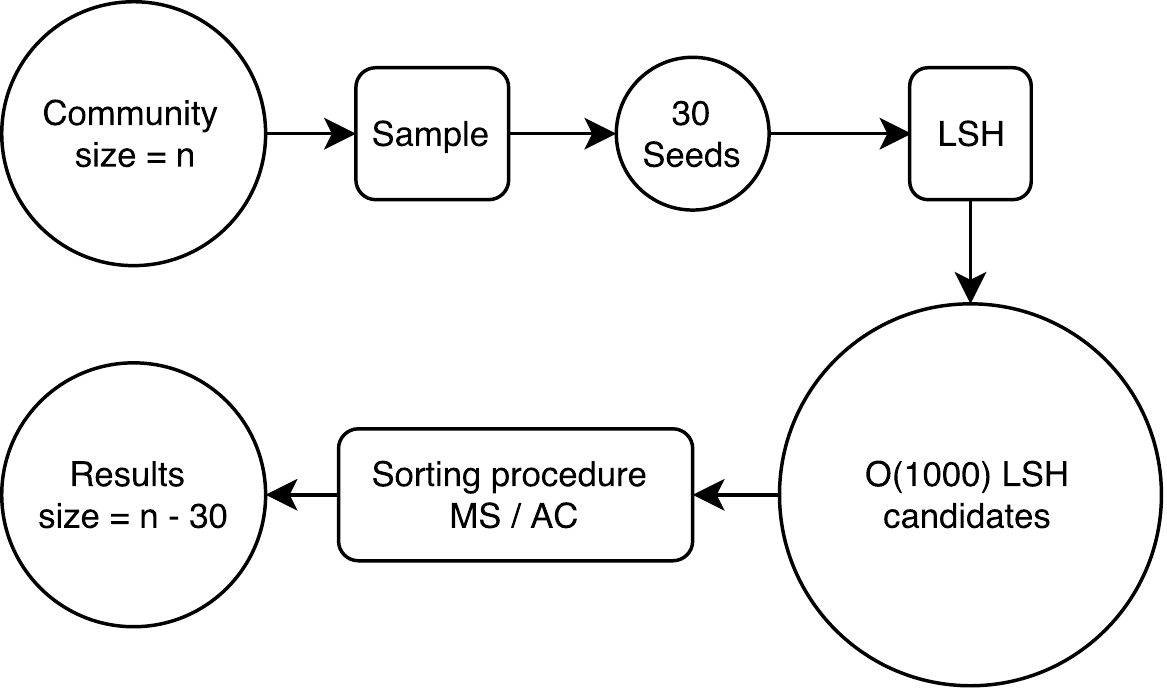}
\caption{Process diagram for Experiment 2. Circles indicate (intermediate) results and rectangles are processes. The size of circles is illustrative of the number of accounts at each stage. From each community we sample 30 seeds to use as input for an LSH query. The LSH query produces a larger set of candidate accounts. The candidate lists are submitted to the MS and AC sorting procedures, which return results.}
\label{fig:exp2}
\end{figure}
%\todo[inline]{Although Fig. 10 is now much better it is still not visually appealing. You can delete this note when you read it. Your decision how important you think a visually appealing representation is.}
% explain experimental goal for community expansion
In our experimentation, we randomly sampled 30 seeds from each ground-truth community. To produce MS and AC results we followed the process depicted in Figure~\ref{fig:exp2}: The seeds are input to an LSH query, which produces a list of candidate near-neighbours. For each candidate the Jaccard similarity is estimated using minhash signatures and sorted by either the MS or AC procedures.  

% our PPR implementation
We compare MS and AC to PPR operating on the directly observed network of the 675,000 largest accounts. Our PPR implementation uses the 30 seeds as the \emph{teleport} set and runs for three iterations returning a ranked list of similar Twitter accounts. 

In all cases, we sequentially select accounts in similarity order and measure the recall after each selection. 
The recall is given by
\begin{align}
\text{recall} = \frac{\left|C\cap C_{\text{true}} \right |}{| C_{\text{true}}| -  |C_{\text{init}}|} 
\end{align}
with $C_{\text{init}}$ as the initial seed set, $C_{\text{true}}$ as the ground truth community and $C$ as the set of accounts added to the output. For a community of size $|C|$ we do this for the $|C|-30$ most similar accounts so that a perfect system could achieve a recall of one.

% Graphs
\begin{figure}[tb]
  \centering
    \includegraphics[width=\textwidth]{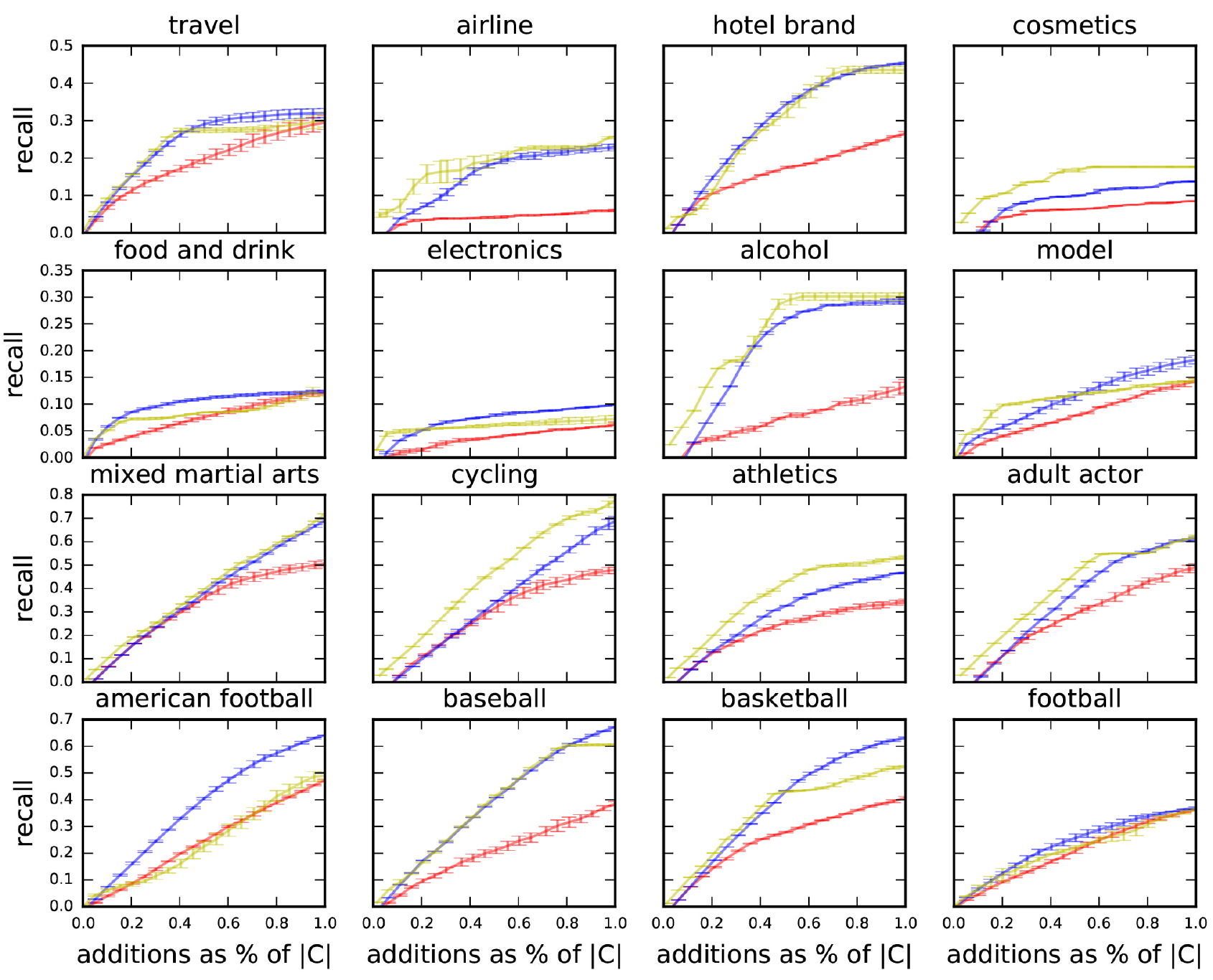}
\caption{Average recall (with standard errors) of Agglomerative Clustering (yellow), Personal PageRank (red) and Minhash Similarity (blue) against the number of additions to the community expressed as a fraction of the size of the ground-truth communities given in Table~\ref{tab:communities}. The tight error bars indicate that the methods are robust to the choice of seeds.}
\label{fig:results}
\end{figure}

\begin{table}[tb]
\centering
  \caption{Area under the recall curves (Figure~\ref{fig:results}). Bold entries indicate the best performing method. Minhash similarity (MS) is the best method in 8 cases, Agglomerative Clustering (AC) in 8 cases and  Personalised PageRank (PPR) in none. A perfect community detector would score 0.5}
\label{tab:AUC}%
\begin{tabular}{l|lll}
\toprule
\textbf{Tags}               & \textbf{PPR} & \textbf{MS} & \textbf{AC} \\
\midrule
\textbf{travel}             & 0.186        & \textbf{0.240 }      & 0.230       \\
\textbf{airline}            & 0.040        & 0.151       & \textbf{0.180 }      \\
\textbf{hotel brand}        & 0.160        & \textbf{0.294}       & 0.285       \\
\textbf{cosmetics}          & 0.055        & 0.086       & \textbf{0.143 }      \\
\textbf{food and drink}     & 0.072        & \textbf{0.099 }      & 0.082       \\
\textbf{electronics}        & 0.035        & \textbf{0.069}       & 0.059       \\
\textbf{alcohol}            & 0.069        & 0.199       & \textbf{0.229}       \\
\textbf{model}              & 0.078        & \textbf{0.110}       & 0.109       \\
\textbf{mixed martial arts} & 0.317        & 0.363       & \textbf{0.386}       \\
\textbf{cycling}            & 0.278        & 0.330       & \textbf{0.445}       \\
\textbf{athletics}          & 0.219        & 0.285       & \textbf{0.365}       \\
\textbf{adult actor}        & 0.269        & 0.347       & \textbf{0.397}       \\
\textbf{american football}  & 0.240        & \textbf{0.371}       & 0.240       \\
\textbf{baseball}           & 0.203        & \textbf{0.379}       & 0.378       \\
\textbf{basketball}         & 0.252        & \textbf{0.380}       & 0.353       \\
\textbf{football}           & 0.202        & \textbf{0.233}       & 0.212  \\ 
\bottomrule
\end{tabular}
\end{table}

% description of results 
%\todo[inline]{This paragraph is new, can you check it makes sense - Is there enough description of the charts and AUC table now?}
The results of this experiment are shown in Figure~\ref{fig:results} with the Area Under the Curves (AUC) given in Table~\ref{tab:AUC}.  Bold entries in Table~\ref{tab:AUC} indicate the best performing method. In all cases MS and AC give superior results to PPR. 

Figure~\ref{fig:results} shows standard errors over five randomly chosen input sets of 30 accounts from $C_{true}$. The confidence bounds are tight indicating that the methods are robust to the choice of input seeds. Figure~\ref{fig:results} is grouped to correspond to the dendrograms in Figure~\ref{fig:dendrograms}. Performance of all methods is considerable affected by the \emph{quality} of the communities. Communities with good values of the metrics given in Table~\ref{tab:communities} in general have superior recall across all methods. The third row of Figure~\ref{fig:results} contains the \emph{best} communities as measured by the metrics in Table~\ref{tab:communities}. For this group recalls are as high as 80\% (Cycling, AC). The worst group of communities are the transnational industrial communities in the second row. The lowest recall in row three (Athletics PPR) is still higher than the highest recall in the second row of results (Alcohol, AC). The best performing method for every community in row three of the results is AC. This is because AC is an adaptive method that can incorporate information from early results. The downside of an adaptive method is that pollution from false positives can rapidly degrade performance. This can be seen in the step decrease in gradient of the AC curves for basketball, baseball and adult actors. The fourth row of the table contains team sports. Team sports also have good metrics in Table~\ref{tab:communities}, but differ markedly in structure from the communities in row three. The communities in row four have well defined multi-modal sub-structures generated by the different teams. Both AC and MS are unimodal procedures that store the centre of a set of data points. For a multimodal distribution the mean may not be particularly close to the distribution and so false positives will occur. As AC incorporates false positives into the estimation procedure for all future results MS outperforms AC for all team sport communities. Of the communities in the first and second rows of Figure~\ref{fig:results} AC is best performing in four and MS is best performing in four. These communities are all diffuse, but some have a single densely connected region that can be found well by AC.  

% first / second order connections etc
Much of the difference in performance of these methods derives from their respective ability to explore the graph: PPR is really a global algorithm that has been modified to find local relationships. After three iterations PPR uses both first, second and third-order connections. First-order connection methods just use edges that directly connect to the seed nodes (neighbours). Second-order methods also give weight to the connections of the first-order nodes (neighbours of neighbours) and so on for third-order connections. The ability to explore higher-order connections is the principal reason identified by \cite{Kloumann2014} for the state-of-the-art performance of PPR. They also note that after two iterations most of the benefit is realised and that after three iterations there is no more improvement.  Our implementations of MS and AC are effectively second-order methods as they operate on a derived graph where the edge weight between two vertices is calculated from the overlap of the respective neighbourhoods. MS and AC outperform PPR because they are based on many more second order connections as they run on a compressed version of the full graph instead of a sub-graph.
PPR is expected to perform better given more computational resources, but the additional complexity, run time, latency or financial cost required for any scaled up/out solution would violate our system constraints.

% runtime analysis
Table~\ref{tab:runtimes} gives the mean and standard deviation of the run times averaged over the 16 communities.
\begin{table}[tb]
  \centering
  \caption{Clustering runtimes averaged over communities.}
    \begin{tabular}{rrr}
    \toprule
    \multicolumn{1}{c}{Method} & \multicolumn{1}{c}{Mean(s) } & \multicolumn{1}{c}{Std.Dev.} \\
    \midrule
    \multicolumn{1}{c}{PPR} & \multicolumn{1}{c}{12.58} & \multicolumn{1}{c}{8.83} \\
    \multicolumn{1}{c}{MS} & \multicolumn{1}{c}{0.23} & \multicolumn{1}{c}{0.08} \\
    \multicolumn{1}{c}{AC} & \multicolumn{1}{c}{18.6} & \multicolumn{1}{c}{22.0} \\
    \bottomrule
    \end{tabular}%
  \label{tab:runtimes}%
\end{table}%
MS is the fastest method by two orders of magnitude. Average human reaction times are approximately a quarter of a second and so MS delivers a real-time user experience \citep{Hewett1992}. As MS is the only method capable of operating in the real-time domain and this is a system requirement, we choose the MS procedure for experiment 3 and in our operational prototype.

\subsection{Experiment 3: Real-Time Graph Analysis and Visualisation}

\begin{figure}[tb]
\centering
\includegraphics[width = 0.7\hsize]{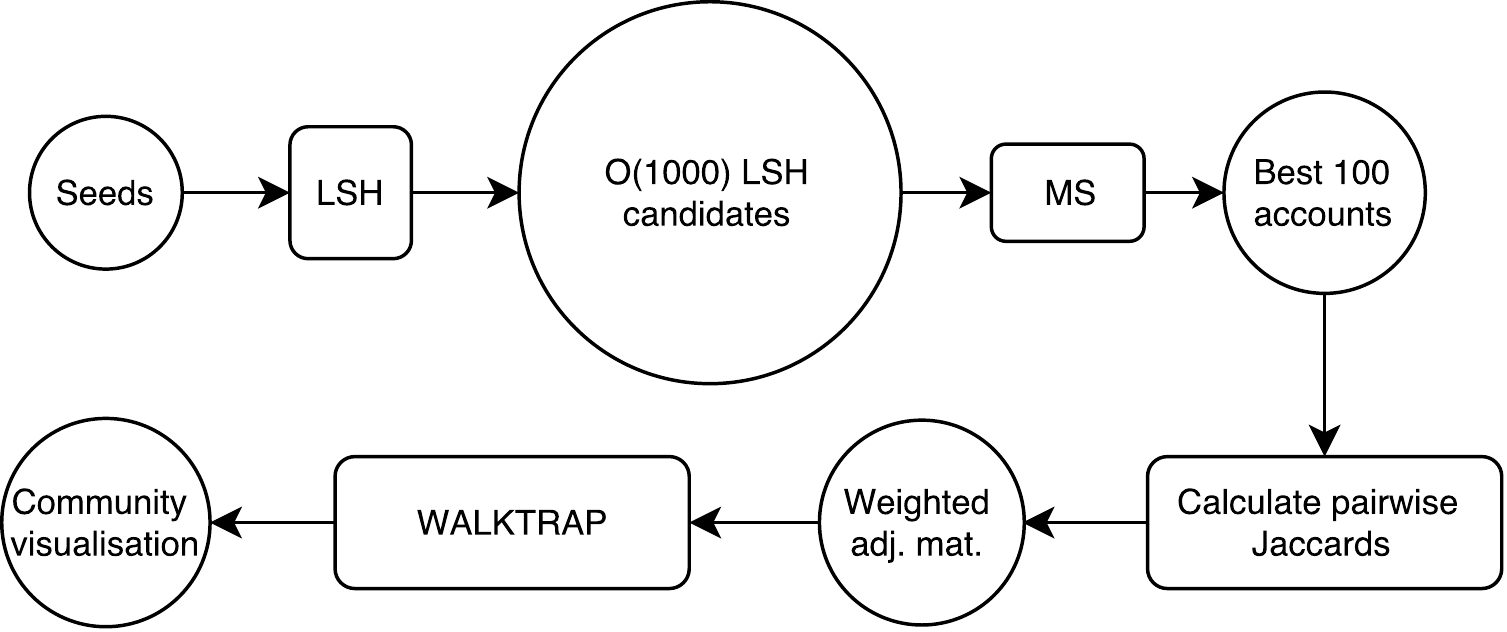}
\caption{Process diagram for Experiment 3. A set of seeds is queried using LSH and Minhash Similarity. The weighted adjacency matrix for the top 100 results is estimated using minhash signatures. The WALKTRAP community detection algorithm is run on the weighted adjacency matrix and the results are visualised.}
\label{fig:exp3}
\end{figure}

In the following, we provide example applications of our system to graph analysis. Users need only input a set of seeds, wait a quarter of a second and the system discovers the structure of the graph in the region of the seeds. Users can then iterate the input seeds based on what previous outputs reveal about the graph structure. Figure~\ref{fig:fb_graphs} shows results on the Facebook Page Engagements network while Figures~\ref{fig:twit_graphs1} and~\ref{fig:twit_graphs} use the Twitter Followers graph. 
% diagram generation process
Each diagram is generated by the procedure shown in Figure~\ref{fig:exp3}: Seeds are passed to the MS process, which returns the 100 most related entities. All pairwise Jaccard estimates are then calculated using the minhash signatures and the resulting weighted adjacency matrix is passed to the WALKTRAP global community detection algorithm. The result is a weighted graph with community affiliations for each vertex. In our visualisations we use the Force Atlas 2 algorithm to lay out the vertices. The thickness of the edges between vertices represents the pairwise Jaccard similarity, which has been thresholded for image clarity. The vertex size represents the weighted degree of the vertex, but is logarithmically scaled to be between 1 and 50 pixels. The vertex colours depict the different communities found by the WALKTRAP community detection algorithm.

% comparison facebook/twitter
We show some results using the Facebook Pages engagement graph to demonstrate that our work is broadly applicable across digital social networks. However there are some key differences between the Facebook Pages engagement graph and the Twitter Followers graph. As Following is the method used to subscribe to a Twitter feed, Follows tend to represent genuine interest. In contrast Facebook engagement is often used to grant approval or because a user desires an association. In addition, the Twitter graph corresponds to actions occurring as far back as 2006 (relatively few edges are ever deleted), while the Facebook graph corresponds only to events since 2014, when we began collecting data. As a result the Twitter data set contains significantly more data, but with less relevance to current events. 

% 3 stages -> figure
\begin{figure}
        \centering
\begin{subfigure}[b]{\hsize}
                \includegraphics[width=\textwidth]{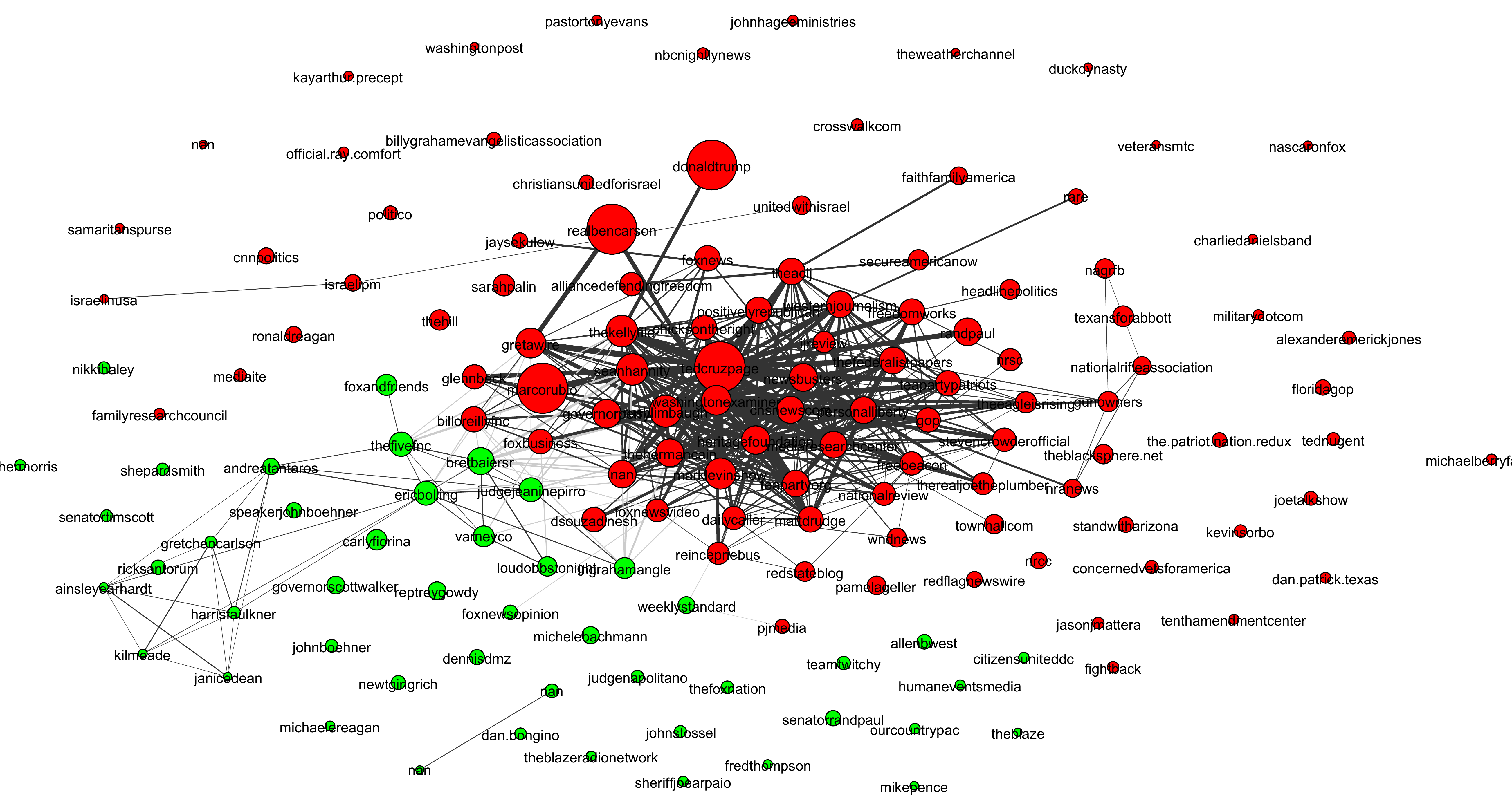}
                \caption{The US republican party. Seeds are ``Donald Trump'', ``Marco Rubio'', ``Ted Cruz'', ``Ben Carson'' and ``Jeb Bush''}
                \label{fig:fb_rep}
        \end{subfigure}
        \begin{subfigure}[b]{\hsize}
                \includegraphics[width=\textwidth,clip]{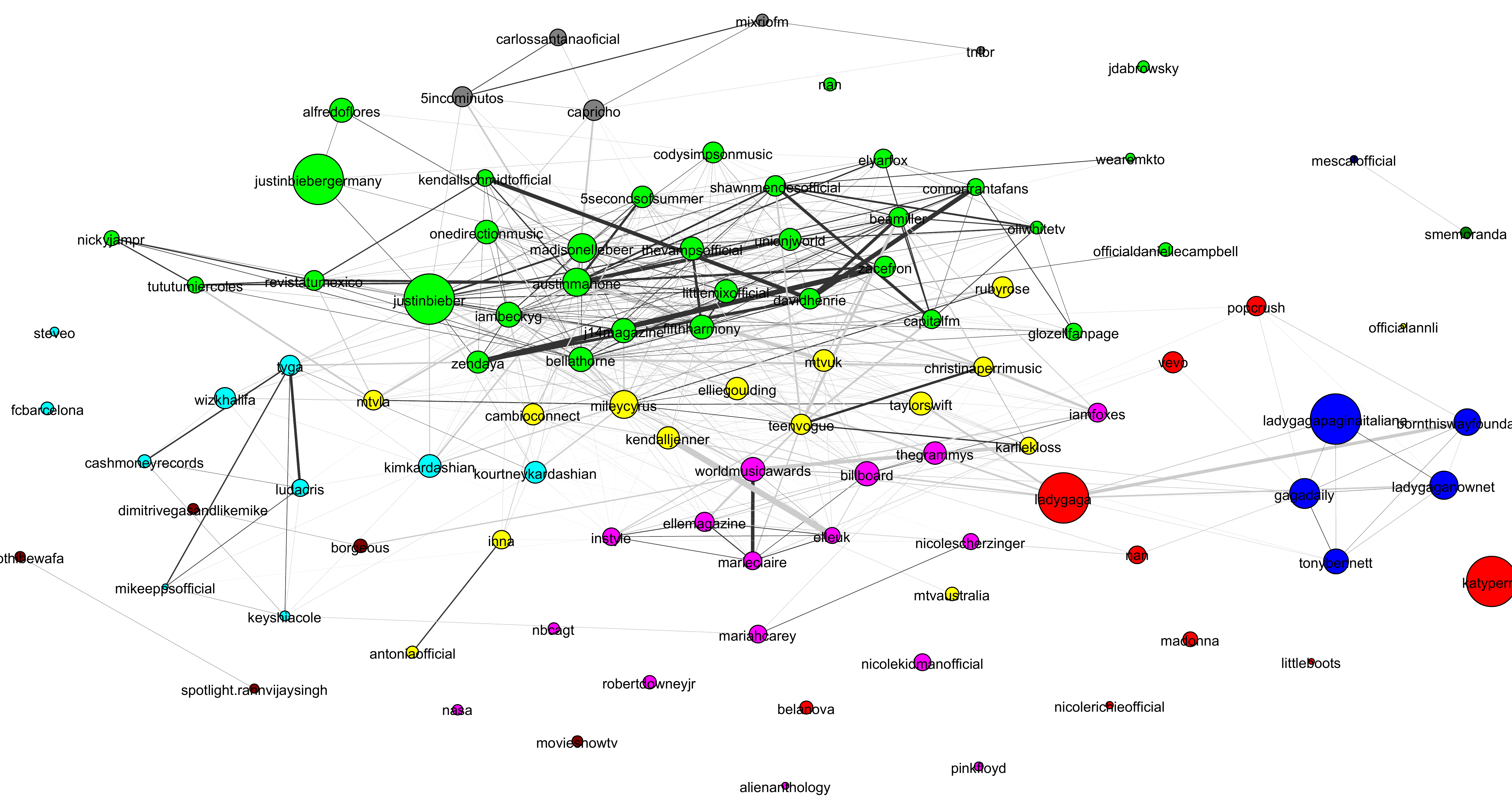}
                \caption{Global Pop Music. Seeds are ``Justin Bieber'', ``Lady Gaga'' and ``Katy Perry''}
                \label{fig:fb_pop}
        \end{subfigure}     

\caption{Visualisations of the \textbf{Facebook Pages engagement graph} around different sets of seed vertices. The vertex size depicts degree of similarity to the seeds. Edge widths show pairwise similarities. Colours are used to show different communities.}
\label{fig:fb_graphs}
%\figspace
\end{figure}
\begin{figure}[tb]
        \centering
        \begin{subfigure}[b]{\textwidth}
                \includegraphics[width=\textwidth,clip]{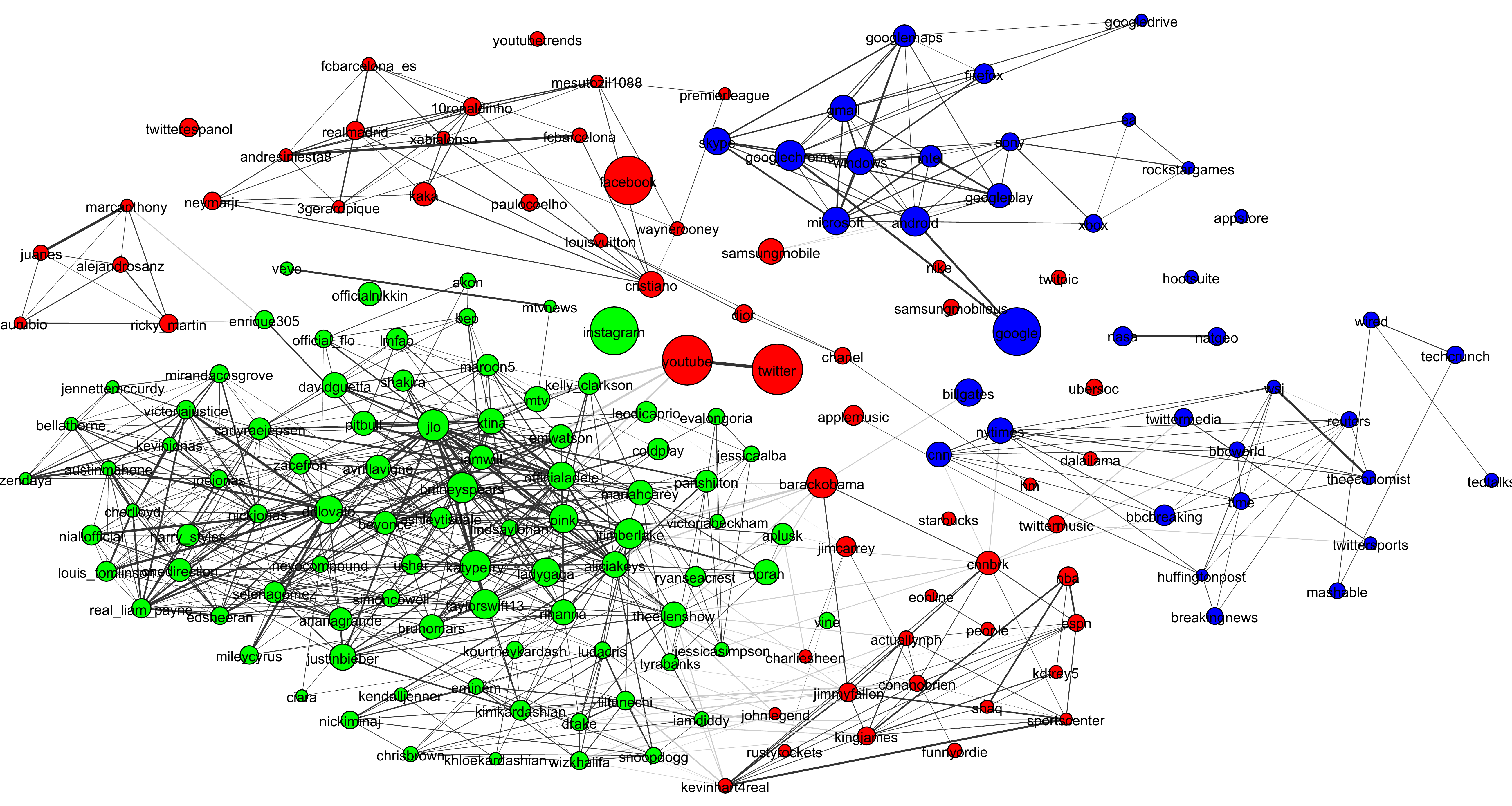}
                \caption{The major social networks. Seeds are Twitter, Facebook, YouTube and Instagram.}
                \label{fig:twit_social}
        \end{subfigure}       
        ~
        \begin{subfigure}[b]{\textwidth}
                \includegraphics[width=\textwidth,clip]{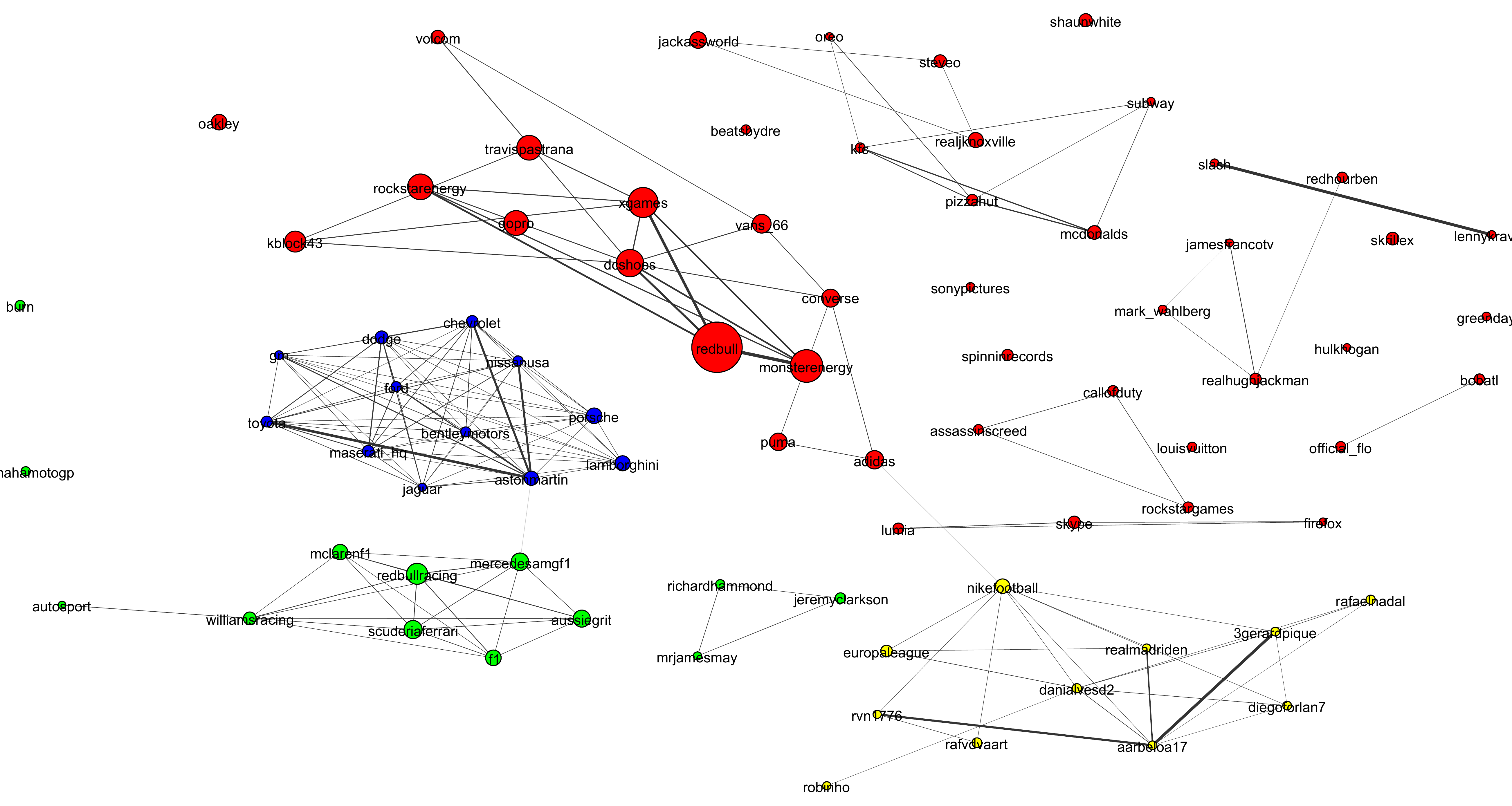}
                \caption{The many faces of RedBull}
                \label{fig:twit_redbull}
        \end{subfigure}               
\caption{Visualisations of the \textbf{Twitter Follower graph} around different sets of seed vertices. Vertex size depicts degree of similarity to the seeds. Edge widths show pairwise similarities. Colours are used to show different communities.}
\label{fig:twit_graphs1}
%\figspace
\end{figure}

\begin{figure}
\begin{subfigure}[tb]{\hsize}
                \includegraphics[width=\textwidth]{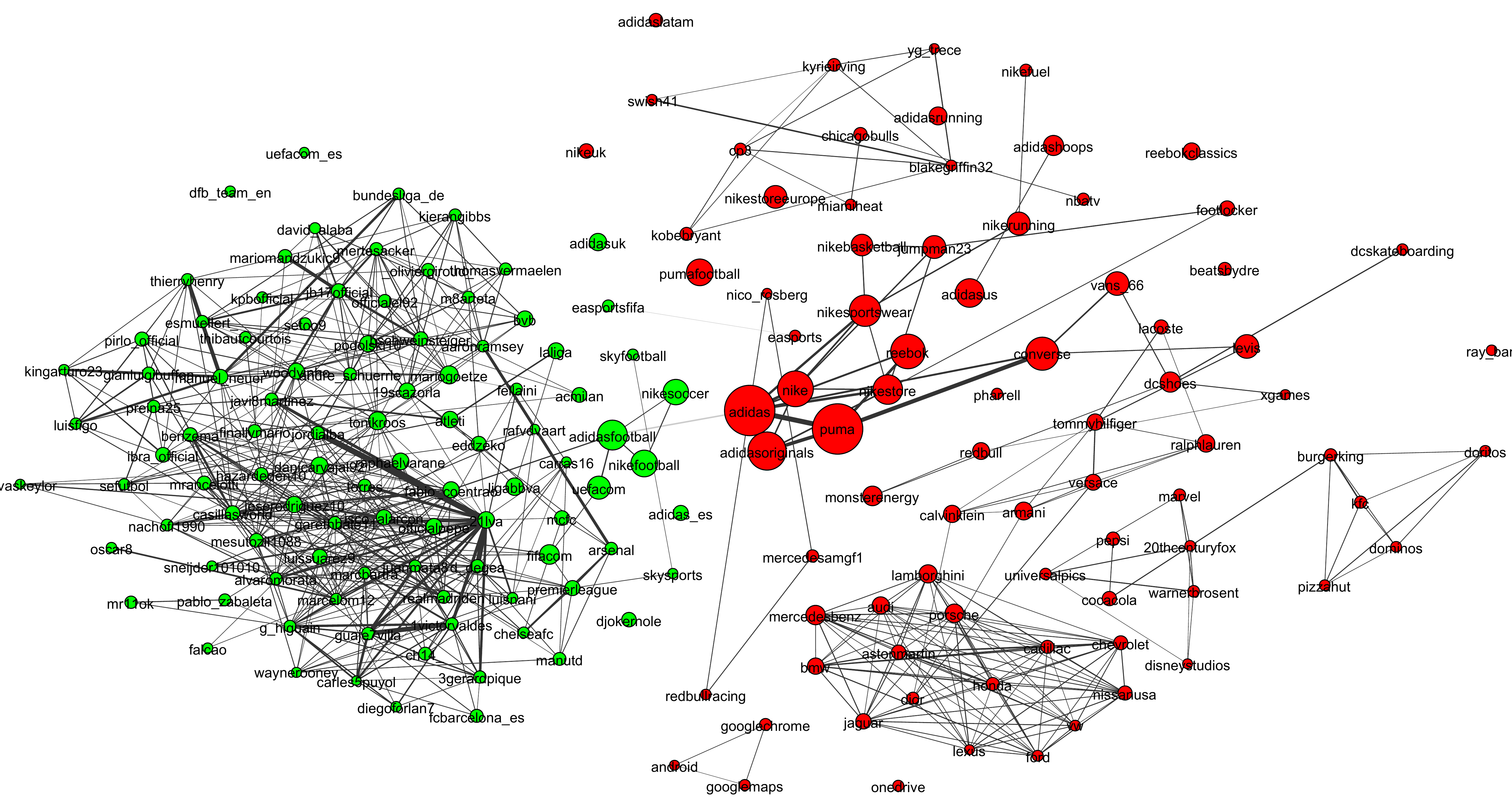}
                \caption{European sport brands. Seeds are Adidas and Puma}
                \label{fig:twit_euro_sport}
        \end{subfigure}
        \begin{subfigure}[b]{\hsize}
                \includegraphics[width=\textwidth,clip]{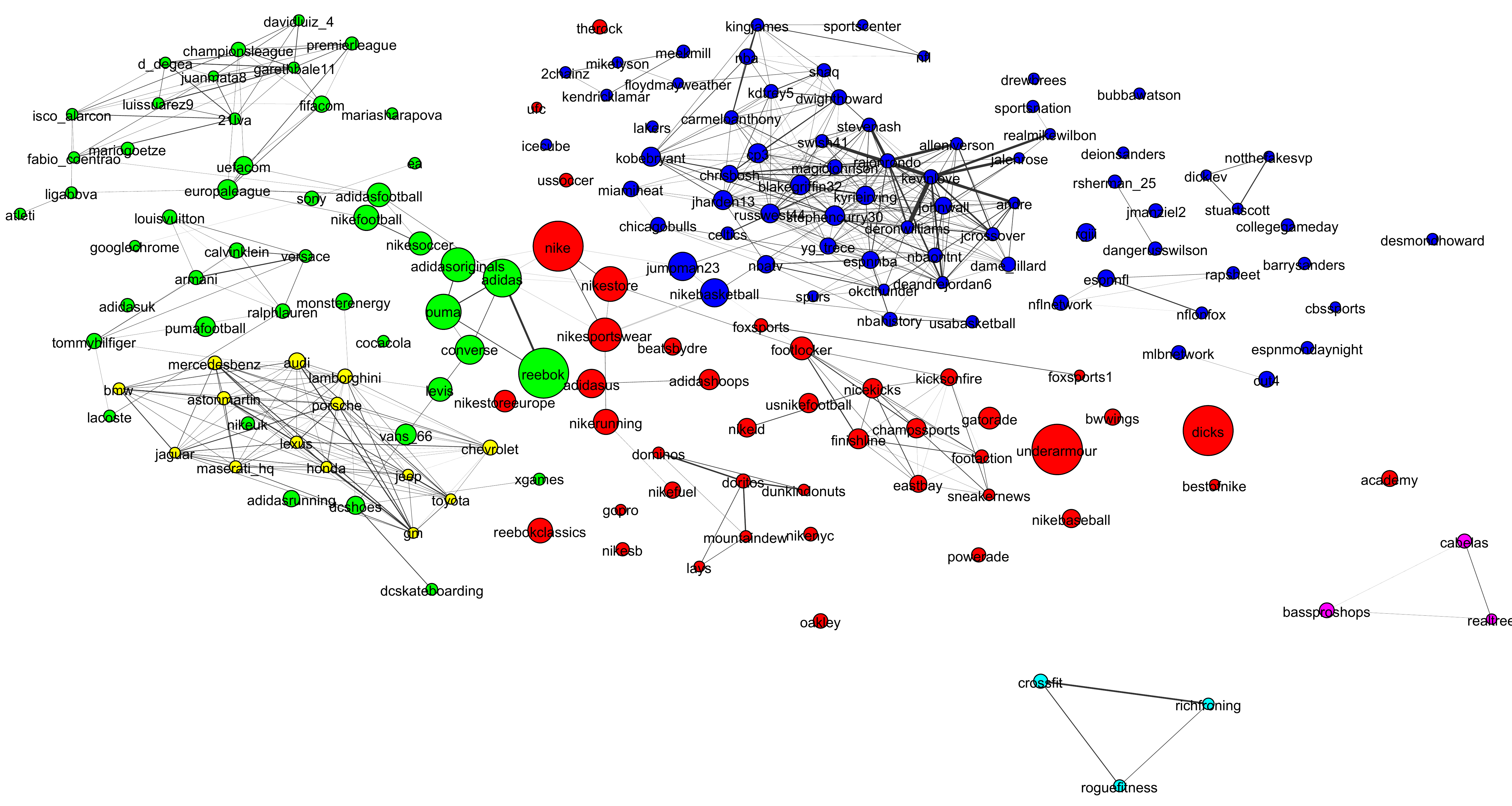}
                \caption{US sports brands. Seeds are Nike, Reebok, UnderArmour, Dicks}
                \label{fig:twit_us_sport}
        \end{subfigure}   
        \caption{Visualisations of the \textbf{Twitter Follower graph} around different sets of seed vertices. Vertex size depicts degree of similarity to the seeds. Edge widths show pairwise similarities. Colours are used to show different communities.}
\label{fig:twit_graphs}
%\figspace
\end{figure}  

% different query types
Our work uses the vast scale and richness of social media data to provide insights into a broad range of questions. Here are some illustrative examples:
\begin{itemize}
% Republican party
\item \textbf{How would you describe the factions and relationships within the US Republican party?}  This is a question with a major temporal component, and so we use the Facebook Pages graph. We feed ``Donald Trump'', ``Marco Rubio'', ``Ted Cruz'', ``Ben Carson'' and ``Jeb Bush'' as seeds into the system and wait for $\unit[0.25]{s}$ for Figure~\ref{fig:fb_rep}, which shows a densely connected core group of active politicians with Donald Trump at the periphery surrounded by a largely disconnected set of right-wing interest bodies.
% pop music
\item \textbf{Which factions exist in global pop music?} We feed the seeds ``Justin Bieber'', ``Lady Gaga'' and ``Katy Perry'' into the system loaded with the Facebook Pages engagement graph and wait for $\unit[0.25]{s}$ for Figure~\ref{fig:fb_pop}, which shows that the industry forms communities that group genders and races.
% social media
\item \textbf{How are the major social networks used?} We feed the seeds ``Twitter'', ``Facebook'', ``YouTube'' and ``Instagram'' into the system loaded with the Twitter Followers graph and wait for $\unit[0.25]{s}$ for Figure~\ref{fig:twit_social}, which shows that Google is highly associated with other technology brands while Instagram is closely related to celebrity and YouTube and Facebook are linked to sports and politics. 
% red bull
\item \textbf{How is the brand RedBull perceived by Twitter users?} We feed the single seed ``RedBull'' into the Twitter Followers graph and wait for $\unit[0.25]{s}$ for Figure~\ref{fig:twit_redbull}, which shows that RedBull has strong associations with motor racing, sports drinks, extreme sports, gaming and football.
% sports brand marketing
\item \textbf{How does sports brand marketing differ between the USA and Europe?} We use the Twitter Followers graph. ``Adidas'' and ``Puma'' are the seeds for the European brands while ``Nike'', ``Reebok'', ``UnderArmour'' and ``Dicks'' are used to represent the US sports brands. Figures~\ref{fig:twit_euro_sport} and~\ref{fig:twit_us_sport} show the enormous importance of football (soccer) to European sports brands, whereas US sports brands are associated with a broad range of sports including hunting, NFL, basketball, baseball and mixed martial arts (MMA).
% \item How is the automotive industry structured? Figures \ref{fig:twit_us_car} and \ref{fig:twit_euro_car} show that the car industry has a core of large global players, that are closely related to specialist media titles \todo[inline]{could drop this}
\end{itemize}

In all cases, the user selects a group of seeds (or a single seed) and runs the system, which returns a Figure and a table of community memberships in $\unit[0.25]{s}$. Analysts can then use the results to supplement the seed list with new entities or use the table of community members from a single WALKTRAP sub-community to explore higher resolution. 

Similar tasks are traditionally conducted with expensive and difficult to scale techniques, such as telephone polling and focus groups, which often take months to return results. In contrast, we are able to produce an automatic analysis in a fraction of a second and at minimal cost, which allows for interactive community detection in large social networks.

\section{Conclusion and Future Work}
\label{sec:conclusion}

% summary statement
We have presented a real-time system to automatically detect communities in large social networks. The system is computationally and memory efficient that it runs on a standard laptop. This work represents a technical advance leading to performance gains that are useful in practice and contains a rigorous evaluation on large social media data sets. The key contributions of this article are  to demonstrate that (1) using the Jaccard similarity of neighbourhood graphs provides a robust association metric between vertices of noisy social networks; (2) Working with minhash signatures of the neighbourhood graph dramatically reduces the space and time requirements of the system with acceptable approximation error; (3) Applying Locality Sensistive Hashing allows for approximate local community detection on very large graphs in real time with acceptable approximation error. For interactive and real-time community detection, we have demonstrated that our system finds higher quality communities in less time than the state-of-the-art algorithm operating under the constraints of a single machine. Our work has clear applications for knowledge discovery processes that currently rely upon slow and expensive manual procedures, such as focus groups and telephone polling. In general, our system offers the potential for organisations to rapidly acquire knowledge of new territories and supplies an alternative monetisation scheme for data owners.

% Other applications
In this article, we focussed on digital social networks, but our method is applicable to all large networks including bipartite networks. The user-item bipartite networks that are studied in the field of recommender systems would be particularly amenable to this treatment, where items could be compactly modelled as minhash signatures of the users who have purchased them.

% Extensions
We leave two extensions for future work. Firstly we treat the input social network as binary. In many settings, information is available to weight the edges. This might include message counts, the time since a connection was made or the type of connection. Efficient methods already exist for working with minhashes of weighted sets~\cite{Manasse2010}. Therefore, an interesting progression of this work is to incorporate data with edges that can contain counts, weights and categorisations.
% Additional compression schemes
The second extension incorporates some of the latest developments in the theory of minhashing. b-bit minhashing and Odd Sketches provide two promising approaches to extend our system to even larger graphs~\cite{Li2010, Mitzenmacher2014}. Both offer the best cost/benefit trade-off when sets are very similar (Jaccard similarity $\approx 1$) or when sets contain most of the elements in the sample space. DSN data typically contains sets that are very small relative to the sample space\footnote{Our Twitter data has a sample space containing $7 \times 10^8$ elements with a typical set containing $10^4$ elements.} and with Jaccard similarities $\ll 0.5$. The strong theoretical bounds of these algorithms do not hold in these DSN-typical settings. Therefore, a cost/benefit analysis similar to Section~\ref{sec:eval_exp1} would be required before implementing either in an extension.

% % On Jaccards
% There are many similarity metrics that can be used to compare the vertices of graph structures. We chose the Jaccard similarity because it can be efficiently encoded in minhash signatures. However it is a symmetric function and symmetric similarities may not always be desirable. When two accounts have very different size neighbourhoods and the smaller is a approximately a subset of the larger, the Jaccard is inconsistent with our intuition that the smaller account is closely related to the larger one, while the converse is not true. Another measure relating two sets is their penetration  
% %
% $$P(A,B) = \frac{A\cap B}{B} \neq P(B,A).$$
% %
% The penetration leads to a directed graph 
% %
% $$P(A,B) = \frac{J(A,B)(|A| + |B|)}{J(A,B)|B|},$$
% %
% and so the penetration can easily be calculated from the Jaccard providing the set sizes are known and the minhash apparatus can still be used.

% \todo[inline]{Could still explain that it is relatively easy to add new accounts and explain how. I also need to address that 675k users isn't that large a social network?}
% \todo[inline]{could also expand a bit on the problem of ground truth for community detection}

\section*{Acknowledgements}

This work was partly funded by a Royal Commission for the Exhibition of 1851 Industrial Fellowship. The authors would like to thank Donal Simmie for his work on optimising the minhash generation procedure.

\FloatBarrier

\appendix

\section{Efficient Minhash Generation}
\label{app:minhash_generation}

A naive Python implementation for generating minhash signatures requires six days to run on a desktop computer with 6 physical
(12 logical) Intel i7 5930k @3.5GHz cores and 64GB of RAM. This is prohibitive for nightly updates and so we highly optimised this part of the code base. The code was ported to the Python-to-C bridge project, Cython, which allowed us to add type information and compiler directives to turn off array bounds checking (there is a large amount of array
dereferencing). We stored
the input matrices in contiguous memory, removing any superfluous code
(logging, most inline error checking). The loops were then rewritten to be vectorised by the
SIMD processor. The fully optimised Cython version of the minhash
implementation runs (in parallel on 6 cores) in approximately one
hour.

\subsection{Crawling Social Networks}
\label{app:crawling}
% network bandwidths
\begin{figure}[tb]
  \centering
    \includegraphics[width=0.9\textwidth,clip]{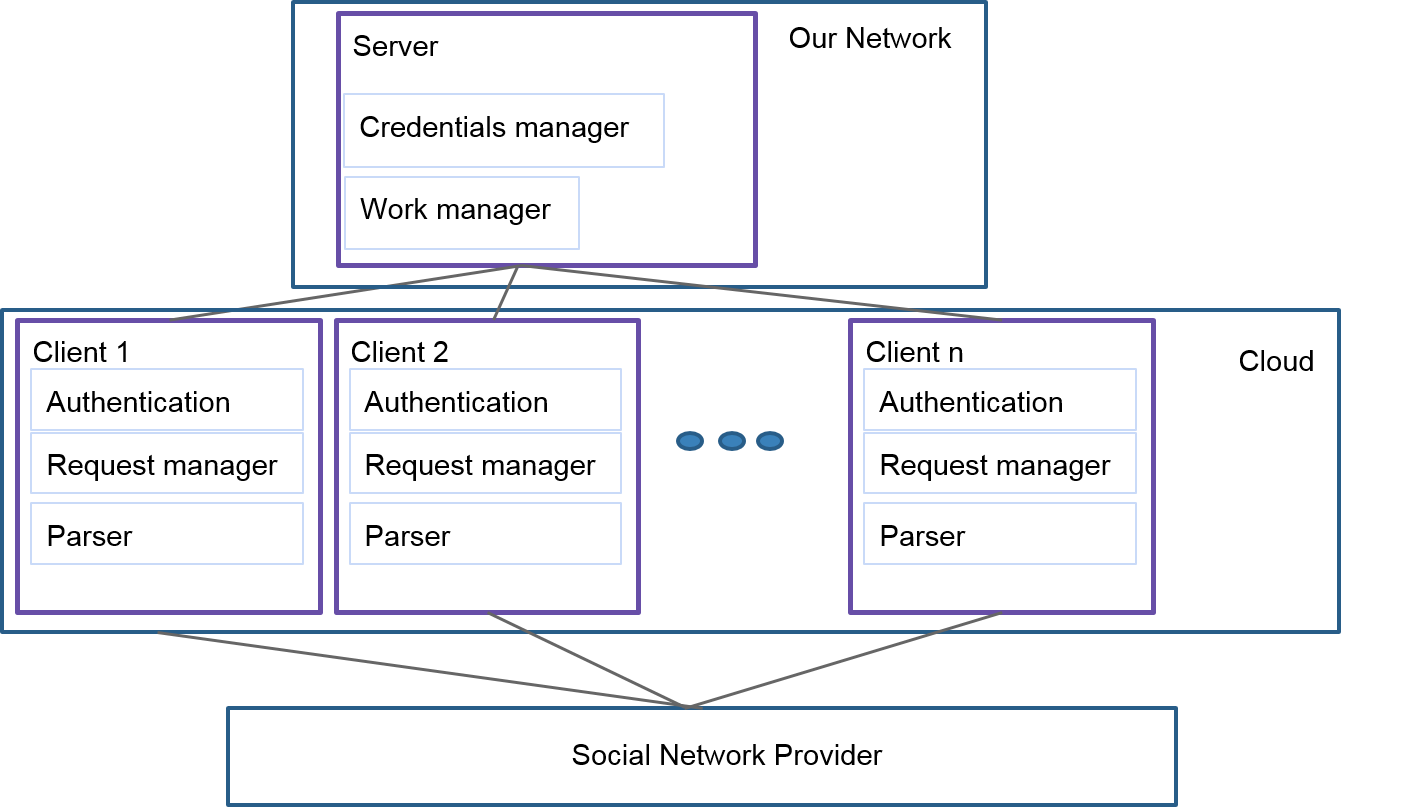}
    \caption{A distributed asynchronous system for data acquisition from digital social networks}
\label{fig:twisted}
\end{figure}

To optimize data throughput while remaining within the DSN rate limits we developed an asynchronous distributed network crawler using Python's Twisted
library~\citep{Wysocki2011}. The crawler consists of a server
responsible for token and work management and a distributed set of clients making
http requests to DSNs (see Figure~\ref{fig:twisted}).

The server contains a credential manager that holds access tokens in a queue and monitors the number of calls to each API. Once a token has been exhausted it is put to the back of the queue and locked until its refresh time. The server communicates over TCP with the clients responding to requests for work and access tokens with account ids and fresh access tokens/pause responses respectively:

% the he need for multithreading
The clients make asynchronous requests to the DSNs, handling response codes, parsing and storing data. A conventional program will \emph{block} while waiting for an http response. When the principal function of a program is to download data, blocking time amounts to the vast majority of the run time. One solution is to run the program using multiple threads. However, for this application threads carry an unnecessary overhead and induce inefficiencies as data is naively moved between caches by the operating system. 
% asynchronous programming
The asynchronous programming paradigm offers a superior alternative to explicit multi-threading. Asynchronous programming makes use of an event loop that constantly \emph{listens} for new jobs and does not block while waiting for http responses. 

% The need to cloud based distributed system
We originally implemented the system using an 80\,MB shared fibre optic connection, but our downloads caused network blackouts. Therefore, we designed a distributed system that could be partially deployed in the cloud. The final system is depicted in Figure~\ref{fig:twisted}. The access tokens and account IDs to query (work) live on a server on our local network. Clients are deployed to Amazon's elastic cloud from where all interactions with DSN servers occurs. We configured the clients to establish persistent connections to the API endpoints. Every time a connection is opened, a handshake must occur. For secure systems (communicating over https), the handshake is particularly onerous, requiring the exchange of security certificates.

\section{Community Axioms}
\label{app:ground_truth}
% The need for community goodness metrics
Homophily only applies to attributes that ease information flow between individuals. Some attributes have no effect or are divisive (for instance right-handed people feel no sense of kinship) and so should not be associated with communities. Additionally attributes may be at the wrong scale to describe structural sub-units (sports person rather than footballer). 
A community evaluation based on ground-truth that were not communities would have no value We apply community goodness functions to each prospective `tag community' to identify to what extent these functional traits generate structurally observable communities. 

% Community goodness metrics explained
For each functional group we generate the fully connected weighted graph by calculating all pairwise Jaccard similarities and evaluate the six metrics in Table~\ref{tab:communities}. They are adapted from \cite{Yang2013} to apply to weighted graphs. As we work with a derived graph where each edge weight is the Jaccard similarity of neighbourhoods, the metrics have slightly different interpretations. Two entities in the derived graph are strongly connected if they have very similar neighbourhoods. Since for the large entities the neighbourhood normally has at least an order of magnitude more incoming than outgoing edges, entities are closely related if they have a similar fan/follower base. We define $S$ to be the set of vertices comprising a community and a weighted graph $G(V,E,W)$ where $W$ is a weight matrix. The internal edge weight of $S$ is
\begin{align}
m_s = \sum_{\{i, j \in S \}}W_{i,j}
\end{align}
and the weight of edges that cross the boundary of $S$ is
\begin{align}
c_s = \sum_{\{i \in S, j \notin S \}}W_{i,j}\,.
\end{align}
The community goodness metrics are then given by:
% list with performance measures
\begin{itemize}
\item \textbf{Clustering} exploits the idea that people in communities are likely to introduce their friends to each other. It measures how cliquey a community is. In our paradigm clustering is high if followers of a community recommend things for other followers of  the community to like or follow. If a vertex has $k_n$ neighbours then $\frac{1}{2}k_n(k_n - 1)$ possible connections can exist between the neighbours. The clustering of a node gives the fraction of its neighbours' possible connections that exist. The clustering of a community is the average clustering of each vertex. Clustering is sometimes referred to as the proportion of triadic closures in the network. The weighted clustering of the $i^{th}$ vertex is given by 
\begin{align}
Cl_i(S) = \frac{W_s^3}{(W_sW_{\max}W_s)_{ii}}
\end{align}
where $W_{\max}$ is a matrix where each entry is the maximum weight found within $S$ \citep{Holme2007}.
\item \textbf{Conductance} is an electrical analogy for how easily information entering the community can leave it. In our context, it is defined as 
\begin{align} \label{eq:con}
Con(S) = \frac{c_s}{2m_s + c_s}\,,
\end{align}
i.e., it is the ratio of the community's external to total edge weight. A low value means that the the community is well separated from the rest of the network. In our paradigm, conductance is low if the followers of the community are not interested in other communities.

\item \textbf{Cohesiveness} measures how easily the community can be split into disconnected components. A good community is not easily broken up. The cohesiveness is given by the minimum conductance of any sub-community. A low value indicates a bad community as there is at least one well-separated sub-community. In our paradigm, low cohesiveness corresponds to members of the community having distinct, non-overlapping follower groups.
\begin{align}
Coh(S) = \min_{\{S' \subset S\}}Con(S')
\end{align}
Iterating through all subsets $S'$ of $S$ is impractical. Thus, we sample $S'$ by randomly selecting 10 subsets of starting vertices, running PPR community detection for each and taking a sweep through the PageRank vector to find the minimum conductance cut.
\item \textbf{Conductance Ratio (CR)} is the ratio of conductance to cohesiveness and defined as
\begin{align}
CR(S) = \frac{Con(S)}{Coh(S)}\,.
\end{align}
A large number indicates that the community could be broken up into structural sub-units. 
\item \textbf{Density} is given by the ratio of the community's total internal edge weight to the maximum possible if every edge was present with weight one: 
\begin{align}
Den = \frac{2m_s}{n_s(n_s - 1)}
\end{align}
A high number indicates a highly interconnected community. In our paradigm, this corresponds to a community with a well-defined follower base that is interested in most community members.
\item \textbf{Separability} measures how well the community is separated from the rest of the network. It is the ratio of internal to external edges and so is closely related to conductance:
\begin{align} \label{eq:sep}
Sep(S) = \frac{m_s}{c_s}
\end{align}
In our paradigm, a high value indicates that followers of the community are not interested in much else.
\end{itemize}

% \section{Penetration}
% \label{app:penetration}

% We derive the relationship between the Jaccard similarity and the penetration.

% $$J(A,B) = \frac{|A \cap B|}{|A \cup B|}$$
% $$P(A,B) = \frac{|A \cap B|}{|B|}$$
% dropping the function arguments for brevity
% $$J|A \cup B| = P|B|$$
% decomposing the union gives
% $$J\frac{|A| + |B| - |A \cap B|}{|B|} = P$$
% using the definition of the penetration
% $$J\frac{|A| + |B|}{|B|} - JP = P$$
% re-arranging gives
% $$P = \frac{J(|A| + |B|)}{|B|(1+J)}$$

\vskip 0.2in
% \bibliography{Mendeley.bib}

\begin{thebibliography}{50}
\providecommand{\natexlab}[1]{#1}
\providecommand{\url}[1]{\texttt{#1}}
\expandafter\ifx\csname urlstyle\endcsname\relax
  \providecommand{\doi}[1]{doi: #1}\else
  \providecommand{\doi}{doi: \begingroup \urlstyle{rm}\Url}\fi

\bibitem[Adler and Mitzenmacher(2001)]{Adler2001}
M~Adler and M~Mitzenmacher.
\newblock {Towards compressing web graphs}.
\newblock \emph{Proceedings of the Data Compression Conference}, pages 203--212, 2001.

\bibitem[Andoni et~al.(2014)Andoni, Indyk, Nguyen, and Razenshteyn]{Andoni2014}
A Andoni, P Indyk, HL. Nguyen and I Razenshteyn.
\newblock {Beyond locality-sensitive hashing}.
\newblock \emph{Proceedings of the 25th Annual ACM-SIAM Symposium on Discrete Algorithms}, pages 1018--1028, 2014.
\newblock ISSN 9781611973389.
\newblock \doi{10.1137/1.9781611973402.76}.
\newblock URL \url{http://arxiv.org/abs/1306.1547}.

\bibitem[Bahmani et~al.(2011)Bahmani, Chakrabarti, and Xin]{Bahmani2011}
B~Bahmani, K~Chakrabarti, and D~Xin.
\newblock {Fast personalized pagerank on mapreduce}.
\newblock \emph{Proceedings of the ACM SIGMOD International Conference on Management of Data}, pages 973-984, 2011.
\newblock URL \url{http://dl.acm.org/citation.cfm?id=1989425}.

\bibitem[Bastian et~al.(2009)Bastian, Heymann, and Jacomy]{Bastian2009a}
M Bastian, S Heymann, and M Jacomy.
\newblock {Gephi: an open source software for exploring and manipulating networks}.
\newblock \emph{Proceedings of the 3rd International AAAI Conference on Weblogs and Social Media}, pages 361--362, 2009.
\newblock ISSN 14753898.
\newblock \doi{10.1136/qshc.2004.010033}.

\bibitem[Blondel et~al.(2008)Blondel, Guillaume, Lambiotte, and
  Lefebvre]{Blondel2008a}
VD Blondel, JL Guillaume, R Lambiotte, and E Lefebvre.
\newblock {Fast unfolding of community hierarchies in large networks}.
\newblock \emph{Journal of Statistical Mechanics: Theory and Experiment}, page 10008, 2008.
\newblock \doi{10.1088/1742-5468/2008/10/P10008}.

\bibitem[Boldi and Vigna(2004)]{Boldi2004}
P~Boldi and S~Vigna.
\newblock {The webgraph framework I: compression techniques}.
\newblock \emph{Proceedings of the 13th International Conference on the World Wide Web}, ACM, 2004.
\newblock URL \url{http://dl.acm.org/citation.cfm?id=988752}.

\bibitem[Broder et~al.(2000)Broder, Charikar, Frieze, and
  Mitzenmacher]{broder2000min}
AZ Broder, M Charikar, AM Frieze, and M Mitzenmacher.
\newblock {Min-wise independent permutations}.
\newblock \emph{Journal of Computer and System Sciences}, 60\penalty0
  (3):\penalty0 630--659, 2000.

\bibitem[Broder(1997)]{broder1997}
AZ Broder.
\newblock {On the resemblance and containment of documents}.
\newblock \emph{IEEE Proceedings of Compression and Complexity of Sequences}, pages 21--29, 1997.
\newblock ISSN 0818681322.
\newblock \doi{10.1109/SEQUEN.1997.666900}.

\bibitem[Bullmore and Sporns(2009)]{Bullmore2009}
E~Bullmore and O~Sporns.
\newblock {Complex brain networks: graph theoretical analysis of structural and functional systems}.
\newblock \emph{Nature Reviews Neuroscience}, 10.3, pages 186--198, 2009.
\newblock URL \url{http://www.nature.com/nrn/journal/v10/n3/abs/nrn2575.html}.

\bibitem[Charikar(2002)]{Charikar2002}
MoS Charikar.
\newblock {Similarity estimation techniques from rounding algorithms}.
\newblock In \emph{Proceedings of the 34th Annual ACM Symposium on the Theory of Computing (STOC)}, pages 380--388, 2002.
\newblock ISBN 1581134959.
\newblock \doi{10.1145/509907.509965}.
\newblock URL \url{http://portal.acm.org/citation.cfm?doid=509907.509965}.

\bibitem[Chen et~al.(2010)Chen, Weng, He, and Yang]{Chen2010}
R~Chen, X~Weng, B~He, and M~Yang.
\newblock {Large graph processing in the cloud}.
\newblock \emph{Proceedings of the 2010 ACM SIGMOD International Conference on Management of Data}, 2010.
\newblock URL \url{http://dl.acm.org/citation.cfm?id=1807297}.

\bibitem[Chierichetti et~al.(2009)Chierichetti, Kumar, and Lattanzi]{Chierichetti2009}
F Chierichetti, R Kumar, and S Lattanzi.
\newblock {On compressing social networks}.
\newblock \emph{Proceedings of the 15th ACM SIGKDD International Conference on Knowledge Discovery and Data Mining}, 2009.
\newblock URL \url{http://www.se.cuhk.edu.hk/~hcheng/seg5010/slides/compress_kdd2009.pdf}.

\bibitem[Clauset(2005)]{Clauset2005}
A~Clauset.
\newblock {Finding local community structure in networks}.
\newblock \emph{Physical Review E}, 72.2, pages 026132 2005.
\newblock URL
  \url{http://journals.aps.org/pre/abstract/10.1103/PhysRevE.72.026132}.

\bibitem[Flake et~al.(2000)Flake, Lawrence, and Giles]{Flake2000}
GW Flake, S Lawrence, and CL Giles.
\newblock {Efficient identification of Web communities}.
\newblock \emph{Proceedings of the 6th ACM SIGKDD International Conference on Knowledge Discovery and Data Mining}, pages 150--160, 2000.
\newblock \doi{10.1145/347090.347121}.
\newblock URL \url{http://portal.acm.org/citation.cfm?doid=347090.347121}.

\bibitem[Fortunato and Barthelemy(2007)]{Fortunato2007}
S~Fortunato and M~Barthelemy.
\newblock {Resolution limit in community detection}.
\newblock \emph{Proceedings of the National Academy of Sciences}, 104.1, pages 36-41  2007.
\newblock URL \url{http://www.pnas.org/content/104/1/36.short}.

\bibitem[Fortunato(2010)]{fortunato2010community}
S Fortunato.
\newblock {Community detection in graphs}.
\newblock \emph{Physics Reports}, 486\penalty0 (3):\penalty0 75--174, 2010.

\bibitem[Gleich and Seshadhri(2012)]{Gleich2012}
DF Gleich and C Seshadhri.
\newblock{Vertex neighborhoods, low conductance cuts, and good seeds for local community methods}.
\newblock In {\em Proceedings of the 18th ACM SIGKDD international conference on Knowledge discovery and data mining}, pages 597-605, 2012.

\bibitem[Gupta et~al.(2013)Gupta, Goel, Lin, Sharma, Wang, and
  Zadeh]{Gupta2013}
P Gupta, A Goel, J Lin, A Sharma, D Wang, and R Zadeh.
\newblock {WTF: The who to follow service at Twitter}.
\newblock \emph{Proceedings of the 22nd International Conference on the World Wide Web}, pages 505--514, 2013.
\newblock URL \url{http://dl.acm.org/citation.cfm?id=2488388.2488433}.

\bibitem[Haveliwala et~al.(2000)Haveliwala, Gionis, and Indyk]{Haveliwala2000}
T~Haveliwala, A~Gionis, and P~Indyk.
\newblock {Scalable techniques for clustering the Web}.
\newblock \emph{The 3rd International Workshop on the Web and Databases}, 2000.
\newblock URL \url{http://ilpubs.stanford.edu:8090/445/}.

\bibitem[Haveliwala(2002)]{Haveliwala2002}
T Haveliwala.
\newblock {Topic-sensitive pagerank}.
\newblock \emph{Proceedings of the 11th International Conference on the World Wide Web}, pages 517--526, 2002.
\newblock ISSN 08963207.
\newblock \doi{10.1145/511446.511513}.
\newblock URL \url{http://doi.acm.org/10.1145/511446.511513}.

\bibitem[Hewett et~al.(1992)Hewett, Baecker, Card, and Carey]{Hewett1992}
TT~Hewett, R~Baecker, S~Card, and T~Carey.
\newblock \emph{{ACM SIGCHI curricula for human-computer interaction}}.
\newblock ACM, 1992.
\newblock URL \url{http://dl.acm.org/citation.cfm?id=2594128}.

\bibitem[Holme et~al.(2007)Holme, Min~Park, Kim, and Edling]{Holme2007}
P Holme, SM Park, BJ Kim, and CR Edling.
\newblock {Korean university life in a network perspective: Dynamics of a large
  affiliation network}.
\newblock \emph{Physica A: Statistical Mechanics and its Applications},
  373:\penalty0 821--830, 2007.
\newblock ISSN 03784371.
\newblock \doi{10.1016/j.physa.2006.04.066}.

\bibitem[Indyk and Motwani(1998)]{Indyk1998}
P Indyk and R Motwani.
\newblock {Approximate nearest neighbors: towards removing the curse of dimensionality}.
\newblock \emph{Proceedings of the 30th Annual ACM Symposium on the Theory of omputing}, 126:\penalty0 604--613, 1998.
\newblock ISSN 00123692.
\newblock \doi{10.4086/toc.2012.v008a014}.
\newblock URL \url{http://dl.acm.org/citation.cfm?id=276876}.

\bibitem[Kloumann and Kleinberg(2014)]{Kloumann2014}
I~Kloumann and J~Kleinberg.
\newblock {Community membership identification from small seed sets}.
\newblock In \emph{Proceedings of the 20th ACM SIGKDD International Conference on Knowledge Discovery and Data Mining}, pages 1366-1375, ACM, 2014.
\newblock URL \url{http://dl.acm.org/citation.cfm?id=2623621}.

\bibitem[Kyrola et~al.(2012)Kyrola, Blelloch, and Guestrin]{Kyrola2012}
A~Kyrola, G~Blelloch, and C~Guestrin.
\newblock {Graphchi: Large-scale graph computation on just a PC}.
\newblock \emph{Symposium on Operating Systems Design and Implementation (OSDI)}, 2012.
\newblock URL
  \url{https://www.usenix.org/conference/osdi12/technical-sessions/presentation/kyrola}.

\bibitem[Lancichinetti et~al.(2009)Lancichinetti, Fortunato, and
  Kert{\'{e}}sz]{Lancichinetti2009}
A~Lancichinetti, S~Fortunato, and J~Kert{\'{e}}sz.
\newblock {Detecting the overlapping and hierarchical community structure in complex networks}.
\newblock \emph{New Journal of Physics}, 11.3\penalty0 (033015), 2009.
\newblock URL \url{http://iopscience.iop.org/1367-2630/11/3/033015}.

\bibitem[Leskovec et~al.(2010)Leskovec, Lang, and Mahoney]{Leskovec2010}
J Leskovec, KJ Lang, and MW Mahoney.
\newblock {Empirical comparison of algorithms for network community detection}.
\newblock \emph{Proceedings of the 19th International Conference on the World Wide Web}, pages 631--640, 4 2010.
\newblock URL \url{http://arxiv.org/abs/1004.3539}.

\bibitem[Li et~al.(2015)Li, He, Bindel and Hopcroft]{Li2015a}
Y Li, K He, D Bindel and JE Hopcroft.
\newblock{Uncovering the small community structure in large networks: A local spectral approach}.
\newblock In {\em Proceedings of the 24th International Conference on the World Wide Web}, pages 658-668, 2015.

\bibitem[Li and K{\"{o}}nig(2009)]{Li2010}
P~Li and C~K{\"{o}}nig.
\newblock {b-Bit minwise hashing}.
\newblock \emph{Proceedings of the 19th International Conference on the World Wide Web}, ACM, 2009.
\newblock URL \url{http://dl.acm.org/citation.cfm?id=1772759}.

\bibitem[Low et~al.(2014)Low, Gonzalez, and Kyrola]{Low2014Graphlab:Learning}
Y~Low, JE~Gonzalez, and A~Kyrola.
\newblock {Graphlab: A new framework for parallel machine learning}.
\newblock \emph{arXiv preprint arXiv}, 2014.
\newblock URL \url{http://arxiv.org/abs/1408.2041}.

\bibitem[Lusseau(2003)]{Lusseau2003}
D~Lusseau.
\newblock {The emergent properties of a dolphin social network}.
\newblock \emph{Proceedings of the Royal Society of London B: Biological Sciences}, 270(Suppl2), pages S186-S188, 2003.

\bibitem[Malewicz et~al.(2010)Malewicz, Austern, Bik, Dehnert, Horn, Leiser,
  and Czajkowski]{Malewicz2010}
G Malewicz, MH Austern, AJC Bik, JC Dehnert, I Horn,
  N Leiser, and G Czajkowski.
\newblock {Pregel: a system for large-scale graph processing}.
\newblock In \emph{Proceedings of the International Conference on
  Management of data (SIGMOD)}, pages 135--146, 2010.
\newblock ISBN 9781450300322.
\newblock \doi{10.1145/1807167.1807184}.
\newblock URL \url{http://dl.acm.org/citation.cfm?id=1807167.1807184}.

\bibitem[Manasse and Mcsherry(2008)]{Manasse2010}
M Manasse and F Mcsherry.
\newblock {Consistent weighted sampling}.
\newblock Technical report, 2010.
\newblock URL \url{http://research.microsoft.com/pubs/132309/ConsistentWeightedSampling2.pdf}.

\bibitem[McPherson et~al.(2001)McPherson, Smith-Lovin, and Cook]{McPherson2001}
M McPherson, L Smith-Lovin, and JM Cook.
\newblock {Birds of a feather: homophily in social networks}.
\newblock \emph{Annual Review of Sociology}, 27\penalty0 (1):\penalty0
  415--444, 8 2001.
\newblock ISSN 0360-0572.
\newblock \doi{10.1146/annurev.soc.27.1.415}.
\newblock URL
\url{http://www.annualreviews.org/doi/abs/10.1146/annurev.soc.27.1.415?journalCode=soc}.

\bibitem[Mitzenmacher et~al.(2014)Mitzenmacher, Pagh, and
  Pham]{Mitzenmacher2014}
M~Mitzenmacher, R~Pagh, and N~Pham.
\newblock {Efficient estimation for high similarities using odd sketches}.
\newblock \emph{Proceedings of the 23rd International Conference on the World Wide Web}, pages 109--118, 2014.
\newblock URL \url{http://dl.acm.org/citation.cfm?id=2568017}.

\bibitem[Motwani et~al.(2005)Motwani, Naor, and Panigrahy]{Motwani2005}
R Motwani, A Naor, and R Panigrahy.
\newblock {Lower bounds on locality sensitive hashing}.
\newblock \emph{SIAM Journal on Discrete Mathematics}, pages 930--935, 2005.
\newblock ISSN 0895-4801.
\newblock \doi{10.1137/050646858}.
\newblock URL \url{http://arxiv.org/abs/cs/0510088}.

\bibitem[Newman(2006)]{Newman2006}
MEJ Newman.
\newblock {Finding community structure in networks using the eigenvectors of matrices}.
\newblock \emph{Physical Review E - Statistical, Nonlinear, and Soft Matter
  Physics}, 74\penalty0 (3):\penalty0 1--19, 2006.
\newblock ISSN 15393755.
\newblock \doi{10.1103/PhysRevE.74.036104}.

\bibitem[Newman(2003)]{Newman2003}
MEJ Newman.
\newblock {The structure and function of complex networks}.
\newblock \emph{SIAM review}, 45.2, pages 167-256, 2003.
\newblock URL \url{http://epubs.siam.org/doi/abs/10.1137/S003614450342480}.

\bibitem[Newman(2004{\natexlab{a}})]{Newman2004b}
MEJ Newman.
\newblock {Fast algorithm for detecting community structure in networks}.
\newblock \emph{Physical review E}, 69(6)\penalty0 (066133),
  2004.
\newblock URL  \url{http://journals.aps.org/pre/abstract/10.1103/PhysRevE.69.066133}.

\bibitem[Newman(2004{\natexlab{b}})]{Newman2004c}
MEJ Newman.
\newblock {Detecting community structure in networks}.
\newblock \emph{The European Physical Journal B-Condensed Matter and Complex Systems}, 38.2, pages 321-330, 2004.

\bibitem[O'Donnell et~al.(2009)O'Donnell, Wu, and Zhou]{ODonnell2009}
R O'Donnell, Y~Wu, and Y~Zhou.
\newblock {Optimal lower bounds for locality sensitive hashing (except when q is tiny)}.
\newblock \emph{ACM Transactions on Computation Theory (TOCT)}, 6\penalty0
  (1):\penalty0 9, 2009.
\newblock ISSN 19423462.
\newblock \doi{10.1145/2578221}.
\newblock URL \url{http://arxiv.org/abs/0912.0250}.

\bibitem[Pace(2012)]{Pace2012}
MF Pace.
\newblock {BSP Vs MapReduce}.
\newblock \emph{Procedia Computer Science 9}, pages 246-255, 2012.
\newblock \doi{10.1016/j.procs.2012.04.026}.
\newblock URL \url{http://arxiv.org/abs/1203.2081}.

\bibitem[Page et~al.(1998)Page, Brin, Motwani, and Winograd]{page1998a}
L~Page, S~Brin, R~Motwani, and T~Winograd.
\newblock {The 
citation ranking: bringing order to the 
Web}, 1998.
\newblock URL \url{http://ilpubs.stanford.edu:8090/422/1/1999-66.pdf}.

\bibitem[Philbin(2008)]{Philbin2008}
J Philbin.
\newblock {Near duplicate image detection : min-Hash and tf-idf weighting}.
\newblock \emph{Proceedings of the British Machine Vision Conference},
  3:\penalty0 4, 2008.
\newblock ISSN 10959203.
\newblock \doi{10.5244/C.22.50}.

\bibitem[Pons and Latapy(2005)]{Pons2005}
P Pons and M Latapy.
\newblock {Computing communities in large networks using random walks}.
\newblock \emph{Computer and Information Sciences-ISCIS}, pages 284--293, 2005.
\newblock URL \url{http://arxiv.org/abs/physics/0512106}.

\bibitem[Raghavan et~al.(2007)Raghavan, Albert, and Kumara]{Raghavan2007}
UN Raghavan, R Albert, and S Kumara.
\newblock {Near linear time algorithm to detect community structures in large-scale networks}.
\newblock \emph{Physical Review E}, 76\penalty0 (3):\penalty0 036106, 9 2007.
\newblock ISSN 1539-3755.
\newblock \doi{10.1103/PhysRevE.76.036106}.
\newblock URL \url{http://arxiv.org/abs/0709.2938}.

\bibitem[Reichardt and Bornholdt(2006)]{Reichardt2006}
J Reichardt and S Bornholdt.
\newblock {Statistical mechanics of community detection}.
\newblock \emph{Physical Review E - Statistical, Nonlinear, and Soft Matter
  Physics}, 74, 2006.
\newblock ISSN 15393755.
\newblock \doi{10.1103/PhysRevE.74.016110}.

\bibitem[Rosvall and Bergstrom(2008)]{Rosvall2008}
M~Rosvall and CT~Bergstrom.
\newblock {Maps of random walks on complex networks reveal community
  structure}.
\newblock \emph{Proceedings of the National Academy of Sciences},
  105.4:\penalty0 1118--1123, 2008.
\newblock URL \url{http://www.pnas.org/content/105/4/1118.short}.

\bibitem[Sampson(1969)]{sampson1969crisis}
SF~Sampson.
\newblock \emph{{Crisis in a cloister}}.
\newblock PhD thesis, Cornell University, Ithaca, 1969.

\bibitem[Schaeffer(2007)]{Schaeffer2007}
SE Schaeffer.
\newblock{Graph clustering}.
\newblock In {\em Computer Science Review}, 1(1):27–64, 2007.

\bibitem[Wysocki and Zabierowski(2011)]{Wysocki2011}
R Wysocki and W Zabierowski.
\newblock {Twisted framework on game server example}.
\newblock \emph{Proceedings of the 11th International Conference on the Experience of Designing and Application of CAD Systems in Microelectronics (CADSM)}, pages 361--363, 2011.

\bibitem[Yang and Leskovec(2012)]{Yang2012}
J~Yang and J~Leskovec.
\newblock {Defining and evaluating network communities based on ground-truth}.
\newblock \emph{Knowledge and Information Systems}, pages 181--213, 2015.
\newblock URL \url{http://dl.acm.org/citation.cfm?id=2350193}.

\bibitem[Yang and Leskovec(2013)]{Yang2013}
J~Yang and J~Leskovec.
\newblock {Overlapping community detection at scale: a nonnegative matrix
  factorization approach}.
\newblock In \emph{Proceedings of the 6th ACM International Conference on Web
  Search and Data Mining}, pages 587--596. ACM, 2013.
\newblock ISBN 9781450318693.
\newblock \doi{10.1145/2433396.2433471}.
\newblock URL \url{http://dl.acm.org/citation.cfm?id=2433396.2433471}.

\bibitem[Zachary(1977)]{Zachary1977}
W Zachary.
\newblock {An information flow model for conflict and fission in small groups}.
\newblock \emph{Journal of Anthropological Research}, pages 452--473, 1977.
\newblock URL \url{http://www.jstor.org/stable/3629752}.

\end{thebibliography}

\end{document}